\newcommand{\decomposition}{expander decomposition}
\newcommand{\lowdiamdecomp}{$\mathtt{LowDiamDecomposition}$}
\newcommand{\mpx}{$\mathtt{Clustering}$}
\newcommand{\nibble}{$\mathtt{Nibble}$}
\newcommand{\distnibble}{$\mathtt{Approximate Nibble}$}
\newcommand{\randnibble}{$\mathtt{Random Nibble}$}
\newcommand{\parallelnibble}{$\mathtt{Parallel Nibble}$}
\newcommand{\balancedcut}{$\mathtt{Partition}$}
\newcommand{\tlast}{t_{0}}
\newcommand{\bal}{\operatorname{bal}}
\newcommand{\fff}{f}
\newcommand{\ppp}{\tilde{p}}
\newcommand{\rhoo}{\tilde{\rho}}
\newcommand{\jmax}{j_{\operatorname{max}}}
\newcommand{\vol}{\operatorname{Vol}}
\def\tpi{\tilde{\pi}_t}
\newtheorem{lemma}{Lemma}
\newtheorem{definition}{Definition}
\newcommand{\ignore}[1]{}
\newcommand{\Expect}{\mathbf{E}}
\newcommand{\Prob}{\mathbf{Pr}}
\newcommand{\bydef}{\stackrel{\operatorname{def}}{=}}
\newcommand{\ceil}[1]{\left\lceil #1 \right\rceil}
\newcommand{\floor}[1]{\lfloor #1 \rfloor}
\newcommand{\poly}{{\operatorname{poly}}}
\newcommand{\dist}{\operatorname{dist}}
\newcommand{\ID}{\operatorname{ID}}
\newcommand{\LOCAL}{\mathsf{LOCAL}}
\newcommand{\CONGEST}{\mathsf{CONGEST}}
\newcommand{\CLIQUE}{\mathsf{CONGESTED}\text{-}\mathsf{CLIQUE}}
\newcommand{\mix}{\ensuremath{\tau_{\operatorname{mix}}}}
\newcommand{\Erdos}{Erd\H{o}s}
\newcommand{\Renyi}{R\'{e}nyi}
\def\Remove#1{\textsf{Remove}\text{-}#1}
\title{
Improved Distributed Expander Decomposition  and Nearly Optimal Triangle Enumeration\thanks{This work is supported by NSF grants CCF-1514383, CCF-1637546, and CCF-1815316.}}
\author[1]{Yi-Jun Chang}
\author[2]{Thatchaphol Saranurak}
\affil[1]{University of Michigan, USA}
\affil[2]{Toyota Technological Institute at Chicago, USA}
\begin{document}
\date{}
\maketitle
\setcounter{page}{1}


\begin{center}
  {\bf Abstract} 
\end{center}

An $(\epsilon,\phi)$-expander decomposition of a graph $G=(V,E)$
is a clustering of the vertices $V=V_{1}\cup\cdots\cup V_{x}$ such
that (1) each cluster $V_{i}$ induces subgraph with conductance at
least $\phi$, and (2) the number of inter-cluster edges is at most
$\epsilon|E|$. 
%
In this paper, we give an improved distributed expander decomposition,
and obtain a nearly optimal distributed triangle enumeration algorithm
in the $\CONGEST$ model. 

Specifically, we construct an $(\epsilon,\phi)$-expander
decomposition with $\phi=(\epsilon/\log n)^{2^{O(k)}}$ in $O(n^{2/k}\cdot\text{poly}(1/\phi,\log n))$
rounds for any $\epsilon\in(0,1)$ and positive integer $k$. For
example, a $(1/n^{o(1)},1/n^{o(1)})$-expander decomposition only requires $O(n^{o(1)})$ rounds to compute, which is optimal up to subpolynomial factors,
and a $(0.01,1/\text{poly}\log n)$-expander decomposition can be computed in $O(n^{\gamma})$ rounds, for any arbitrarily small constant $\gamma>0$.
Previously, the algorithm by Chang, Pettie, and Zhang can construct
a $(1/6,1/\text{poly}\log n)$-expander decomposition using $\tilde{O}(n^{1-\delta})$
rounds for any $\delta>0$, with a caveat that the algorithm is allowed to throw
away a set of edges into an extra part which form a subgraph with arboricity
at most $n^{\delta}$. Our algorithm does not have this caveat.

By slightly modifying the distributed algorithm for routing on expanders
by Ghaffari, Kuhn and Su {[}PODC'17{]}, we  obtain a triangle
enumeration algorithm using $\tilde{O}(n^{1/3})$ rounds. This matches
the lower bound by Izumi and Le~Gall {[}PODC'17{]} and Pandurangan,
Robinson and Scquizzato {[}SPAA'18{]} of $\tilde{\Omega}(n^{1/3})$
which holds even in the $\CLIQUE$ model.
To the best of our knowledge, this provides the first non-trivial example for a distributed problem that has essentially the same complexity (up to a polylogarithmic factor) in both $\CONGEST$ and $\CLIQUE$.

The key technique in our proof is the first distributed approximation
algorithm for finding a low conductance cut that is as balanced as
possible. Previous distributed sparse cut algorithms do not have this nearly most balanced guarantee.\footnote{Kuhn and Molla~\cite{KuhnM15} previously claimed that their approximate
sparse cut algorithm also has the nearly most balanced guarantee, but this claim turns
out to be incorrect~\cite[Footnote 3]{ChangPZ19}. \label{footnote:wrong}} 

\newpage
\section{Introduction}  \label{XXX-sect-intro}

In this paper, we consider the task of finding an \emph{expander decomposition} of a distributed network in the $\CONGEST$ model of distributed computing. Roughly speaking, an expander decomposition of a graph $G = (V,E)$ is a clustering of the vertices $V = V_1 \cup \cdots \cup V_x$ such that (1) each component $V_i$ induces a high conductance subgraph, and (2) the number of inter-component edges is small.
This natural bicriteria optimization problem of finding a good expander decomposition was introduced by Kannan Vempala and Vetta~\cite{KannanVV04}, and was further studied in many other subsequent works~\cite{SpielmanT13,OrecchiaA14,PatrascuT07,SanjeevSU09,Trevisan08,OrecchiaV11,SaranurakW19}.\footnote{
The existence of the expander decomposition is (implicitly) exploited first in the context of property testing \cite{GoldreichR1999}.}
 The expander decomposition has a wide range of applications, and it has been applied to solving linear systems~\cite{SpielmanT14}, unique games~\cite{AroraBS2015,Trevisan08,RaghavendraS2010},
 minimum cut~\cite{KawarabayashiT15}, and dynamic algorithms~\cite{NanongkaiSW17}. 
 
 Recently, Chang, Pettie, and Zhang~\cite{ChangPZ19} applied this technique to the field of distributed computing, and they showed that a variant of expander decomposition can be computed efficiently in $\CONGEST$. 
 Using this decomposition, they showed that triangle detection and enumeration can be solved in
$\tilde{O}(n^{1/2})$ rounds.\footnote{The $\tilde{O}(\cdot)$ notation hides any polylogarithmic factor.}
The previous state-of-the-art 
bounds for  triangle detection and enumeration  were 
$\tilde{O}(n^{2/3})$ and $\tilde{O}(n^{3/4})$, respectively,
due to Izumi and Le Gall~\cite{IzumiL17}.
Later, Daga et~al.~\cite{Daga2019distributed} exploit this decomposition and obtain the first algorithm for computing edge connectivity of a graph \emph{exactly} using sub-linear number of rounds. 

 Specifically, the variant of the decomposition in \cite{ChangPZ19} is as follows. If we allow one extra part that induces an $n^{\delta}$-arboricity subgraph\footnote{The arboricity of a graph is the minimum number $\alpha$ such that its edge set can be partitioned into $\alpha$ forests.} in the decomposition, then in $O(n^{1 - \delta})$ rounds we can construct an expander decomposition in $\CONGEST$ such that each component has $1/O(\poly \log n)$ conductance and the number of inter-component edges is at most $|E|/6$.

A major open problem left by the work~\cite{ChangPZ19} is to design an efficient distributed algorithm constructing an expander decomposition without the extra low-arboricity part. In this work, we show that this is possible.
A consequence of our new expander decomposition algorithm is that triangle enumeration can be solved in $O(n^{1/3} \poly \log n)$ rounds, nearly matching the $\Omega(n^{1/3}/\log n)$ lower bound~\cite{IzumiL17,PanduranganRS18} by a polylogarithmic factor.

\paragraph{The $\CONGEST$ Model.}
In the $\CONGEST$ model of distributed computing, the underlying distributed network
is represented as an undirected graph $G=(V,E)$, where each vertex corresponds to a computational device, and each edge corresponds to 
a bi-directional communication link.
Each vertex $v$ has a distinct $\Theta(\log n)$-bit identifier $\ID(v)$.
The computation proceeds according to synchronized \emph{rounds}.
In each round, each vertex $v$ can perform unlimited 
local computation, and may send a distinct
$O(\log n)$-bit message to each of its neighbors.
Throughout the paper we only consider the randomized variant of $\CONGEST$.
Each vertex is allowed to generate unlimited local random bits, 
but there is no global randomness. We say that an algorithm succeeds {\em with high probability} (w.h.p.) if its failure probability is at most $1 / \poly(n)$.

The $\CLIQUE$ model is a variant of 
$\CONGEST$ that allows all-to-all communication, and the $\LOCAL$ model is a variant of 
$\CONGEST$ that allows messages of unbounded length.

\paragraph{Terminology.} Before we proceed, we review the graph terminologies related to the expander decomposition.  Consider a graph $G = (V,E)$. For a vertex subset $S$, we write $\vol(S)$ to denote $\sum_{v \in S} \deg(v)$. Note that by default the degree is with respect to the original graph $G$. We write $\bar{S} = V \setminus S$, and let $\partial(S) = E(S, \bar{S})$ be the set of edges $e = \{u,v\}$ with $u \in S$ and $v \in \bar{S}$.  The {\em sparsity} or \emph{conductance} of a cut $(S, \bar{S})$ is defined as $\Phi(S) = |\partial(S)| /  \min\{\vol(S), \vol(\bar{S})\}$.
The {\em conductance} $\Phi_G$ of a graph $G$ 
is the minimum value of $\Phi(S)$ over all vertex subsets $S$.
Define the {\em balance} $\bal(S)$ of a cut $S$ by $\bal(S) = \min\{\vol(S), \vol(\bar{S})\} / \vol(V)$. We say that $S$ is a \emph{most-balanced} cut of $G$ of conductance at most $\phi$ if $\bal(S)$ is maximized among all cuts of $G$ with conductance at most $\phi$.
We have the following relation~\cite{JerrumS89} between the 
mixing time $\mix(G)$ and conductance $\Phi_G$: 
\[
\Theta\left(\frac{1}{\Phi_G}\right) \leq \mix(G) \leq \Theta\left(\frac{\log n}{\Phi_G^2}\right).
\]

Let $S$ be a vertex set. Denote $E(S)$ by the set of all edges whose two endpoints are both within $S$. We write $G[S]$ to denote the subgraph induced by $S$, and we write $G\{S\}$ to denote the graph resulting from adding $\deg_V(v) - \deg_S(v)$ self loops to each vertex $v$ in  $G[S]$. Note that the degree of each vertex $v \in S$ in both $G$ and $G\{S\}$ is identical. As in~\cite{SpielmanS08}, each self loop of $v$ contributes 1 in the calculation of $\deg(v)$. Observe that we always have
\[\Phi(G\{S\}) \leq \Phi(G[S]).\]

Let $v$ be a vertex. Denote $N(v)$ as the set of neighbors of $v$. We also write $N^k(v) = \{ u \in V \ | \ \dist(u,v) \leq k\}$. Note that $N^1(v) = N(v) \cup \{v\}$. These notations $\dist(u,v)$, $N(v)$, and $N^k(v)$ depend on the underlying graph $G$. When the choice of underlying graph is not clear from the context, we use a subscript to indicate the underlying graph we refer to.

\paragraph{Expander Decomposition.}
An $(\epsilon, \phi)$-\decomposition\ of a graph $G = (V,E)$ is defined as a partition of the vertex set $V = V_1 \cup \cdots \cup V_x$ satisfying the following conditions.
\begin{itemize}
    \item For each component $V_i$, we have $\Phi(G\{V_i\}) \geq \phi$.
    \item The number of inter-component edges $\left(|\partial(V_1)| + \cdots + |\partial(V_x)|\right)/2$  is at most $\epsilon |E|$.
\end{itemize}
The main contribution of this paper is the following result.

\begin{restatable}{theorem}{restateexpanderdecomposition}\label{XXX-thm-expander-decomposition}
Let $\epsilon \in (0,1)$, and let $k$ be a positive integer.
An $(\epsilon, \phi)$-\decomposition\ with $\phi = (\epsilon / \log n)^{2^{O(k)}}$ can be constructed in $O\left(n^{2/k} \cdot \poly(1/\phi, \log n)\right) = O\left(n^{2/k} \cdot \left(\frac{\log n}{\epsilon}\right)^{2^{O(k)}} \right)$ rounds, w.h.p.
\end{restatable}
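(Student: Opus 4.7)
The plan is to reduce Theorem~\ref{XXX-thm-expander-decomposition} to a single new distributed primitive --- an approximate most-balanced sparse cut algorithm, which I will call \balancedcut. Given a cluster $U \subseteq V$ and a target conductance $\phi$, \balancedcut\ should either (i) certify that $\Phi(G\{U\}) \geq \phi$, or (ii) return a cut $(A, U \setminus A)$ of $G[U]$ of conductance at most some $\phi' \geq \phi$ in $G\{U\}$, with the guarantee that $\bal(A)$ is, up to polylogarithmic slack, at least the balance of any cut of $G\{U\}$ of conductance $\leq \phi$. The design and analysis of \balancedcut\ is the main technical novelty of the paper and is what repairs the erroneous claim of \cite{KuhnM15} (see Footnote~\ref{footnote:wrong}).

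Given \balancedcut, the decomposition is built by the standard Spielman--Teng/Saranurak--Wang-style recursion: start with $U = V$; invoke \balancedcut\ on each surviving cluster; freeze the cluster if expansion is certified, otherwise cut along $A$ and recurse on $A$ and $U \setminus A$ in parallel. All surviving clusters at a given recursion level are processed simultaneously in $\CONGEST$. The conductance guarantee $\Phi(G\{V_i\}) \geq \phi$ then follows from the stopping rule. For the inter-cluster edge bound I would use the usual charging argument: every cut contributes at most $\phi' \min\{\vol(A), \vol(U \setminus A)\}$ edges to the eventual boundary, and because every cut returned by \balancedcut\ is approximately most-balanced, no vertex can sit on the smaller side of more than $O(\log n)$ cuts on its root-to-leaf path in the recursion tree. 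Summing gives a total of $O(\phi' \vol(V) \log n)$ inter-cluster edges, which is at most $\epsilon |E|$ once $\phi'$ is taken small enough relative to $\epsilon/\log n$.

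The parameter $k$ enters as a round/quality trade-off inside \balancedcut, between round complexity and the multiplicative gap $\phi/\phi'$ between the detection target and the returned conductance. I would aim for \balancedcut\ to run in $\tilde O(n^{2/k}/\phi^{O(1)})$ rounds while only being able to return a cut of conductance $\phi^{O(1)}$ when a cut of conductance $\phi$ exists. Compounded over the $O(k)$ levels of boosting needed to convert the original target $\epsilon/\log n$ into a certified expansion at every leaf, this degradation yields the final bound $\phi = (\epsilon/\log n)^{2^{O(k)}}$ claimed in the theorem; multiplying the per-call cost by the $\polylog n$ parallel depth of the recursion tree then produces the $O(n^{2/k} \cdot \poly(1/\phi, \log n))$ round count.

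The main obstacle is the design of \balancedcut\ itself. Standard distributed sparse-cut procedures built on PageRank-\nibble-style random walks are intrinsically local and tend to return highly unbalanced cuts; this deficiency is exactly what forced \cite{ChangPZ19} to dump a low-arboricity subgraph into an extra part of their decomposition. My plan is to launch many \nibble-type processes in parallel from carefully chosen seed vertices and to aggregate their outputs via an evolving-set style scheme that provably preserves the balance of the optimal sparse cut, rather than just certifying its existence. A secondary obstacle is that a cluster $U$ may have large diameter in $G[U]$ even when $G$ itself does not; I would handle this by first computing a \lowdiamdecomp\ of $G[U]$ and simulating the random walks on its quotient, so that the round cost of each invocation is governed by the decomposition diameter rather than by the diameter of the cluster itself.
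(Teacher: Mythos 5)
You have correctly identified the two key primitives (a nearly most balanced sparse cut routine and a low diameter decomposition to control cluster diameter), and the overall reduction is indeed the Nanongkai--Saranurak/Wulff-Nilsen-style recursion that the paper follows. But there is a genuine gap in your round-complexity accounting. Your charging argument ("no vertex sits on the smaller side of more than $O(\log n)$ cuts") bounds the number of inter-cluster edges, but it does \emph{not} bound the depth of the recursion: a vertex can sit on the \emph{larger} side of arbitrarily many cuts, so even with a nearly most balanced cut routine the recursion can peel off a tiny piece at every step (this happens precisely when the true most-balanced $\phi$-sparse cut is itself very unbalanced), and the parallel depth of your recursion tree can be $\Omega(n)$, not $\polylog n$. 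Relatedly, you place the $n^{2/k}$ factor inside the sparse cut subroutine as a quality/time trade-off, but you give no mechanism for realizing such a trade-off there; in the paper the cut routine costs only $O(D\cdot\poly(\log n,1/\phi))$ rounds and $k$ never enters it.

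What actually closes this gap in the paper is a two-phase structure. Phase~1 recurses on both sides only while the returned cut has volume at least $(\epsilon/12)\vol(U)$, which caps the recursion depth at $d=O((1/\epsilon)\log n)$; once the (nearly most balanced, hence essentially \emph{all}) sparse cuts of a cluster are unbalanced, the cluster enters Phase~2, where the small side $C$ is not recursed on but \emph{removed entirely} (its vertices become singleton components), with the removed edges charged against the total volume $m_1=(\epsilon/6)\vol(U)$ of all sparse cuts. Phase~2 then runs a $k$-level hierarchy of volume scales $m_1>m_2>\cdots>m_{k+1}$ with $m_{i+1}=m_i/\tau$ and $\tau\approx\vol(U)^{1/k}\le n^{2/k}$: whenever the returned cut is smaller than $m_L/(2\tau)$, the balance guarantee certifies that \emph{every} remaining $\phi_L$-sparse cut has volume at most $m_{L+1}$, so the level increments; this is what bounds the number of sequential cut invocations by $O(k\cdot n^{2/k})$ and is the true source of the $n^{2/k}$ in the theorem. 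The conductance parameter must be re-sharpened at each level ($\phi_{L+1}=h^{-1}(\phi_L)=\Theta(\phi_L^3/\log^5 n)$) so that cuts found at deeper levels are still $\phi_L$-sparse and the volume accounting composes; compounding this cubic loss $k$ times is what yields $\phi=(\epsilon/\log n)^{2^{O(k)}}$. Your "$O(k)$ levels of boosting" gestures at this, but without the volume-scale hierarchy there is no reason for exactly $k$ levels of compounding, and without the "remove rather than recurse" step the depth bound fails.
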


The proof of Theorem~\ref{XXX-thm-expander-decomposition} is in Section~\ref{XXX-sect-expander-trim}.
We emphasize that the number of rounds does not depend on the diameter of $G$.
There is a trade-off between the two parameters $\epsilon$ and  $\phi$.
For example, an $(\epsilon, \phi)$-\decomposition\ with $\epsilon = 2^{-\log^{1/3}n}$ and  $\phi = 2^{-\log^{2/3}n}$ can be constructed in $n^{O(1/\log \log n)}$ rounds by setting $k = O(\log \log n)$ in Theorem~\ref{XXX-thm-expander-decomposition}. If  we are allowed to have $\epsilon = 0.01$ and spend $O(n^{0.01})$ rounds, then we can achieve $\phi = 1/ O(\poly \log n)$.

\paragraph{Distributed Triangle Finding.} 
Variants of the triangle
 finding problem have been studied in the literature~\cite{AbboudCK2017,Censor2016,ChangPZ19,DolevLP12,DruckerKO14,FischerGO18,PanduranganRS18,IzumiL17}. In the {\em triangle detection} problem, it is required that at least one vertex  must report a triangle if the graph has at least one triangle.  In the {\em triangle enumeration} problem, it is required that each triangle of the graph is reported by at least one vertex. Both of these problems can be solved in $O(1)$ rounds in $\LOCAL$. It is the bandwidth constraint of $\CONGEST$ and $\CLIQUE$ that makes these problems non-trivial.
 
 It is important that a triangle $T=\{u,v,w\}$  is allowed to be reported by a vertex $x \notin T$. If it is required that a triangle $T=\{u,v,w\}$ has to be reported by a vertex $x \in T$, then there is an $\Omega(n / \log n)$ lower bound~\cite{IzumiL17} for triangle enumeration, in both $\CONGEST$ and $\CLIQUE$. To achieve a round complexity of $o(n / \log n)$, it is necessary that some triangles $T$ are reported by vertices not in $T$.

 Dolev, Lenzen, and Peled~\cite{DolevLP12} showed that triangle
 enumeration can be solved deterministically 
 in $O(n^{1/3}/\log n)$ rounds in $\CLIQUE$. This algorithm is optimal, as it matches the  $\Omega(n^{1/3}/\log n)$-round lower bound~\cite{IzumiL17,PanduranganRS18} in $\CLIQUE$.
 Interestingly, if we only want to detect one triangle or count the number of triangles, then  Censor-Hillel et al.~\cite{Censor2016} showed that the round complexity in $\CLIQUE$ can be improved to 
 $\tilde{O}(n^{1-(2/\omega) + o(1)}) = o(n^{0.158})$~time~\cite{Censor2016}, where $\omega < 2.373$ is the exponent of the complexity of matrix multiplication~\cite{LeGall14}.

For the $\CONGEST$ model,
Izumi and Le Gall~\cite{IzumiL17} showed that
the triangle detection and enumeration problems can be solved in $\tilde{O}(n^{2/3})$ and $\tilde{O}(n^{3/4})$ time, respectively. 
These upper bounds were later improved to $\tilde{O}(n^{1/2})$ by Chang, Pettie, and Zhang using a variant of expander decomposition~\cite{ChangPZ19}.

A consequence of Theorem~\ref{XXX-thm-expander-decomposition} is that triangle enumeration (and hence detection) can be solved in $\tilde{O}(n^{1/3})$ rounds, almost matching the $\Omega(n^{1/3} / \log n)$ lower bound~\cite{IzumiL17,PanduranganRS18} which holds even in $\CLIQUE$. To the best of our knowledge, this provides the first non-trivial example for a distributed problem that has essentially the same complexity (up to a polylogarithmic factor) in both $\CONGEST$ and $\CLIQUE$, i.e., allowing non-local communication links does not help. In contrast,  many other graph problems  can be solved much more efficiently in $\CLIQUE$ than in $\CONGEST$; see e.g.,~\cite{JurdzinskiN2018,GhaffariN2018}. 

\begin{restatable}{theorem}{restatetriangle}\label{XXX-thm-triangle-enum}
Triangle enumeration can be solved in  $\tilde{O}(n^{1/3})$ rounds in $\CONGEST$, w.h.p.
\end{restatable}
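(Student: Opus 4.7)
The plan is to use Theorem~\ref{XXX-thm-expander-decomposition} to reduce the graph to a collection of near-expander clusters and then to execute, on top of this decomposition, the $O(n^{1/3}/\log n)$-round $\CLIQUE$ triangle enumeration algorithm of Dolev, Lenzen, and Peled (DLP) via a mild extension of the Ghaffari--Kuhn--Su (GKS) expander-routing primitive. Concretely, I would first invoke Theorem~\ref{XXX-thm-expander-decomposition} with a constant $k$ and $\epsilon = 1/\polylog n$, obtaining in $n^{o(1)}$ rounds an $(\epsilon,\phi)$-expander decomposition $V = V_1\cup\cdots\cup V_x$ with $\phi = 1/\polylog n$. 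This preprocessing cost is negligible compared to the target $\tilde{O}(n^{1/3})$.

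Next I would set up the DLP scheme: partition $V$ into $\Theta(n^{1/3})$ groups of size $\Theta(n^{2/3})$, assign to each of the $\Theta(n)$ unordered group-triples a distinct \emph{checker} vertex, and require each edge $(u,v)$ to be shipped to every checker whose triple contains both the group of $u$ and the group of $v$. A triangle $\{u,v,w\}$ is then detected locally by the unique checker assigned to the triple $(a(u),a(v),a(w))$. Altogether every vertex is source and destination of $\tilde{O}(n^{4/3})$ messages of $O(\log n)$ bits, which is exactly why the scheme finishes in $\tilde{O}(n^{1/3})$ rounds in $\CLIQUE$. To realize the same routing pattern in $\CONGEST$ I would split the messages into intra-cluster tokens (source and destination in the same $V_i$) and inter-cluster tokens. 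The intra-cluster tokens are delivered by an extended GKS routing on each $G\{V_i\}$ in $\tilde{O}(n^{1/3}\cdot\poly(1/\phi)) = \tilde{O}(n^{1/3})$ rounds, whereas the $O(|E|\cdot n^{1/3})$ inter-cluster tokens travel across the $\le \epsilon|E|$ inter-cluster edges, producing an average per-edge congestion of $\tilde{O}(n^{1/3}/\epsilon) = \tilde{O}(n^{1/3})$ once the checker-assignment is randomly balanced.

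The main obstacle is the routing step. The GKS primitive is originally stated for per-vertex source/destination loads of order $\deg(v)$, whereas here every vertex handles $\deg(v)\cdot\tilde{\Theta}(n^{1/3})$ tokens; I would have to argue that pipelining $\tilde{\Theta}(n^{1/3})$ copies of their scheme still achieves the claimed complexity up to polylogarithmic factors---this is the ``slight modification'' of GKS announced in the abstract. A secondary subtlety is ensuring, via the random checker assignment, that the loads are balanced simultaneously with respect to (i) the expander-routing demand inside each $V_i$ and (ii) the congestion on every inter-cluster edge; a Chernoff-style argument should suffice, but the three-way interaction between DLP, expander routing, and the decomposition is where most of the technical effort lies.
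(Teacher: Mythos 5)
Your skeleton (expander decomposition, a DLP-style checker scheme, expander routing inside clusters, and a modified GKS routine) matches the paper's route, but the step where you handle inter-cluster communication has a genuine gap. You propose to send all tokens whose source and destination lie in different clusters across the at most $\epsilon|E|$ inter-cluster edges and to argue feasibility via \emph{average} per-edge congestion. Average congestion does not yield a routing schedule: a single cluster $V_i$ may have an arbitrarily small (even empty) boundary $\partial(V_i)$ while its vertices must ship $\Omega(\vol(V_i)\cdot n^{1/3})$ tokens to checkers outside it, so the cut $\partial(V_i)$ is a hard bottleneck; moreover there is no efficient primitive for routing across cluster boundaries at all --- that is precisely the difficulty the decomposition is meant to avoid. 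The paper never routes anything between clusters. It invokes the reduction of Chang--Pettie--Zhang, which runs the checker scheme only \emph{inside} each cluster (with groups and checker assignments weighted by degree, so that each vertex $v$ is a source/destination of only $O(\deg(v))$ messages per routing instance, rather than the uniform $n^{4/3}$ load of DLP, which a low-degree vertex could not absorb in $\CONGEST$), and dumps every edge it cannot handle --- in particular all inter-cluster edges --- into a set $E^\ast$ with $|E^\ast|\le |E|/2$. Triangles lying entirely inside $E^\ast$ are found by recursing on $E^\ast$, and since the edge count halves each time, the recursion has depth $O(\log n)$.

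A second, smaller issue is the routing cost itself. The GKS guarantee as stated is $2^{O(\sqrt{\log n\log\log n})}\cdot O(\mix)$ \emph{per routing instance}, and multiplying by the $\tilde{O}(n^{1/3})$ instances overshoots $\tilde{O}(n^{1/3})$; ``pipelining copies of their scheme'' is not by itself enough. The actual modification is to view GKS as a distributed data structure and re-parameterize its hierarchy depth to a large constant, so that a one-time preprocessing costs $o(n^{1/3})$ rounds while each of the $\tilde{O}(n^{1/3})$ subsequent queries costs only $\polylog n\cdot O(\mix)=\polylog n$ rounds (and note that the faster Ghaffari--Li variant does \emph{not} admit this preprocessing/query trade-off). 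Your proposal gestures at this but would need to make the trade-off explicit to close the bound.
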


The proof of Theorem~\ref{XXX-thm-triangle-enum} is in Section~\ref{XXX-triangle-enum}. Note that Theorem~\ref{XXX-thm-triangle-enum} immediately implies an algorithm for triangle detection with the same number of rounds. However, while the best known lower bounds~\cite{AbboudCK2017,FischerGO18} for triangle detection can currently exclude only 1-round algorithms. Whether the large gap between upper and lower bounds for this problem can be closed remains an intriguing question.

\subsection{Prior Work on Expander Decomposition}\label{sec:related}


In the centralized setting, the first polynomial time algorithm for construction an $(\epsilon,\phi)$-expander decomposition is by Kannan, Vempala and Vetta~\cite{KannanVV04} where $\epsilon = \tilde{O}(\phi)$.
Afterward, Spielman and Teng~\cite{SpielmanT11,SpielmanT13} significantly improved the running time to be near-linear in $m$, where $m$ is the number of edges. In time $\tilde{O}(m/\poly(\phi))$, they can construct a ``weak'' $(\poly(\phi,\log n),\phi)$-expander decomposition. Their weak expander only has the following weaker guarantee that 
each part $V_i$ in the partition of $V$ might not induce an expander, and we only know that $V_i$ is contained in some {\em unknown} expander. That is, there exists some $W_i \supseteq V_i$ where $\Phi_{G\{W_i\}} \ge \phi$. Although this guarantee suffices for many applications (e.g.~\cite{KelnerLOS14,CohenKPPRSV17}), some other applications~\cite{NanongkaiSW17,ChuGPSSW18}, including the triangle  enumeration algorithm of~\cite{ChangPZ19}, crucially needs the fact that each part in the decomposition induces an expander.



Nanongkai and Saranurak~\cite{NanongkaiS17} and, independently, Wulff-Nilsen~\cite{Wulff-Nilsen17} gave a fast algorithm without weakening the guarantee as the one in~\cite{SpielmanT11,SpielmanT13}. In~\cite{NanongkaiS17}, their algorithm finds a $(\phi \log^{O(k)}n,\phi$)-expander decomposition in time $\tilde{O}(m^{1+1/k})$. Although the trade-off is worse in~\cite{Wulff-Nilsen17}, their high-level approaches are in fact  the same. They gave the same black-box reduction from constructing an expander decomposition to finding a nearly most balanced sparse cut. The difference only comes from the quality of their nearly most balanced sparse cuts algorithms. 
Our distributed algorithm will also follow this high-level approach.  

Most recently, Saranurak and Wang~\cite{SaranurakW19} gave a $(\tilde{O}(\phi),\phi)$-expander decomposition algorithm with running time $\tilde{O}(m/\phi)$. This is optimal up to a polylogarithmic factor when $\phi \ge 1/\poly\log (n)$.
We do not use their approach, as their {\em trimming step} seems to be inherently sequential and very challenging to parallelize or make distributed.

The only previous  expander decomposition in the distributed setting is by Chang, Pettie, and Zhang~\cite{ChangPZ19}. Their distributed algorithm gave an $(1/6,1/ \poly\log(n))$-expander decomposition with an extra part which is an $n^\delta$-arboricity subgraph in $O(n^{1-\delta})$ rounds in $\CONGEST$. Our distributed algorithm significantly improved upon this work.




\subsection{Technical Overview}\label{sec:tech_overview}

For convenience, we call a cut with conductance at most $\phi$
a \emph{$\phi$-sparse cut} in this section. To give a high-level
idea, the most straightforward algorithm for constructing an expander
decomposition of a graph $G=(V,E)$ is as follows. Find a $\phi$-sparse
cut $S$. If such a cut $S$ does not exist, then return $V$ as a part in the
partition. Otherwise, recurse on both sides $G\{S\}$ and $G\{V-S\}$,
and so the edges in $E(S,V-S)$ become inter-cluster edges.
To see the correctness,  once the recursion stops at $G\{U\}$ for some $U$, we know that $\Phi_{G\{U\}}\ge\phi$.
Also, the total number of inter-cluster edges is at most $O(m \phi\log n)$
because (1) each inter-cluster edge can be charged to edges in the
smaller side of some $\phi$-sparse cut, and (2) each edge can be
in the smaller side of the cut for at most $O(\log n)$ times. 

This straightforward approach has two efficiency issues: (1) checking
whether a $\phi$-sparse cut exists does not admit fast distributed
algorithms (and is in fact NP-hard), and (2) a $\phi$-sparse cut
$S$ can be very unbalanced and hence the recursion depth can be as
large as $\Omega(n)$. Thus, even if we ignore time spent on finding cuts,
the round complexity due to the recursion depth is too high. At a
high-level, all previous algorithms (both centralized and distributed)
handle the two issues in the same way up to some extent. First, they
instead use \emph{approximate sparse cut algorithms} which either
find some $\phi'$-sparse cut or certify that there is no $\phi$-sparse
cut where $\phi'\gg\phi$. Second, they find a cut with some guarantee about the balance of the cut, i.e., the smaller side of the cut should be sufficiently large. 

Let us contrast our approach with the only previous distributed expander
decomposition algorithm by Chang, Pettie, and Zhang~\cite{ChangPZ19}. They
gave an approximate sparse cut algorithm such that the smaller side
of the cut has $\Omega(n^{\delta})$ vertices for some constant $\delta>0$,
so the recursion depth is $O(n^{1-\delta})$. They  guarantee this
 property by ``forcing'' the graph to have minimum degree
at least $n^{\delta}$, so any $\phi$-sparse cut  must contain
$\Omega(n^{\delta})$ vertices (this uses the fact that the graph is simple)
To force the graph
to have high degree, they keep removing vertices with degree at most
$n^{\delta}$ at any step of the algorithms. Throughout the whole
algorithm, the removed part form a graph with arboricity at most $n^{\delta}$.
This explains why their decomposition outputs the extra part which induces a
low arboricity subgraph. With some other ideas on distributed implementation,
they obtained the round complexity of $\tilde{O}(n^{1-\delta})$, roughly matching the
recursion depth.

In this paper, we avoid this extra low-arboricity part. The key
component is the following. Instead of just guaranteeing that the
smaller side of the cut has $\Omega(n^{\delta})$ vertices, we give the
first
efficient distributed algorithm for computing a \emph{nearly most
balanced} sparse cut. Suppose  there is a $\phi$-sparse cut with
balance $b$, then our sparse cut algorithm returns a $\phi'$-sparse cut with
balance at least $\Omega(b)$, where $\phi'$ is not much larger than
$\phi$. 
Intuitively, given that we can find a nearly most balanced sparse
cut efficiently, the recursion depth should be made very small. This intuition can
be made formal using the ideas in the centralized setting from Nanongkai
and Saranurak~\cite{NanongkaiS17} and  Wullf-Nilsen~\cite{Wulff-Nilsen17}.
Our main technical contribution is two-fold. First, we show the
first distributed algorithm for computing a nearly most balanced sparse
cut, which is our key algorithmic tool. Second, in order to obtain
a fast distributed algorithm, we must modify the centralized approach
of~\cite{NanongkaiS17,Wulff-Nilsen17} on how to construct an expander decomposition. 
In particular, 
we need to run a {\em low diameter decomposition} whenever we encounter a graph with high diameter, as our distributed algorithm for finding a nearly most balanced sparse cut is fast only on graphs with low diameter.

\paragraph{Sparse Cut Computation.}
At a high level, our distributed nearly  most balanced sparse cut algorithm is a distributed implementation of the sequential algorithm of Spielman and Teng~\cite{SpielmanT13}.  The algorithm of~\cite{SpielmanT13} involves $\tilde{O}(m)$ \emph{sequential} iterations of \nibble\ with a random starting vertex  on the remaining subgraph. 
Roughly speaking, the procedure  \nibble\ aims at finding a sparse cut by simulating a random walk. The idea is that if the starting vertex $v$ belongs to some sparse cut $S$, then it is likely that most of the probability mass will be trapped inside $S$.
Chang, Pettie, and Zhang~\cite{ChangPZ19} showed that $\tilde{O}(m)$ simultaneous iterations of an approximate version of \nibble\ with a random starting vertex can be implemented efficiently in $\CONGEST$ in $O(\poly(1/\phi, \log n))$ rounds, where $\phi$ is the target conductance.
A major difference between this work and~\cite{ChangPZ19} is that the expander decomposition algorithm of~\cite{ChangPZ19} does not need any  requirement about the balance of the cut in their sparse cut computation. 

Note that the  $\tilde{O}(m)$ \emph{sequential} iterations of \nibble\ in the nearly most balanced sparse cut algorithm of~\cite{SpielmanT13} cannot be completely parallelized. For example, it is possible that the union of all   $\tilde{O}(m)$ output of  \nibble\ equals the entire graph. Nonetheless, we show that this process can be partially parallelized at the cost of worsening the conductance guarantee by a polylogarithmic factor.

\begin{restatable}[Nearly most balanced sparse cut]{theorem}{restatenearlybalcut}\label{XXX-thm-nearly-bal-cut}
Given a parameter $\phi = O(1/ \log^5 n)$, there is an $O(D \cdot \poly(\log n, 1/\phi))$-round algorithm $\mathcal{A}$ that achieves the following w.h.p.
\begin{itemize}
    \item In case $\Phi(G) \leq \phi$, the algorithm $\mathcal{A}$ is guaranteed to return a cut $C$ with balance  $\bal(C) \geq \min\{b/2, 1/48\}$ and conductance $\Phi(C) = O(\phi^{1/3} \log^{5/3} n)$, where $b$ is defined as $b =\bal(S)$, where $S$ is a most-balanced sparse cut of $G$ of conductance at most $\phi$.
    \item In case $\Phi(G) > \phi$, the algorithm $\mathcal{A}$ either returns $C = \emptyset$ or returns a cut $C$ with conductance $\Phi(C) = O(\phi^{1/3} \log^{5/3} n)$.
\end{itemize}
\end{restatable}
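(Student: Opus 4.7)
The plan is to implement a distributed analogue of Spielman and Teng's (sequential) nearly most balanced sparse cut algorithm, using the parallel approximate \nibble\ machinery of Chang, Pettie and Zhang as the engine inside each ``batch'' of random walks. In the centralised algorithm one runs \nibble\ from $\tilde O(m)$ random starting vertices sequentially, removing each returned small-conductance set from the graph before the next call, and the union of all returned cuts approximates a most-balanced sparse cut. Since $\tilde O(m)$ sequential invocations are unaffordable in $\CONGEST$, the idea is to replace them by only $O(\log n)$ outer iterations, each of which launches many independent approximate \nibble\ simulations in parallel from random vertices sampled (with probability proportional to current degree) inside the residual graph.

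Concretely, I would maintain a growing set $A \subseteq V$, initially empty. At each of $T = O(\log n)$ iterations, sample every vertex $v$ of $V\setminus A$ independently with probability $\Theta(\deg(v)/\vol(V\setminus A))$, invoke \parallelnibble\ from all sampled vertices with target conductance $\phi$ on the induced graph $G\{V\setminus A\}$ (which runs in $O(D \cdot \poly(\log n,1/\phi))$ rounds by the techniques of \cite{ChangPZ19}), add the union $A'$ of the successful small-conductance outputs to $A$, and stop once $A$ has crossed a constant balance threshold (say $\bal(A)\geq 1/48$) or no simulation succeeded. The final output is $C=A$. Both the sampling and the degree corrections required to simulate walks in $G\{V\setminus A\}$ rather than in $G$ can be handled locally, so the overall round complexity remains $O(D\cdot\poly(\log n,1/\phi))$.

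For the balance guarantee, assume $\Phi(G)\leq\phi$ and let $S^*$ be a most-balanced $\phi$-sparse cut with $\bal(S^*)=b$. The Spielman--Teng \nibble\ analysis yields a ``good'' set $S_g\subseteq S^*$ with $\vol(S_g)\geq\vol(S^*)/2$ such that an approximate \nibble\ started at any $v\in S_g$ returns a set of conductance $\tilde O(\sqrt\phi)$ lying essentially inside $S^*$. Degree-proportional sampling hits $S_g\setminus A$ with probability at least $\vol(S_g\setminus A)/\vol(V\setminus A)$, so using sufficiently many simultaneous starts per batch, a constant fraction of the remaining volume of $S_g$ is absorbed into $A$ at each iteration by a Chernoff estimate. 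After $O(\log n)$ iterations either the stopping threshold has been exceeded, giving $\bal(C)\geq 1/48$, or $\vol(A\cap S^*)\geq\vol(S^*)/2$, giving $\bal(C)\geq b/2$; together $\bal(C)\geq\min\{b/2,1/48\}$.

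The conductance bound $O(\phi^{1/3}\log^{5/3}n)$ is the more delicate part and is the main obstacle. Each individual \parallelnibble\ output has conductance $\tilde O(\sqrt\phi)$, but unioning the outputs over $O(\log n)$ batches can inflate the boundary, and one must argue that every inter-cut edge can still be charged to the smaller side of $C$. The exponent $1/3$ is the usual outcome of trading off three sources of loss: (i) the $\sqrt\phi$ slack inherent to random-walk detection of a $\phi$-sparse set, (ii) the polylogarithmic slack introduced by using the \emph{approximate} rather than exact \nibble\ to enable parallelism, and (iii) the amplification caused by taking a union of cuts across batches; balancing these three slacks against one another is what produces $\phi^{1/3}\log^{5/3}n$. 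If instead $\Phi(G)>\phi$, no \parallelnibble\ output can certify a cut better than its approximation guarantee, so either $A=\emptyset$ or $\Phi(A)$ inherits the same bound. The most technically subtle point --- and the one I expect to cost the most work --- is showing that replacing Spielman and Teng's strictly sequential residual-graph update by a batched parallel schedule costs only a polylogarithmic factor in conductance while preserving the nearly-most-balanced guarantee; this is precisely the claim that earlier work got wrong, so extra care is required throughout the charging argument.
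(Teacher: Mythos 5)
Your overall architecture matches the paper's: parallel batches of approximate \nibble\ launched from degree-sampled starting vertices in the residual graph, with the union of the outputs forming the cut. But there is a genuine quantitative gap in your outer loop. You claim that with ``sufficiently many simultaneous starts per batch, a constant fraction of the remaining volume of $S_g$ is absorbed into $A$ at each iteration,'' so that $O(\log n)$ iterations suffice. This is exactly what the congestion constraint forbids. Each instance of \nibble\ started at $v$ touches every edge in $Z_{e,\phi,b}$-type neighborhoods of total volume $O(\phi^{-5}2^b\log^3|E|)$, so to keep the per-edge congestion at $w=O(\log n)$ (which is needed both for the $\CONGEST$ implementation and for the conductance bound on the union, since $\vol(\bigcup_i C_i)\geq (1/w)\sum_i\vol(C_i)$) you can only afford $k\approx \vol(V)\cdot\phi/(\ell\,\tlast^2\log|E|)$ parallel starts per batch, not $\tilde{O}(m)$. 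Consequently a single batch captures only an expected $\vol(S)/g(\phi,\vol(V))=\vol(S)/O(\phi^{-5}\log^5|E|)$ fraction of the target cut's volume, i.e.\ a $1/\poly(\log n,1/\phi)$ fraction, not a constant fraction. The paper therefore runs $s=O(\phi^{-5}\log^6 n)$ sequential batches (still within the claimed round budget) and aggregates progress across windows of $4g(\phi,\vol(V))$ consecutive batches via a bounded-random-variable argument ($\Expect[Y]\geq 2\vol(S)$ while $Y\leq 4\vol(S)$ always, giving constant success probability per window), rather than a Chernoff bound on per-batch constant progress. As stated, your $O(\log n)$-iteration analysis does not go through.

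Two smaller inaccuracies worth fixing: each individual approximate-\nibble\ output has conductance $O(\phi)$ (and the batched union $O(\phi\log n)$), not $\tilde{O}(\sqrt{\phi})$; and the exponent $1/3$ does not arise from balancing three sources of loss, but from a single re-parameterization at the end. The Spielman--Teng guarantee detects any cut $S$ with $\Phi(S)\leq f(\phi')=\Theta(\phi'^3/\log^2 n)$ while outputting a cut of conductance $O(\phi'\log n)$; setting $f(\phi')=\phi$ gives $\phi'=\Theta((\phi\log^2 n)^{1/3})$ and hence the output bound $O(\phi^{1/3}\log^{5/3}n)$. The cube is already present in the sequential detection threshold $f$, not introduced by the parallelization.
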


The proof of Theorem~\ref{XXX-thm-nearly-bal-cut} is in Appendix~\ref{XXX-sect-balanced-cut}.
We note again that this is the first distributed sparse cut algorithm with a nearly most balanced guarantee. 
The problem of finding a sparse cut the distributed setting has been studied prior to the work of~\cite{ChangPZ19}.
Given that there is a $\phi$-sparse cut and balance $b$, the algorithm of Das Sarma, Molla, and Pandurangan~\cite{DasSarma15} finds a cut of conductance at most $\tilde{O}(\sqrt{\phi})$ in $\tilde{O}((n + (1/\phi))/b)$ rounds in $\CONGEST$. The round complexity was later improved to $\tilde{O}(D + 1/(b \phi))$ by Kuhn and Molla~\cite{KuhnM15}. These prior works have the following drawbacks: (1) their running time depends on $b$ which can be as small as $O(1/n)$, and (2) their output cuts are not guaranteed to be nearly most balanced (see footnote \ref{footnote:wrong}).

\paragraph{Low Diameter Decomposition.}  
The runtime of our distributed sparse cut algorithm (Theorem~\ref{XXX-thm-nearly-bal-cut}) is proportional to the diameter.
To avoid running this algorithm on a high diameter graph, we employ a 
low diameter decomposition to  decompose the current graph into components of small diameter.

The low diameter decomposition algorithm of Miller, Peng, and Xu~\cite{miller2013parallel} can already be implemented in $\CONGEST$ efficiently. Roughly, their algorithm is to let each vertex $v$ sample
$\delta_v\sim \text{Exponential}(\beta)$, $\beta \in (0,1)$, and then  $v$ is assigned to the cluster of $u$ that minimizes $\dist(u,v) - \delta_u$. A similar approach has been applied to construct a \emph{network decomposition}~\cite{doi:10.1002/rsa.3240050305,Linial1993}.

However, there is one subtle issue that the guarantee that the number of inter-cluster edges is at most $O(\beta |E|)$ only holds \emph{in expectation}. In sequential or parallel computation model, we can simply repeat the procedure for several 
times and take the best result. 
In $\CONGEST$, this however takes at least diameter time, which is inefficient when the diameter is large.


We provide a technique that allows us to achieve this guarantee {\em with high probability} without spending diameter time, so we can  ensure that the number of inter-cluster edges is small with high probability in our expander decomposition algorithm.\footnote{We remark that the  triangle enumeration algorithm of~\cite{ChangPZ19} still works even if the guarantee on the number of inter-cluster edges in the expander decomposition only holds in expectation.}

Intuitively, the main barrier needed to be overcome is the high dependence among the $|E|$ events that an edge $\{u,v\}$ has its endpoints in different clusters.
Our strategy is to compute a  partition $V = V_D \cup V_S$ in such a way that $V_D$ already induces a low diameter clustering, and the edges incident to $V_S$ satisfy the property that if we run the the low diameter decomposition algorithm of~\cite{miller2013parallel}, the events that they are inter-cluster have sufficiently small dependence. Then we can use a variant of Chernoff bound with bounded dependence \cite{Pemmaraju01} to bound the number of inter-cluster edges with high probability.

\begin{restatable}[Low diameter decomposition]{theorem}{restatelowdiamclustering}
\label{XXX-thm-low-diam-clustering}
Let  $\beta \in (0,1)$. There is an $O\left(\poly(\log n, 1/\beta)\right)$-round  algorithm $\mathcal{A}$ that finds
a partition of the vertex set $V = V_1 \cup \cdots \cup V_x$ satisfying the following conditions w.h.p.
\begin{itemize}
    \item Each component $V_i$ has diameter  $O\left(\frac{\log^2 n}{\beta^2}\right)$.
    \item The number of inter-component edges $\left(|\partial(V_1)| + \cdots + |\partial(V_x)|\right)/2$  is at most $\beta |E|$.
\end{itemize}
\end{restatable}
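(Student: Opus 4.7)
\emph{Proof plan.} The algorithm is a two-stage variant of the Miller--Peng--Xu low-diameter decomposition (MPX). Recall the vanilla MPX procedure: each vertex $v$ independently samples a shift $\delta_v \sim \mathrm{Exp}(\beta)$ and is assigned to the cluster of the source $u$ minimizing $\dist(u,v)-\delta_u$. It is standard that, w.h.p., all shifts satisfy $\delta_v = O(\log n/\beta)$, so each cluster has radius $R = O(\log n/\beta)$ and the whole procedure runs in $O(R)$ rounds in $\CONGEST$. MPX also gives that each edge is cut with probability $O(\beta)$, so the expected number of inter-cluster edges is $O(\beta|E|)$. Our task is to promote this expectation bound to a high-probability bound without paying diameter time.

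To do this I would invoke the Chernoff-type inequality for sums of $\{0,1\}$ variables under bounded dependence (Pemmaraju--Srinivasan). For an edge $e$, its cut indicator is determined by the shifts $\delta_w$ of vertices $w$ within distance $R$ of $e$; hence two cut indicators are dependent only if the two edges lie within distance $2R$ in $G$. The degree of the dependence graph is therefore at most the maximum number of edges contained in any ball of radius $O(R)$, which in general can be as large as $|E|$ and is what prevents a naive application of the inequality.

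Following the strategy outlined in Section~\ref{sec:tech_overview}, I would preprocess by extracting a set $V_D$ of \emph{dense} vertices---those whose $O(R)$-ball already contains more than a threshold $T = \poly(\log n,1/\beta)$ edges---and assembling them directly into low-diameter clusters. Concretely, compute the edge count of every $O(R)$-ball by a depth-$R$ truncated BFS from each vertex (implementable in $O(R)$ rounds by pipelining), mark dense vertices, and merge dense vertices that lie within distance $R$ into a common cluster, iterating the merge for $O(\log n)$ levels so that all chains are contracted. This produces clusters of diameter $O(R \log n) = O(\log^2 n/\beta^2)$, matching the target. A double-counting argument bounds $|V_D|$: each dense vertex witnesses $\Omega(T)$ edges inside its ball, so $\sum_{v \in V_D} \deg(v) \le O(|E|\cdot R / T)$, which after choosing $T = \poly(\log n, 1/\beta)$ appropriately is at most $\beta|E|/2$; so we can charge all edges incident to $V_D$ that become inter-cluster to this budget.

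In the second stage, run the MPX procedure on the remaining graph $V_S = V \setminus V_D$, with edges leaving $V_S$ attached to their pre-formed $V_D$-clusters. By construction every vertex of $V_S$ has an $O(R)$-ball with at most $T$ edges, so the dependence graph of the MPX cut indicators on $V_S$ has degree at most $T = \poly(\log n,1/\beta)$. The Pemmaraju--Srinivasan Chernoff inequality then concentrates the number of cut edges among $V_S$ around its expectation $O(\beta|E|/2)$ with high probability. Combining the two stages yields at most $\beta|E|$ inter-cluster edges w.h.p. while all clusters have diameter $O(\log^2 n/\beta^2)$, and the total round complexity is $O(R) \cdot \poly\log n = \poly(\log n, 1/\beta)$. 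The main obstacle, and the delicate part of the proof, is calibrating the density threshold $T$ and the merging rule for $V_D$ so that the three requirements---diameter bound on $V_D$-clusters, small number of edges charged to $V_D$, and sufficiently bounded dependence on $V_S$---hold simultaneously; the asymmetry between ball \emph{radius} (which affects dependence) and ball \emph{edge count} (which affects both density and charging) is what forces the extra $\log n/\beta$ factor in the diameter bound, compared to vanilla MPX.
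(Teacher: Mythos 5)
Your high-level plan coincides with the paper's: pre-cluster a ``dense'' part $V_D$, run MPX, and apply the bounded-dependence Chernoff bound of Pemmaraju--Srinivasan to the remaining edges. However, two of your key steps do not hold as stated. First, the double-counting bound $\sum_{v\in V_D}\deg(v)\le O(|E|\cdot R/T)$ is false: an edge can lie in the $R$-ball of up to $n$ vertices, not $O(R)$ of them. For a clique on $\sqrt{2|E|}$ vertices, every vertex is dense for any threshold $T=\poly(\log n,1/\beta)$, so $\vol(V_D)=2|E|$, and your plan of charging \emph{all} cut edges incident to $V_D$ to a $\beta|E|/2$ budget collapses. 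The paper avoids any such charging: it builds $V_D$ so that distinct components of $V_D$ are at pairwise distance strictly greater than the MPX cluster diameter (so no $V_D$--$V_D$ edge crosses two components, and edges inside $V_D$ are simply never cut), and it puts the $V_D$--$V_S$ edges \emph{into} the concentration argument rather than charging them deterministically --- an edge is ``good'' as soon as one endpoint has a sparse ball, so having one endpoint in $V_D$ is harmless.

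Second, your diameter bound for the merged dense clusters is unjustified. With an absolute edge-count threshold and the rule ``merge dense vertices within distance $R$,'' a chain of dense vertices forces either a fully contracted cluster of diameter up to $\Theta(nR)$ (iterating the merge to a fixed point) or, after only $O(\log n)$ merge levels, distinct dense clusters that remain within distance $R$ of one another --- which then both violates the separation needed to keep their mutual edges uncut and leaves the diameter claim $O(R\log n)$ unproven. This is exactly why the paper uses a \emph{relative} density threshold ($|E(N^a(v))|\ge |E(N^{100ab}(v))|/2b$ for $v\in V_D'$) together with the invariant $\mathcal{H}$: if a merged component contained more than $2b$ pairwise-far dense centers, the ball $N^{100ab}(s)$ around the sparsest such center $s$ would contain more than $2b$ disjoint copies of $E(N^a(s))$, contradicting the density of $s$. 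That argument is what caps the number of far-apart centers at $O(b)$ and hence the component diameter at $O(ab)=O(\log^2 n/\beta^2)$; no analogous cap follows from an absolute threshold $T$. (Also note that your claimed diameter $O(R\log n)$ equals $O(\log^2 n/\beta)$, not $O(\log^2 n/\beta^2)$.) To repair the proof you would need both the relative density criterion and the explicit separation property between the components of $V_D$.
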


Adapting the algorithm of~\cite{miller2013parallel} to $\CONGEST$, in   $O\left(\frac{\log n}{\beta}\right)$ rounds we can decompose the graph into components of diameter $O\left(\frac{\log n}{\beta}\right)$ such that the number of inter-component edges is $O(\beta |E|)$ \emph{in expectation}. In Appendix~\ref{XXX-sect-low-diam-clustering} we extend this result to obtain a high probability bound and prove Theorem~\ref{XXX-thm-low-diam-clustering}.






\paragraph{Triangle Enumeration.}
Incorporating our expander decomposition algorithm (Theorem~\ref{XXX-thm-expander-decomposition}) with the triangle enumeration algorithm of~\cite{ChangPZ19,GhaffariKS17}, we immediately obtain an $\tilde{O}(n^{1/3}) \cdot 2^{O(\sqrt{\log n})}$-round algorithm for triangle enumeration.
This round complexity can be further improved to $\tilde{O}(n^{1/3})$ by adjusting the routing algorithm of Ghaffari, Kuhn, and Su~\cite{GhaffariKS17} on graphs of small mixing time. 
The main observation is their algorithm can be viewed as a distributed data structure with a trade-off between the query time and the pre-processing time. In particular, for any given constant $\epsilon > 0$, it is possible to achieve $O(\poly \log n)$ query time by spending $O(n^{\epsilon})$ time on pre-processing.

\section{Expander Decomposition} \label{XXX-sect-expander-trim}
The goal of this section is to prove Theorem~\ref{XXX-thm-expander-decomposition}.

\restateexpanderdecomposition*

For the sake of convenience, we denote
\[h(\theta) = \Theta\left(\theta^{1/3} \log^{5/3} n\right)\]
as an increasing function associated with Theorem~\ref{XXX-thm-nearly-bal-cut} such that when we run the nearly most balanced sparse cut algorithm of Theorem~\ref{XXX-thm-nearly-bal-cut} with conductance parameter $\theta$, if the output subset $C$ is non-empty, then it has $\Phi(C) \leq h(\theta)$. We note that  
\[h^{-1}(\theta) = \Theta\left({\theta^3 / \log^5 n}\right).\]
Let $\epsilon \in (0,1)$ and $k \geq 1$ be the parameters specified in Theorem~\ref{XXX-thm-expander-decomposition}.
We define the following parameters that are used in our algorithm.

\begin{description}
\item[Nearly Most Balanced Sparse Cut:] We define $\phi_0 = O(\epsilon^2 / \log^7 n)$ in such a way that when we run the nearly most balanced sparse cut algorithm with this conductance parameter, any non-empty output $C$ must satisfy  $\Phi(C) \leq h(\phi_0) = \frac{\epsilon/6} {\log {n \choose 2}}$. For each $1 \leq i \leq k$, we define $\phi_i = h^{-1}(\phi_{i-1})$.
\item[Low Diameter Decomposition:] The parameter $\beta = O(\epsilon^2 / \log n)$ for the low diameter decomposition is chosen as follows. Set $d  = O( (1/\epsilon) \log n)$ as the smallest integer such that $(1 - \epsilon/12)^{d} \cdot  2{n \choose 2} < 1$. Then we define $\beta =  (\epsilon/3) / d$.
\end{description}

We  show that an $(\epsilon, \phi)$-\decomposition\ can be constructed in $O\left(n^{2/k} \cdot \poly(1/\phi, \log n)\right)$ rounds, with conductance parameter $\phi = \phi_k =  (\epsilon / \log n)^{2^{O(k)}}$. We will later see that $\phi = \phi_k$ is the smallest conductance parameter we ever use for applying the nearly most balanced sparse cut algorithm.

\paragraph{Algorithm.} Our algorithm has two phases. In the algorithm there are three places where we remove edges from the graph, and they are tagged with \Remove{$j$}, for $1 \leq j \leq 3$ for convenience. Whenever we remove an edge $e = \{u, v\}$, we add a self loop at both $u$ and $v$, and so the degree of a vertex never changes throughout the algorithm. We never remove self loops.

At the end of the algorithm, $V$ is partitioned into connected components $V_1, \ldots, V_x$ induced by the remaining edges. To prove the correctness of the algorithm, we will show that the number of removed edges is at most $\epsilon|E|$, and $\Phi_{G\{V_i\}} \geq \phi$ for each component $V_i$.

\begin{framed}
\noindent {\bf Phase 1.} 

\medskip
The input graph is $G = (V,E)$.
\begin{enumerate}
    \item 
    Do the low diameter decomposition algorithm (Theorem~\ref{XXX-thm-low-diam-clustering}) with parameter $\beta$ on $G$. Remove all inter-cluster edges (\Remove{1}).
    \item For each connected component $U$ of the graph, do the nearly most balanced sparse cut algorithm (Theorem~\ref{XXX-thm-nearly-bal-cut}) with parameter $\phi_0$ on $G\{U\}$. Let $C$ be the output subset.
    \begin{enumerate}
        \item If $C = \emptyset$, then the subgraph $G^\ast = G\{U\}$ quits Phase~1. \label{XXX-step-2a}
        \item If $C \neq  \emptyset$ and $\vol(C) \leq (\epsilon/12) \vol(U)$, then the subgraph $G^\ast = G\{U\}$ quits Phase~1 and enters Phase~2. \label{XXX-step-2b}
        \item Otherwise, remove the cut edges $E(C, U \setminus C)$ (\Remove{2}), and then we recurse on both sides $G\{C\}$ and $G\{U \setminus C\}$ of the cut.
    \end{enumerate}
\end{enumerate}
\end{framed}

We emphasize that we do not remove the cut edges in Step~\ref{XXX-step-2b} of Phase~1.

\begin{lemma}\label{XXX-lem-recursion-depth}
The depth of the recursion of Phase~1 is at most $d$.
\end{lemma}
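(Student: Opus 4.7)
The plan is to show that each recursive call in step 2(c) shrinks the volume of the subproblem by at least a factor of $(1-\epsilon/12)$ on both sides, and then to invoke the definition of $d$. Throughout, I adopt the convention that the subset $C$ returned by the nearly most balanced sparse cut algorithm (Theorem~\ref{XXX-thm-nearly-bal-cut}) denotes the smaller side of the cut, so $\vol(C) \leq \vol(U)/2$; this is harmless since the balance guarantee of Theorem~\ref{XXX-thm-nearly-bal-cut} is symmetric in $C$ and its complement. I also use the fact that because removed edges are replaced by self-loops, the degree of every vertex (and hence $\vol(U)$ for every subset $U$ ever considered) equals the corresponding degree in the original graph $G$.

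First I would observe that whenever step 2(c) is executed on a subproblem $G\{U\}$, both child subproblems $G\{C\}$ and $G\{U \setminus C\}$ have volume strictly less than $(1-\epsilon/12)\vol(U)$. The trigger condition $\vol(C) > (\epsilon/12)\vol(U)$ immediately yields $\vol(U \setminus C) < (1-\epsilon/12)\vol(U)$, and the convention $\vol(C) \leq \vol(U)/2$ combined with $\epsilon/12 < 1/2$ yields $\vol(C) < (1-\epsilon/12)\vol(U)$.

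By induction on the recursion level $\ell$, every subproblem $U$ at level $\ell$ therefore satisfies
\[
\vol(U) \leq (1-\epsilon/12)^{\ell} \cdot \vol(V) \leq (1-\epsilon/12)^{\ell} \cdot 2\binom{n}{2},
\]
using $\vol(V) = 2|E| \leq 2\binom{n}{2}$. By the definition of $d$, at level $\ell = d$ this upper bound is strictly less than $1$; since $\vol(U)$ is a nonnegative integer, this forces $\vol(U) = 0$. Thus $U$ consists only of isolated vertices of $G$ (or is empty), and such a degenerate subproblem cannot execute step 2(c), so no recursion into level $d+1$ ever occurs. Hence the recursion depth is at most $d$. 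The argument has no substantive obstacle — the only mild care needed is the convention specifying which side of the output cut is called $C$, and noting that the self-loop construction keeps $\vol$ invariant throughout the recursion.
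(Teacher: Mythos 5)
Your proof is correct and follows essentially the same argument as the paper's: each recursive split in Step~2(c) reduces the volume of every child by a factor of at least $(1-\epsilon/12)$, so after $d$ levels the volume would fall below $1$, which is impossible for a component that can still recurse. Your extra care in fixing the convention that $C$ denotes the smaller side of the cut (and in noting that self-loops keep $\vol$ invariant) just makes explicit what the paper leaves implicit.
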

\begin{proof}
 Suppose  there is still a component $U$ entering the depth $d+1$ of the recursion of Phase~1. Then according to the threshold for $\vol(C)$ specified in Step~\ref{XXX-step-2b}, we infer that $\vol(U) \leq (1 - \epsilon/12)^{d} \vol(V) < 1$ by our choice of $d$, which is impossible.
\end{proof}




\begin{framed}
\noindent {\bf Phase 2.} 

\medskip
The input graph is $G^\ast = G\{U\}$.
Define $\tau \bydef \left((\epsilon/6) \cdot \vol(U) \right)^{1/k}$.
Define the sequence: $m_1 \bydef (\epsilon/6) \cdot \vol(U)$, and $m_i \bydef m_{i-1} / \tau$, for each $1 < i \leq k+1$.
Initialize $L \gets 1$ and $U' \gets U$.
Repeatedly do the following procedure.

\begin{itemize}
\item Do the nearly most balanced sparse cut algorithm (Theorem~\ref{XXX-thm-nearly-bal-cut}) with parameter $\phi_L$ on $G\{U'\}$. Let $C$ be the output subset. Note that $\Phi_{G\{U'\}}(C) \leq \phi_{L-1}$.
\begin{itemize}
\item If $C = \emptyset$, then the subgraph $G\{U'\}$ quits Phase~2.
\item If $C \neq \emptyset$ and $\vol(C) \leq m_L / (2 \tau)$, then update $L \gets L+1$. 
\item Otherwise, update $U' \gets U' \setminus C$, and remove all edges incident to $C$ (\Remove{3}).
\end{itemize}
\end{itemize}
\end{framed}

Intuitively, in Phase~2  we keep calling the nearly most balanced sparse cut algorithm to find a cut $C$ and remove it. If we find a cut  $C$ that has volume greater than $m_L / (2\tau)$, then we make a good progress. If $\vol(C) \leq m_L / (2 \tau)$, then we learn that the volume of the most balanced sparse cut of conductance at most $\phi_L$ is at most $2 \cdot m_L / (2\tau) = m_L / \tau  = m_{L+1}$ by Theorem~\ref{XXX-thm-nearly-bal-cut}, and so we move on to the next level by setting  $L \gets L+1$. 

The maximum possible level $L$ is $k$. Since by definition $m_k / (2\tau) = 1/2 < 1$, there is no possibility to increase $L$ to $k+1$. Once we reach $L = k$, we will repeatedly run the nearly most balanced sparse cut algorithm until we get $C = \emptyset$ and quit.

When we remove a cut $C \neq \emptyset$ in Phase~2, each $u \in C$ becomes an isolated vertex with $\deg(u)$ self loops, as all edges incident to $u$ have been removed, and so in the final decomposition $V = V_1 \cup \cdots \cup V_x$ we have $V_i = \{u\}$ for some $i$. 
We emphasize that we only do the edge removal when $\vol(C) > m_L / (2 \tau)$.
Lemma~\ref{XXX-lem-expander-trim-vol}  bounds the volume of the cuts found during Phase~2.

\begin{lemma}\label{XXX-lem-expander-trim-vol}
For each $1 \leq i \leq k$, define $C_i$ as the union of all subsets $C$ found in Phase~2 when $L \geq i$. Then either $C_i = \emptyset$ or $\vol(C_i) \leq m_i$.
\end{lemma}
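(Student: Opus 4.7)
The plan is to combine the most-balanced guarantee of Theorem~\ref{XXX-thm-nearly-bal-cut}, invoked at the moment the level increments from $i-1$ to $i$, with an inductive ``union argument'' that assembles the cuts found at all levels $\geq i$ into a single sparse cut of a fixed reference graph. Assume $C_i \neq \emptyset$ (otherwise the lemma is immediate), and let $\hat{U}_i$ denote $U'$ at the first moment we enter level $i$. At that moment the sparse-cut call at level $i-1$ (with parameter $\phi_{i-1}$) returned a cut $C$ with $\vol(C) \leq m_{i-1}/(2\tau) = m_i/2$. Taking $C$ to be the smaller side, $\bal(C) \leq m_i/(2\vol(\hat{U}_i))$, and combining with the balance guarantee $\bal(C) \geq \min\{b/2, 1/48\}$ (where $b$ is the balance of the most-balanced $\phi_{i-1}$-sparse cut of $G\{\hat{U}_i\}$) yields a dichotomy: either (i)~$\vol(\hat{U}_i) = O(m_i)$, which already gives $\vol(C_i) \leq \vol(\hat{U}_i) = O(m_i)$, or (ii)~every $\phi_{i-1}$-sparse cut of $G\{\hat{U}_i\}$ has smaller side of volume at most $m_i$. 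I focus on case (ii).

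I would then prove the following union step: if $A \subseteq \hat{U}_i$ is $\phi_{i-1}$-sparse in $G\{\hat{U}_i\}$ and $C' \subseteq \hat{U}_i \setminus A$ is $\phi_{i-1}$-sparse in $G\{\hat{U}_i \setminus A\}$, then $A \cup C'$ is $\phi_{i-1}$-sparse in $G\{\hat{U}_i\}$. The key identity is that the edges between $C'$ and $A$, which appear only as self-loops in $G\{\hat{U}_i \setminus A\}$, cancel exactly against the portion of $\partial_{G\{\hat{U}_i\}}(A)$ that is absorbed into $A \cup C'$. Since every cut found at level $L \geq i$ satisfies $\Phi \leq \phi_{L-1} \leq \phi_{i-1}$ in its local subgraph, iterating the union step over the disjoint sequence of \emph{removed} cuts at levels $\geq i$ shows that their union $R_i$ is $\phi_{i-1}$-sparse in $G\{\hat{U}_i\}$. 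By case (ii), its smaller side in $G\{\hat{U}_i\}$ has volume at most $m_i$.

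To conclude $\vol(R_i) = O(m_i)$ rather than merely the smaller side of $R_i$ being small, I would argue inductively that the partial union $R_i^{(j)}$ stays on the smaller side throughout the removals. Since each removed cut $C^{(j)}$ is itself the smaller side of its own subgraph, $\vol(C^{(j)}) \leq (\vol(\hat{U}_i) - \vol(R_i^{(j-1)}))/2$; if the partial union is about to cross half of $\vol(\hat{U}_i)$ while its own volume still exceeds $m_i$, combining this per-cut bound with the most-balanced bound on $\vol(\hat{U}_i \setminus R_i^{(j)})$ forces $\vol(\hat{U}_i) = O(m_i)$, collapsing back into case (i). Finally, the non-removed cuts (at most one per level $L \in [i, k-1]$, each of volume $\leq m_{L+1}/2$) contribute total volume at most $\sum_{L=i}^{k-1} m_{L+1}/2 \leq m_{i+1} = m_i/\tau$ by the geometric decay of $m_L$. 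Combining, $\vol(C_i) \leq \vol(R_i) + m_i/\tau \leq m_i$, after absorbing the $O(\cdot)$ slack into the choice of constants implicit in $\phi_0$, $\tau$, and $m_1$.

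The main obstacle is the no-jump step in the third paragraph: a priori the cumulative union $R_i^{(j)}$ could cross $\vol(\hat{U}_i)/2$ and land on the larger side, where the most-balanced bound only constrains $\vol(\hat{U}_i \setminus R_i)$ rather than $\vol(R_i)$ itself. Ruling this out requires the per-cut smaller-side constraint on each $C^{(j)}$ to interact precisely with the bound on $\phi_{i-1}$-sparse cuts of $G\{\hat{U}_i\}$, and this is the step where the explicit constants in the choice of $\tau$ and $m_1$ have to be tracked most carefully.
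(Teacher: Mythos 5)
Your proposal follows the same route as the paper's proof, which is only a few sentences long: anchor at the moment the algorithm enters level $i$, use the nearly-most-balanced guarantee of Theorem~\ref{XXX-thm-nearly-bal-cut} to conclude that the most balanced cut of conductance at most $\phi_{i-1}$ in the reference graph $G'=G\{U'\}$ at that moment has volume at most $2\vol(C)\le m_i$, observe that every cut found at levels $\ge i$ has conductance at most $h(\phi_L)\le\phi_{i-1}$, and assert that the union of these cuts is therefore itself a cut of $G'$ of conductance at most $\phi_{i-1}$, whence $\vol(C_i)\le m_i$. Everything you add on top of this skeleton --- the explicit self-loop bookkeeping showing that sparsity survives taking unions across the trimmed graphs $G\{U'\}$, the case split on whether $\min\{b/2,1/48\}$ equals $b/2$, and the separate geometric-sum accounting of the non-removed cuts that trigger level increments --- is elaboration of steps the paper compresses into ``and so the union of them $C_i$ is also a cut of $G'$ with conductance at most $\phi_{i-1}$.''

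Two caveats. First, your argument only delivers $\vol(C_i)=O(m_i)$ (e.g., your case (i) gives $\vol(\hat U_i)\le 24\,m_i$ via the $1/48$ branch), whereas the lemma is stated and used downstream with the exact bound $m_i$ (the ``at most $2\tau$ iterations per level'' count and the $(\epsilon/3)|E|$ budget for \Remove{3} both quote $m_i$ verbatim); this is repairable by adjusting constants, as you say, but it is a genuine weakening of the statement. Second, the ``no-jump'' step you flag as the main obstacle --- ruling out that the cumulative union crosses half of $\vol(\hat U_i)$ so that the most-balanced bound only controls the complement --- is left unresolved in your write-up. To be fair, the paper's proof does not confront this either: it passes from ``$C_i$ has conductance at most $\phi_{i-1}$'' to ``$\vol(C_i)\le m_i$'' without verifying that $C_i$ is the smaller side of the cut. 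So you have correctly identified the one place where the published argument is silently incomplete, but your proposal does not yet close it; as it stands the proof is a faithful, more explicit rendering of the paper's approach with one acknowledged open step rather than a finished alternative.
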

\begin{proof}
We first consider the case of $i=1$. Observe that the graph $G^\ast = G\{U\}$ satisfies the property that the most balanced sparse cut of conductance at most $\phi_0$ has balance at most $2(\epsilon/12) = \epsilon/6$, since otherwise it does not meet the condition for entering Phase~2. Note that all cuts we find during Phase~2 have conductance at most $\phi_0$, and so the union of them $C_1$  is also a cut of $G^\ast$ with conductance at most $\phi_0$. This implies that  $\vol(C_1) \leq (\epsilon/6) \vol(U) = m_1$. 

The proof for the case of $2 \leq i \leq k$ is exactly the same, as the condition for increasing $L$ is to have $\vol(C) \leq m_L / (2\tau)$. Let $G' = G\{U'\}$ be the graph considered in the iteration when we increase $L = i-1$ to $L = i$. The existence of such a cut $C$ of $G'$ implies that the most balanced sparse cut of conductance at most $\phi_{i-1}$ of $G'$ has volume at most $2\vol(C) \leq m_{i-1} / \tau = m_{i}$. Similarly, 
note that all cuts we find when $L \geq i$ have conductance at most $\phi_{i-1}$, and so the union of them $C_i$  is also a cut of $G'$ with conductance at most $\phi_{i-1}$. This implies that  $\vol(C_i) \leq m_{i}$. 
\end{proof}

\paragraph{Conductance of Remaining Components.}
For each $u \in V$, there are two possible ways for $u$ to end the algorithm: 
\begin{itemize}
    \item During Phase~1 or Phase~2,  the output of the  nearly most balanced sparse cut algorithm on the component that $u$ belongs to is $C = \emptyset$. In this case, the component that $u$ belongs to  becomes a component $V_i$ in the final decomposition  $V = V_1 \cup \cdots \cup V_x$.
    If $\phi'$ is the conductance parameter used in the  nearly most balanced sparse cut algorithm, then $\Phi(G\{V_i\}) \geq \phi'$. Note that $\phi' \geq \phi_k = \phi$.
    \item During Phase~2, $u \in C$ for the output $C$ of the nearly most balanced sparse cut algorithm. In this case, $u$ itself becomes a  component $V_i = \{u\}$ in the final decomposition  $V = V_1 \cup \cdots \cup V_x$. Trivially, we have $\Phi(G\{V_i\}) \geq \phi$.
\end{itemize}
Therefore, we conclude that 
each component $V_i$ in the final decomposition $V = V_1 \cup \cdots \cup V_x$ satisfies that $\Phi(G\{V_i\}) \geq \phi$.

\paragraph{Number of Removed Edges.} There are three places in the algorithm where we remove edges. We show that, for each $1 \leq j \leq 3$, the number of edges removed due to \Remove{$j$} is at most $(\epsilon/3)|E|$, and so the total number of inter-component edges in the final decomposition $V = V_1 \cup \cdots \cup V_x$ is at most $\epsilon |E|$.
\begin{enumerate}
    \item By Lemma~\ref{XXX-lem-recursion-depth}, the depth of recursion of Phase~1 is at most $d$. For each $i = 1$ to $d$, the number of edges removed due to the low diameter decomposition algorithm during depth $i$ of the recursion is at most $\beta |E|$. By our choice of $\beta$, the number of edges removed due to \Remove{1} is at most $d \cdot \beta|E| \leq (\epsilon/3)|E|$.
    \item For each edge $e \in E(C, U \setminus C)$ removed due to the nearly most balanced sparse cut algorithm in Phase~1, we charge the cost of the edge removal to some pairs $(v,e)$ in the following way.
    If $\vol(C) < \vol(U\setminus C)$, for each $v \in C$, and for each edge $e$ incident to $v$, we charge the amount $|E(C, U \setminus C)| / \vol(C)$ to $(v,e)$; otherwise, for each  $v \in U \setminus  C$, and for each edge $e$ incident to $v$, we charge the amount $|E(C, U \setminus C)| / \vol(U \setminus C)$ to  $(v,e)$.
    Note that each pair $(v,e)$ is being charged for at most $\log |E|$ times throughout the algorithm, and the amount per charging is at most $h(\phi_0)$.
    Therefore, the number of edges removed due to \Remove{2} is at most $(\log |E|) \cdot h(\phi_0) \cdot 2|E| \leq (\epsilon/3)|E|$
    by our choice of $\phi_0$.

\item By Lemma~\ref{XXX-lem-expander-trim-vol}, the summation of $\vol(C)$ over all cuts $C$ in $G^\ast = G\{U\}$ that are found and removed during Phase~2  due to \Remove{3} is at most $m_1 = (\epsilon/6) \vol(U) \le (\epsilon/3) |E|$. 
\end{enumerate}

\paragraph{Round Complexity.} During Phase~1, each vertex participates in at most $d = O( (1/\epsilon) \log n)$ times the nearly most balanced sparse cut algorithm and the low diameter decomposition algorithm. By our choice of parameters $\beta = O(\epsilon^2 / \log n)$ and $\phi_0 = O(\epsilon^2 / \log^7 n)$, the round complexity of both algorithms are $O(\poly(1/\epsilon, \log n))$, as we note that whenever we run the  nearly most balanced sparse cut algorithm, the diameter of each connected component is at most $O\left( \frac{\log^2 n}{\beta^2}\right) = O\left( \frac{\log^4 n }{\epsilon^4}\right)$.

For Phase~2, Lemma~\ref{XXX-lem-expander-trim-vol} guarantees that for each $1 \leq i \leq k$ the algorithm can stay $L = i$ for at most $2 \tau$ iterations. If we neither increase $L$ nor quit Phase~2 for $2 \tau$ iterations, then we have $\vol(C_L) > m_L$, which is impossible. 
Therefore, the round complexity for Phase~2 can be upper bounded by
\[2\tau \sum_{i=1}^k O(\poly(1/\phi_i, \log n)) \leq O\left(n^{2/k} \cdot \poly(1/\phi, \log n)\right).\]

During Phase~2, it is possible that the graph $G\{U'\}$ be disconnected or has a large diameter, but we are fine since we can use all edges in $G^\ast$ for communication during a sparse cut computation, and  the diameter of $G^\ast$ is at most $O\left( \frac{\log^2 n}{\beta^2}\right) = O\left( \frac{\log^4 n }{\epsilon^4}\right)$.

\section{Triangle Enumeration} \label{XXX-triangle-enum}

We show how to derive Theorem~\ref{XXX-thm-triangle-enum} by combining Theorem~\ref{XXX-thm-expander-decomposition} with other known results in~\cite{ChangPZ19,GhaffariKS17}.

\restatetriangle*

Chang, Pettie, and Zhang~\cite{ChangPZ19} showed that given an  $(\epsilon, \phi)$-\decomposition\ $V = V_1 \cup \ldots \cup V_x$ with $\epsilon \leq 1/6$, there is an algorithm $\mathcal{A}$ that finds an edge subset $E^\ast \subseteq E$ with  $|E^\ast| \leq |E|/2$ such that each triangle in $G$ is detected by some vertex during the execution of $\mathcal{A}$, except the triangles whose three edges are all  within  $|E^\ast|$. The algorithm $\mathcal{A}$ has to solve  $\tilde{O}(n^{1/3})$ times the following routing problem in each $G[V_i]$. 
Given a set of routing requests where each vertex $v$ is a source or a destination for at most $O(\deg(v))$ messages of $O(\log n)$ bits, the goal is to deliver all messages to their destinations. Ghaffari, Khun, and Su~\cite{GhaffariKS17} showed that this routing problem can be solved in $2^{O(\sqrt{\log n \log \log n})} \cdot O(\mix)$ rounds. This  was later improved to $2^{O(\sqrt{\log n })}  \cdot O(\mix)$ by Ghaffari and Li~\cite{GhaffariL2018}.

 Applying our distributed expander decomposition algorithm (Theorem~\ref{XXX-thm-expander-decomposition}), we can find  an $(\epsilon, \phi)$-\decomposition\ with $\epsilon \leq 1/6$ and  $\phi =  1/ O(\poly \log n)$  in $o(n^{1/3})$ rounds by selecting $k$ to be a sufficiently large constant.
The mixing time $\mix$ of each component $G[V_i]$ is at most $O\left(\frac{\log n}{\phi^2}\right) = O(\poly \log n)$.
 Then we apply the above algorithm $\mathcal{A}$, and it takes $2^{O(\sqrt{\log n })}  \cdot O(\mix) = 2^{O(\sqrt{\log n })}$ rounds with the routing algorithm of Ghaffari and Li~\cite{GhaffariL2018}. After that, we recurse on the edge set $E^\ast$, and we are done enumerating all triangles after $O(\log n)$ iterations. This concludes the $O(n^{1/3}) \cdot 2^{O(\sqrt{\log n})}$-round algorithm for triangle enumeration.

To improve the complexity to $\tilde{O}(n^{1/3})$, we make the  observation that the routing algorithm of~\cite{GhaffariKS17} can be seen as a distributed data structure with the following properties.
\begin{description}
\item[Parameters:]The parameter $k$ is a positive integer that specifies the depth of the hierarchical structure in the routing algorithm. Given $k$,  define $\beta$ as the number such that  $k = \log_{\beta} m$, where  $m$ is the total number of edges. 
\item[Pre-processing Time:] The algorithm for building the data structure consists of two parts. The round complexity for building the hierarchical structure is  $O(k \beta)  (\log n)^{O(k)} \cdot O(\mix)$~\cite[Lemma 3.2]{GhaffariKS17}. The round complexity for adding the portals is  $O(k  \beta^2  \log n) \cdot O(\mix)$~\cite[Lemma 3.3]{GhaffariKS17}
\item[Query Time:] After building the data structure, each routing task can be solved in $(\log n)^{O(k)} \cdot O(\mix)$ rounds~\cite[Lemma 3.4]{GhaffariKS17}.
\end{description}

The parameter $k$ can be chosen as any positive integer. In~\cite{GhaffariKS17} they used $k = \Theta(\sqrt{\log n / \log \log n} )$ to balance the pre-processing time and the query time to show that the routing task can be solved in $2^{O(\sqrt{\log n \log \log n})}  \cdot O(\mix)$ rounds. This round complexity was later improved to $2^{O(\sqrt{\log n })}  \cdot O(\mix)$ in~\cite{GhaffariL2018}.
We however note that the algorithm of~\cite{GhaffariL2018} does not admit a trade-off as above. The main reason is their special treatment of the base layer $G_0$ of the hierarchical structure. In~\cite{GhaffariL2018}, $G_0$ is a random graph with degree $2^{O(\sqrt{\log n})}$, and  simulating one round in $G_0$ already costs $2^{O(\sqrt{\log n})} \cdot \mix$ rounds in the original graph $G$.


In the triangle enumeration algorithm $\mathcal{A}$, we need to query this distributed data structure for $\tilde{O}(n^{1/3})$ times. It is possible to set $k$ to be a large enough constant so that the pre-processing time costs only $o(n^{1/3})$ rounds, while the query time is still $O(\poly \log n)$.  This implies that the triangle enumeration problem can be solved in  $\tilde{O}(n^{1/3})$  rounds.

\section{Open Problems}

In this paper, we designed a new expander decomposition algorithm that get rids of the low-arboricity part needed in~\cite{ChangPZ19}, and this implies that triangle enumeration can be solved in $\tilde{O}(n^{1/3})$ rounds, which is optimal up to a polylogarithmic factor.

Many interesting problems are left open. In particular, the current exponent of the polylogarithmic gap between the lower and the upper bounds is enormous.
The huge exponent is caused by the inefficient trade-off between the parameters in the (i) hierarchical routing structure  and the (ii) expander decomposition algorithm.
Improving the current state of the art of (i) and (ii) will lead to an improved upper bound for triangle enumeration, as well as several other problems~\cite{Daga2019distributed,GhaffariKS17,GhaffariL2018}.

We note that the lower bound graph underlying the $\Omega(n^{1/3} / \log n)$ lower bound~\cite{IzumiL17,PanduranganRS18} for triangle enumeration is the \Erdos-\Renyi\ random graph $\mathcal{G}(n,p)$ with $p=1/2$. Hence it does not rule out the possibility of an $n^{(1/3) - \Omega(1)}$-round $\CONGEST$ algorithm for the enumeration problem on sparse graphs (i.e. $m = o(n^2)$) or the detection problem. It remains an open problem to find the asymptotically optimal round complexity of these problems in $\CONGEST$. For the case of $\CLIQUE$, efficient algorithms for these problems are already known: triangle detection can be solved in $\tilde{O}(n^{1-(2/\omega) + o(1)}) = o(n^{0.158})$~time~\cite{Censor2016}, triangle enumeration on $m$-edge graphs can be solved in $\max\{O(m/n^{5/3}), O(1)\}$~time~\cite{censorhillel_et_al:LIPIcs:2018:10064,PanduranganRS18}.

We  would also like to further investigate the power of the distributed expander decomposition. Can this tool be applied to other distributed problems than triangle detection and enumeration? It has been known that this technique can be applied to give a sublinear-time distributed algorithm for {\em exact} minimum cut~\cite{Daga2019distributed}. We expect to see more applications of distributed expander decomposition in the future.
\section*{Acknowledgment}
We thank Seth Pettie for very useful discussion.
\bibliographystyle{abbrv}
\bibliography{references}

\begin{thebibliography}{10}

\bibitem{AbboudCK2017}
A.~Abboud, K.~Censor-Hillel, S.~Khoury, and C.~Lenzen.
\newblock Fooling views: A new lower bound technique for distributed
  computations under congestion.
\newblock {\em arXiv preprint arXiv:1711.01623}, 2017.

\bibitem{AroraBS2015}
S.~Arora, B.~Barak, and D.~Steurer.
\newblock Subexponential algorithms for unique games and related problems.
\newblock {\em J. ACM}, 62(5):42:1--42:25, Nov. 2015.

\bibitem{SanjeevSU09}
S.~Arora, S.~Rao, and U.~Vazirani.
\newblock Expander flows, geometric embeddings and graph partitioning.
\newblock {\em J. ACM}, 56(2):5:1--5:37, Apr. 2009.

\bibitem{doi:10.1002/rsa.3240050305}
B.~Awerbuch, B.~Berger, L.~Cowen, and D.~Peleg.
\newblock Low-diameter graph decomposition is in nc.
\newblock {\em Random Structures \& Algorithms}, 5(3):441--452, 1994.

\bibitem{Censor2016}
K.~Censor-Hillel, P.~Kaski, J.~H. Korhonen, C.~Lenzen, A.~Paz, and J.~Suomela.
\newblock Algebraic methods in the congested clique.
\newblock {\em Distributed Computing}, 2016.

\bibitem{censorhillel_et_al:LIPIcs:2018:10064}
K.~Censor-Hillel, D.~Leitersdorf, and E.~Turner.
\newblock {Sparse Matrix Multiplication and Triangle Listing in the Congested
  Clique Model}.
\newblock In J.~Cao, F.~Ellen, L.~Rodrigues, and B.~Ferreira, editors, {\em
  22nd International Conference on Principles of Distributed Systems (OPODIS
  2018)}, volume 125 of {\em Leibniz International Proceedings in Informatics
  (LIPIcs)}, pages 4:1--4:17, Dagstuhl, Germany, 2018. Schloss
  Dagstuhl--Leibniz-Zentrum fuer Informatik.

\bibitem{ChangPZ19}
Y.-J. {Chang}, S.~{Pettie}, and H.~{Zhang}.
\newblock {Distributed Triangle Detection via Expander Decomposition}.
\newblock In {\em Proceedings of the 30th Annual {ACM-SIAM} Symposium on
  Discrete Algorithms {(SODA)}}, pages 821--840, 2019.

\bibitem{ChuGPSSW18}
T.~Chu, Y.~Gao, R.~Peng, S.~Sachdeva, S.~Sawlani, and J.~Wang.
\newblock Graph sparsification, spectral sketches, and faster resistance
  computation, via short cycle decompositions.
\newblock In {\em 59th {IEEE} Annual Symposium on Foundations of Computer
  Science, {FOCS} 2018, Paris, France, October 7-9, 2018}, pages 361--372,
  2018.

\bibitem{CohenKPPRSV17}
M.~B. Cohen, J.~A. Kelner, J.~Peebles, R.~Peng, A.~B. Rao, A.~Sidford, and
  A.~Vladu.
\newblock Almost-linear-time algorithms for markov chains and new spectral
  primitives for directed graphs.
\newblock In {\em Proceedings of the 49th Annual {ACM} {SIGACT} Symposium on
  Theory of Computing, {STOC} 2017, Montreal, QC, Canada, June 19-23, 2017},
  pages 410--419, 2017.

\bibitem{Daga2019distributed}
M.~Daga, M.~Henzinger, D.~Nanongkai, and T.~Saranurak.
\newblock Distributed edge connectivity in sublinear time.
\newblock {\em arXiv preprint arXiv:1904.04341}, 2019.
\newblock To appear at STOC'19.

\bibitem{DolevLP12}
D.~Dolev, C.~Lenzen, and S.~Peled.
\newblock ``{T}ri, tri again'': {F}inding triangles and small subgraphs in a
  distributed setting.
\newblock In {\em Proceedings 26th International Symposium on Distributed
  Computing ({DISC})}, pages 195--209, 2012.

\bibitem{DruckerKO14}
A.~Drucker, F.~Kuhn, and R.~Oshman.
\newblock On the power of the congested clique model.
\newblock In {\em Proceedings 33rd {ACM} Symposium on Principles of Distributed
  Computing ({PODC})}, pages 367--376, 2014.

\bibitem{FischerGO18}
O.~Fischer, T.~Gonen, F.~Kuhn, and R.~Oshman.
\newblock Possibilities and impossibilities for distributed subgraph detection.
\newblock In {\em Proceedings of the 30th on Symposium on Parallelism in
  Algorithms and Architectures (SPAA)}, pages 153--162, New York, NY, USA,
  2018. ACM.

\bibitem{GhaffariKS17}
M.~Ghaffari, F.~Kuhn, and H.-H. Su.
\newblock Distributed {MST} and routing in almost mixing time.
\newblock In {\em Proceedings 37th ACM Symposium on Principles of Distributed
  Computing (PODC)}, pages 131--140, 2017.

\bibitem{GhaffariL2018}
M.~Ghaffari and J.~Li.
\newblock New distributed algorithms in almost mixing time via transformations
  from parallel algorithms.
\newblock In U.~Schmid and J.~Widder, editors, {\em Proceedings 32nd
  International Symposium on Distributed Computing {(DISC)}}, volume 121 of
  {\em Leibniz International Proceedings in Informatics (LIPIcs)}, pages
  31:1--31:16, Dagstuhl, Germany, 2018. Schloss Dagstuhl--Leibniz-Zentrum fuer
  Informatik.

\bibitem{GhaffariN2018}
M.~Ghaffari and K.~Nowicki.
\newblock Congested clique algorithms for the minimum cut problem.
\newblock In {\em Proceedings of the 2018 ACM Symposium on Principles of
  Distributed Computing}, PODC '18, pages 357--366, New York, NY, USA, 2018.
  ACM.

\bibitem{GoldreichR1999}
O.~Goldreich and D.~Ron.
\newblock A sublinear bipartiteness tester for bounded degree graphs.
\newblock {\em Combinatorica}, 19(3):335--373, Mar 1999.

\bibitem{haeupler2016faster}
B.~Haeupler and D.~Wajc.
\newblock A faster distributed radio broadcast primitive.
\newblock In {\em Proceedings of the 2016 ACM Symposium on Principles of
  Distributed Computing ({PODC})}, pages 361--370. ACM, 2016.

\bibitem{IzumiL17}
T.~Izumi and F.~Le~Gall.
\newblock Triangle finding and listing in {CONGEST} networks.
\newblock In {\em Proceedings 37th ACM Symposium on Principles of Distributed
  Computing (PODC)}, pages 381--389, 2017.

\bibitem{JerrumS89}
M.~Jerrum and A.~Sinclair.
\newblock Approximating the permanent.
\newblock {\em SIAM Journal on Computing}, 18(6):1149--1178, 1989.

\bibitem{JurdzinskiN2018}
T.~Jurdzi{\'n}ski and K.~Nowicki.
\newblock {MST} in {$O(1)$} rounds of congested clique.
\newblock In {\em Proceedings 29th Annual ACM-SIAM Symposium on Discrete
  Algorithms (SODA)}, pages 2620--2632, 2018.

\bibitem{KannanVV04}
R.~Kannan, S.~Vempala, and A.~Vetta.
\newblock On clusterings: Good, bad and spectral.
\newblock {\em J. ACM}, 51(3):497--515, May 2004.

\bibitem{KawarabayashiT15}
K.-I. Kawarabayashi and M.~Thorup.
\newblock Deterministic edge connectivity in near-linear time.
\newblock {\em J. ACM}, 66(1):4:1--4:50, Dec. 2018.

\bibitem{KelnerLOS14}
J.~A. Kelner, Y.~T. Lee, L.~Orecchia, and A.~Sidford.
\newblock An almost-linear-time algorithm for approximate max flow in
  undirected graphs, and its multicommodity generalizations.
\newblock In {\em Proceedings of the Twenty-Fifth Annual {ACM-SIAM} Symposium
  on Discrete Algorithms, {SODA} 2014, Portland, Oregon, USA, January 5-7,
  2014}, pages 217--226, 2014.

\bibitem{KuhnM15}
F.~Kuhn and A.~R. Molla.
\newblock Distributed sparse cut approximation.
\newblock In {\em Proceedings 19th International Conference on Principles of
  Distributed Systems ({OPODIS})}, pages 10:1--10:14, 2015.

\bibitem{LeGall14}
F.~Le~Gall.
\newblock Powers of tensors and fast matrix multiplication.
\newblock In {\em Proceedings of the 39th International Symposium on Symbolic
  and Algebraic Computation}, ISSAC '14, pages 296--303, New York, NY, USA,
  2014. ACM.

\bibitem{Linial1993}
N.~Linial and M.~Saks.
\newblock Low diameter graph decompositions.
\newblock {\em Combinatorica}, 13(4):441--454, Dec 1993.

\bibitem{miller2013parallel}
G.~L. Miller, R.~Peng, and S.~C. Xu.
\newblock Parallel graph decompositions using random shifts.
\newblock In {\em Proceedings of the twenty-fifth annual ACM symposium on
  Parallelism in algorithms and architectures ({SPAA})}, pages 196--203. ACM,
  2013.

\bibitem{NanongkaiS17}
D.~Nanongkai and T.~Saranurak.
\newblock Dynamic spanning forest with worst-case update time: adaptive, las
  vegas, and $o(n^{1/2-\epsilon})$-time.
\newblock In {\em Proceedings of the 49th Annual {ACM} {SIGACT} Symposium on
  Theory of Computing, {STOC} 2017, Montreal, QC, Canada, June 19-23, 2017},
  pages 1122--1129, 2017.

\bibitem{NanongkaiSW17}
D.~Nanongkai, T.~Saranurak, and C.~Wulff-Nilsen.
\newblock Dynamic minimum spanning forest with subpolynomial worst-case update
  time.
\newblock In {\em Proceedings of IEEE 58th Annual Symposium on Foundations of
  Computer Science (FOCS)}, pages 950--961. IEEE, 2017.

\bibitem{OrecchiaV11}
L.~Orecchia and N.~K. Vishnoi.
\newblock Towards an sdp-based approach to spectral methods: {A}
  nearly-linear-time algorithm for graph partitioning and decomposition.
\newblock In {\em Proceedings of the Twenty-Second Annual {ACM-SIAM} Symposium
  on Discrete Algorithms, {SODA} 2011, San Francisco, California, USA, January
  23-25, 2011}, pages 532--545, 2011.

\bibitem{OrecchiaA14}
L.~Orecchia and Z.~A. Zhu.
\newblock Flow-based algorithms for local graph clustering.
\newblock In {\em Proceedings of the Twenty-fifth Annual ACM-SIAM Symposium on
  Discrete Algorithms}, SODA '14, pages 1267--1286, Philadelphia, PA, USA,
  2014. Society for Industrial and Applied Mathematics.

\bibitem{PanduranganRS18}
G.~Pandurangan, P.~Robinson, and M.~Scquizzato.
\newblock On the distributed complexity of large-scale graph computations.
\newblock In {\em Proceedings 30th ACM Symposium on Parallelism in Algorithms
  and Architecture (SPAA)}, 2018.

\bibitem{PatrascuT07}
M.~P{\v a}tra{\c s}cu and M.~Thorup.
\newblock Planning for fast connectivity updates.
\newblock In {\em Proceedings 48th IEEE Symposium on Foundations of Computer
  Science (FOCS)}, pages 263--271, 2007.

\bibitem{Pemmaraju01}
S.~V. Pemmaraju.
\newblock Equitable coloring extends chernoff-hoeffding bounds.
\newblock In M.~Goemans, K.~Jansen, J.~D.~P. Rolim, and L.~Trevisan, editors,
  {\em Approximation, Randomization, and Combinatorial Optimization: Algorithms
  and Techniques}, pages 285--296, Berlin, Heidelberg, 2001. Springer Berlin
  Heidelberg.

\bibitem{RaghavendraS2010}
P.~Raghavendra and D.~Steurer.
\newblock Graph expansion and the unique games conjecture.
\newblock In {\em Proceedings 42nd ACM Symposium on Theory of Computing
  (STOC)}, pages 755--764, 2010.

\bibitem{SaranurakW19}
T.~Saranurak and D.~Wang.
\newblock Expander decomposition and pruning: Faster, stronger, and simpler.
\newblock In {\em Proceedings of the 30th Annual {ACM}-{SIAM} Symposium on
  Discrete Algorithms ({SODA})}, pages 2616--2635, 2019.

\bibitem{DasSarma15}
A.~D. Sarma, A.~R. Molla, and G.~Pandurangan.
\newblock Distributed computation of sparse cuts via random walks.
\newblock In {\em Proceedings 16th International Conference on Distributed
  Computing and Networking (ICDCN)}, pages 6:1--6:10, 2015.

\bibitem{SpielmanS08}
D.~A. Spielman and N.~Srivastava.
\newblock Graph sparsification by effective resistances.
\newblock In {\em Proceedings 40th ACM Symposium on Theory of Computing
  (STOC)}, pages 563--568, 2008.

\bibitem{spielman2004nearly}
D.~A. Spielman and S.-H. Teng.
\newblock Nearly-linear time algorithms for graph partitioning, graph
  sparsification, and solving linear systems.
\newblock In {\em Proceedings 36th Annual ACM Symposium on Theory of Computing
  (STOC)}, pages 81--90, 2004.

\bibitem{SpielmanT11}
D.~A. Spielman and S.-H. Teng.
\newblock Spectral sparsification of graphs.
\newblock {\em {SIAM} J. Comput.}, 40(4):981--1025, 2011.

\bibitem{SpielmanT13}
D.~A. Spielman and S.-H. Teng.
\newblock A local clustering algorithm for massive graphs and its application
  to nearly linear time graph partitioning.
\newblock {\em {SIAM} J. Comput.}, 42(1):1--26, 2013.

\bibitem{SpielmanT14}
D.~A. Spielman and S.-H. Teng.
\newblock Nearly linear time algorithms for preconditioning and solving
  symmetric, diagonally dominant linear systems.
\newblock {\em SIAM Journal on Matrix Analysis and Applications},
  35(3):835--885, 2014.

\bibitem{Trevisan08}
L.~Trevisan.
\newblock Approximation algorithms for unique games.
\newblock {\em Theory of Computing}, 4(5):111--128, 2008.

\bibitem{Wulff-Nilsen17}
C.~Wulff{-}Nilsen.
\newblock Fully-dynamic minimum spanning forest with improved worst-case update
  time.
\newblock In {\em Proceedings of the 49th Annual {ACM} {SIGACT} Symposium on
  Theory of Computing, {STOC} 2017, Montreal, QC, Canada, June 19-23, 2017},
  pages 1130--1143, 2017.

\end{thebibliography}

\newpage
\appendix
\section*{\huge Appendix}
\vspace{1cm}
\section{Nearly Most Balanced Sparse Cut} \label{XXX-sect-balanced-cut}

The goal of this section is to prove the following theorem.

\restatenearlybalcut*
\begin{proof}
This theorem follows from a re-parameterization of Lemma~\ref{XXX-lem-partition-nibble} and  Lemma~\ref{XXX-lem-partition-nibble-impl}.
\end{proof}

We will prove this theorem by adapting the nearly most balanced sparse cut algorithm of Spielman and Teng~\cite{spielman2004nearly}\footnote{There are many versions of the paper~\cite{spielman2004nearly}; we refer to~\url{https://arxiv.org/abs/cs/0310051v9}.} to $\CONGEST$ in a white-box manner. Before presenting the proof, we highlight the major differences between this work and the sequential algorithm of~\cite{spielman2004nearly}.
The procedure \nibble\ itself is not suitable for a distributed implementation, so we follow the idea of~\cite{ChangPZ19} to consider an approximate version of \nibble\  (Section~\ref{subsect-apx-nibble}) and use the distributed implementation described in~\cite{ChangPZ19} (Section~\ref{XXX-sect-partition-impl}).
The nearly most balanced sparse cut algorithm of Spielman and Teng~\cite{spielman2004nearly}
involves doing  $\tilde{O}(|E|)$  iterations of \nibble\ with a random starting vertex  on the {\em remaining subgraph}. We will show that this sequential process can be partially parallelized at the cost of worsening the conductance guarantee by a polylogarithmic factor (Section~\ref{subsect-par-nibble}).

\paragraph{Terminology.} Given a parameter $\phi \in (0,1)$, We define the following functions as in~\cite{spielman2004nearly}.
\begin{align*}
    \ell &\bydef \ceil{\log |E|},\\
    \tlast &\bydef 49 \ln ( |E| e^2) / \phi^2,\\
    \fff(\phi) &\bydef \frac{\phi^3 }{14^4 \ln^2 ( |E| e^4) },\\
    \gamma &\bydef \frac{5\phi}{ 7 \cdot 7 \cdot 8 \cdot \ln( |E| e^4 ) },\\
    \epsilon_b &\bydef  \frac{\phi}{ 7 \cdot 8 \cdot \ln( |E| e^4 ) \tlast 2^b}.
\end{align*}

Let $A$ be the adjacency matrix of the graph $G=(V,E)$.
We assume a 1-1 correspondence between $V$ and $\{1, \ldots, n\}$.
In a \textit{lazy random walk}, the walk stays at the current vertex with probability $1/2$ and otherwise moves to a random neighbor of the current vertex.
The matrix realizing this walk can be expressed as
 $M = (A D^{-1} + I) / 2$,
  where $D$ is the diagonal matrix with
  $(\deg(1), \ldots , \deg(n))$ on the diagonal.
  
Let $p_t^v$ be the probability distribution of the lazy random walk that begins at $v$ and walks for $t$ steps.
In the limit, as $t\rightarrow \infty$, 
$p_t(x)$ approaches $\deg(x)/ (2 |E|)$, so it is natural to measure 
$p_t(x)$ \emph{relative} to this baseline.
\[
\rho_t(x)=p_t(x)/\deg(x),
\]

Let $p:  V \mapsto [0,1]$ be any function. 
The truncation operation $[p]_\epsilon$ 
rounds $p(x)$ to zero if it falls below a threshold that depends on $x$.
\[
  [p]_{\epsilon} (x)
=
\begin{cases}
  p (x) & \text{if $p (x) \geq 2\epsilon \deg(x)$,}\\
  0 & \text{otherwise}.
\end{cases}
\]

As in~\cite{spielman2004nearly}, for any vertex set $S$, we define the vector  $\chi_S$ by $\chi_S(u) = 1$ if $u \in S$ and $\chi_S(u) = 0$ if $u \notin S$, and we define the vector  $\psi_S$ by $\psi_S(u) = \deg(u) / \vol(S)$ if $u \in S$ and $\psi_S(u) = 0$ if $u \notin S$. In particular, $\chi_v$ is a probability distribution on $V$ that has all its probability mass on the vertex $v$, and $\psi_V$ is the degree distribution of $V$. That is, $\Prob_{x \sim \psi_V}[x = v] = \deg(v)/\vol(V)$.

\subsection{Nibble}
We first review the \nibble\ algorithm of~\cite{spielman2004nearly}, which computes the following  sequence of vectors with truncation parameter $\epsilon_b$.
\begin{align*}
\ppp_t &=\begin{cases}
\chi_v & \text{if $t = 0$,}\\
[M \ppp_{t-1}]_{\epsilon_b} & \text{otherwise.}
\end{cases}
\end{align*}
We define $\rhoo_t(v) = \ppp_t(v) / \deg(v)$ as the normalized probability mass at $v$ at time $t$.
Due to truncation,  for all $u \in V$ and $t \geq 0$, we have $p_t(u) \geq \ppp_t(u)$ and $\rho_t(u) \geq \rhoo_t(u)$.

We define $\tpi$ as a permutation of $V$ such that $\rhoo_t(\tpi(1)) \geq  \rhoo_t(\tpi(2)) \geq \cdots \rhoo_t(\tpi(|V|))$. That is, we order the vertices by their  $p(v)/\deg(v)$-value, breaking ties arbitrarily (e.g., by comparing IDs).
We write $\tpi(i \ldots j)$ to denote the set of vertices $\tpi(x)$  with $i \leq x \leq j$. For example, $\tpi(1 \ldots j)$ is the set of  the top $j$ vertices with the highest $\rhoo(v)$-value.

\begin{framed}
\noindent {\bf Algorithm} \nibble($G,v,\phi,b$)

\medskip

For $t  = 1$ to $\tlast$, if there exists an index $1 \leq j \leq |V|$ meeting the following  conditions 
\begin{enumerate}
\addtolength{\itemindent}{1cm}
    \item[(C.1)] $\Phi(\tpi(1 \ldots j)) \leq \phi$.
    \item[(C.2)] $\rhoo_t(\tpi(j)) \geq \gamma / \vol(\tpi(1 \ldots j))$.
    \item[(C.3)] $(5/6) \vol(V) \geq \vol(\tpi(1 \ldots j)) \geq (5/7) 2^{b-1}$. 
\end{enumerate}
then return $C = \tpi(1 \ldots j)$ and quit.
Otherwise return $C = \emptyset$.
\end{framed}

Note that the definition of \nibble($G,v,\phi,b$) is exactly the same as the one presented in~\cite{spielman2004nearly}.

\begin{definition}
 Define $Z_{u,\phi,b}$ as the subset of $V$ such that if we start the lazy random walk from $v \in Z_{u,\phi,b}$, then $\rho_t(u) \geq \epsilon_b$ for at least one of  $t \in [0, \tlast]$. 
 For any edge $e = \{u_1, u_2\}$, define $Z_{e, \phi, b} = Z_{u_1,\phi,b} \cup Z_{u_2,\phi,b}$.
 \end{definition}
 
 Intuitively, if $v \notin Z_{e, \phi, b}$, then $e$ does not  participate in \nibble($G,v,\phi,b$) and  both endpoints of $e$ are not in the output $C$   of \nibble($G,v,\phi,b$).
In particular, 
  $v \in Z_{e, \phi, b}$ is  a necessary  condition for $e \in E(C)$,
The following auxiliary lemma establishes upper bounds on $\vol(Z_{u,\phi,b})$ and $\vol(Z_{e,\phi,b})$. This lemma will be applied to bound the amount of congestion when we execute multiple \nibble\ in parallel. Intuitively, if $\vol(Z_{e,\phi,b})$ is small, then we can afford to run many instances \nibble($G,v,\phi,b$) in parallel for random starting vertices $v$ sampled from the degree distribution $\psi_V$.

\begin{lemma}\label{XXX-lem-nibble-aux}
The following formulas hold for each vertex $u$ and each edge $e$.
\begin{align*}
    &\vol(Z_{u,\phi,b}) \leq (\tlast+1) / (2\epsilon_b)\\
    &\vol(Z_{e,\phi,b}) \leq (\tlast+1) / \epsilon_b
\end{align*}
In particular, these two quantities are both upper bounded by $O(\phi^{-5} 2^b \log^3{|E|})$.
\end{lemma}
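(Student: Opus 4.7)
The plan is to exploit the reversibility of the lazy random walk with respect to the degree distribution, which converts the condition ``$v \in Z_{u,\phi,b}$'' (a statement involving many walks starting from different vertices $v$) into a statement about the single walk starting from $u$. Then the volume bound reduces to a probability mass counting argument. Concretely, since $M = (AD^{-1}+I)/2$ is reversible with stationary measure proportional to degrees, for every $u,v \in V$ and every $t \geq 0$ one has the detailed balance identity $\deg(v) \cdot p_t^v(u) = \deg(u) \cdot p_t^u(v)$, i.e.\ $\rho_t^v(u) = \rho_t^u(v)$.

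Applying this, the condition $v \in Z_{u,\phi,b}$ (requiring $\rho_t^v(u) \geq \epsilon_b$, or more precisely $p_t^v(u) \geq 2\epsilon_b \deg(u)$ to match the truncation threshold used in \nibble) becomes the condition that some $t_v \in [0,\tlast]$ satisfies $p_{t_v}^u(v) \geq 2\epsilon_b \deg(v)$. Summing the witness inequalities over $v \in Z_{u,\phi,b}$ and swapping the order of summation yields
\[
2\epsilon_b \cdot \vol(Z_{u,\phi,b}) \;\leq\; \sum_{v \in Z_{u,\phi,b}} p_{t_v}^u(v) \;\leq\; \sum_{t=0}^{\tlast} \sum_{v \in V} p_t^u(v) \;=\; \tlast+1,
\]
where the final equality uses that each $p_t^u$ is a probability distribution on $V$. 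This gives $\vol(Z_{u,\phi,b}) \leq (\tlast+1)/(2\epsilon_b)$, and the edge bound follows immediately from $Z_{e,\phi,b} = Z_{u_1,\phi,b} \cup Z_{u_2,\phi,b}$ together with subadditivity of volume.

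For the asymptotic statement, I would substitute $\tlast = \Theta(\log|E|/\phi^2)$ and $\epsilon_b = \Theta(\phi^3/(2^b \log^2|E|))$ into the bound, giving
\[
\frac{\tlast+1}{\epsilon_b} \;=\; O\!\left(\frac{\log|E|}{\phi^2}\right)\cdot O\!\left(\frac{2^b \log^2|E|}{\phi^3}\right) \;=\; O\!\left(\phi^{-5} \cdot 2^b \cdot \log^3|E|\right),
\]
as required. There is no serious obstacle here; the only subtlety is being careful to carry the factor of $2$ in $p_t \geq 2\epsilon_b \deg$ that the truncation threshold $[p]_{\epsilon}$ imposes, which is exactly what produces the $1/(2\epsilon_b)$ versus $1/\epsilon_b$ split between the vertex and edge bounds. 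A minor bookkeeping point is that the same argument applies when the walk is carried out on $G\{S\}$ (with self-loops replacing deleted edges), since self-loops preserve both the degree of each vertex and the reversibility of $M$ with respect to the degree distribution.
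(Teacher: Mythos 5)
Your proof is correct and follows essentially the same route as the paper: both arguments use the reversibility identity $\rho_t^v(u)=\rho_t^u(v)$ to convert membership in $Z_{u,\phi,b}$ into a statement about the single walk started at $u$, then bound $\vol(Z_{u,\phi,b})$ by summing the total probability mass $\sum_{t=0}^{\tlast}\sum_v p_t^u(v)=\tlast+1$ against the threshold $2\epsilon_b\deg(v)$ (the paper does this by slicing into the per-time sets $Z_{u,\phi,b,t}$, which is only a cosmetic difference from your witness-summation). The parameter substitution at the end also matches.
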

\begin{proof}
In this proof we use superscript to indicate the starting vertex of the lazy random walk.
We write $Z_{u,\phi,b,t} = \{ v \in V \ | \ \rho_t^v(u) \geq 2\epsilon_b \}$. Then $\vol(Z_{u,\phi,b}) \leq \sum_{t = 0}^{\tlast} \vol(Z_{u,\phi,b,t})$. Thus, to prove the lemma, if suffices to show that $\vol(Z_{u,\phi,b,t}) \leq 1 / (2\epsilon_b)$.
This inequality follows from the fact that $\rho_t^v(u) = \rho_t^u(v)$, as follows.
\begin{align*}
1 &= \sum_{v \in V} p_t^u(v) \\
&\geq  \sum_{v \in V \ | \ \rho_t^u(v) \geq 2\epsilon_b} p_t^u(v) \\
&\geq \sum_{v \in V \ | \ \rho_t^u(v) \geq 2\epsilon_b} 2\epsilon_b \cdot \deg(v) \\
&= \sum_{v \in V \ | \ \rho_t^v(u) \geq 2\epsilon_b} 2\epsilon_b \cdot \deg(v) \\
&= 2\epsilon_b \cdot  \vol(Z_{u,\phi,b,t}).
\end{align*}
The fact that $\rho_t^v(u) = \rho_t^u(v)$ as been observed in~\cite{SpielmanT13} without a proof. 
For the sake of completeness, we will show a proof of this fact. An alternate proof can be found in~\cite[Lemma 3.7]{ChangPZ19}. In the following calculation, we use the fact that $D^{-1} M D = D^{-1}(AD^{-1}+I)D/2 = (D^{-1}A + I)/2 =  M^{\top}$.
\begin{align*}
    \rho_t^v(u) &= \chi_u^\top D^{-1} M^t \chi_v\\
    &= \chi_u^\top (D^{-1} M D)^t   (D^{-1}\chi_v)\\
    &= \chi_u^\top (M^{\top})^t   (D^{-1}\chi_v)\\
    &=  (D^{-1}\chi_v)^\top M^t \chi_u \\
    &=  \chi_v^\top D^{-1}  M^t \chi_u \\
    &= \rho_t^u(v).
\end{align*}
Finally, recall that $\epsilon_b = \frac{\phi}{ 7 \cdot 8 \cdot \ln( |E| e^4 ) \tlast 2^b}$ and $\tlast = 49 \ln ( |E| e^2) / \phi^2$, and so 
\[ \vol(Z_{e,\phi,b}) \leq  2(\tlast + 1)/ (2\epsilon_b) =  O(\phi^{-5} 2^b \log^3{|E|}). \qedhere\]
\end{proof}

 Lemma~\ref{XXX-lem-nibble} lists some crucial properties of \nibble.   In subsequent discussion, for any given subset $S\subset V$, the subset $S^g \subseteq S$ and the partition $S^g = \bigcup_{b=1}^{\ell} S^g_b$  are defined according to Lemma~\ref{XXX-lem-nibble}.

\begin{lemma}[Analysis of \nibble]\label{XXX-lem-nibble}
For each $\phi \in (0,1]$, and for each subset $S\subset V$ satisfying
\begin{align*}
\vol(S) &\leq \frac{2}{3}\cdot \vol(V)
\text{ \ \ and \ \ \ } 
\Phi(S) \leq 2 \fff(\phi),
\end{align*}
there exists a subset $S^g \subseteq S$ with the following properties.
First, $\vol(S^g)\geq \vol(S)/2$.  Second, $S^g$ is partitioned into
$S^g = \bigcup_{b=1}^{\ell} S^g_b$ such that if a lazy random walk is initiated
at any $v\in S^g_b$ with truncation parameter
$\epsilon_b$, the following are true.
\begin{enumerate}
    \item\label{XXX-N1} The set $C$ returned by \nibble$(G,v,\phi,b)$ is non-empty.
    \item\label{XXX-N2} 
    Let $1/5 < \lambda$. For any $1 \leq t \leq  \tlast$ and $j$ satisfying $\rhoo_t(\tpi(j)) \geq \lambda \gamma / \vol(\tpi(1 \ldots j))$,
    we have   $\vol(\tpi(1 \ldots j) \cap S) \geq (1 - \frac{1}{5 \lambda})  \vol(\tpi(1 \ldots j))$. In particular, the  set $C$ returned by \nibble$(G,v,\phi,b)$  satisfies $\vol(C \cap S) \geq (4/7)2^{b-1}$.
\end{enumerate}
\end{lemma}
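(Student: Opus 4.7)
The plan is to follow the analysis of Spielman and Teng~\cite{spielman2004nearly} closely, partitioning the good vertices of $S$ by the volume-scale at which the trapped random walk first exhibits a sparse level set. The overall strategy has three main layers: (i) a trapping argument showing that a typical degree-weighted random walk starting in $S$ loses little probability mass to $\bar S$; (ii) a Lov\'asz--Simonovits curve analysis of the truncated walk $\ppp_t$, which forces some level set of $\rhoo_t$ to satisfy the cut conditions (C.1)--(C.3); and (iii) a clean mass-density argument for property~\ref{XXX-N2}.

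First I would define, for each $v \in S$ and each time $t \leq \tlast$, the escape mass $\eout^v_t = \sum_{u \notin S} p_t^v(u)$. Using reversibility and the edge-isoperimetric bound $|\partial(S)| \leq 2\fff(\phi)\vol(S)$, together with the inequality $\E_{v \sim \psi_S}\!\left[\eout^v_t\right] \leq t\,\Phi(S)$ (a telescoping argument on the lazy-walk mass crossing $\partial(S)$), I would apply Markov's inequality at $t=\tlast$ to discard a subset of volume at most $\vol(S)/4$ whose walks leak too much mass. A symmetric argument discards another $\vol(S)/4$ of vertices for which the total truncated mass removed over time exceeds a prescribed bound; here the choice of $\epsilon_b$ and $\tlast$ is engineered so that truncation losses telescope cleanly. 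What remains is $S^g$ with $\vol(S^g)\geq \vol(S)/2$. Each $v\in S^g$ is then placed in $S^g_b$, where $b$ is the index (between $1$ and $\ell$) of the scale at which the level-set volume $\vol(\tpi(1\ldots j))$ first crosses $2^{b-1}$ for some $t \leq \tlast$ and some $j$ that witnesses a candidate sparse cut; the scale choice pairs naturally with the truncation threshold $\epsilon_b$.

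For property~\ref{XXX-N1}, the heart is the Lov\'asz--Simonovits curve
\[
I_t(x) \;=\; \max_{\,W \subseteq V,\ \vol(W) = x\,} \ppp_t(W),
\]
which is concave and piecewise linear with breakpoints at the prefix sums of $\tpi$. Assuming for contradiction that \emph{no} index $j$ ever satisfies all of (C.1)--(C.3), one shows at each $t$ the curve contracts: on level sets with $\vol \in [(5/7)2^{b-1},(5/6)\vol(V)]$ either (C.1) fails (so the conductance of every prefix is $>\phi$, giving the standard Lov\'asz--Simonovits contraction) or (C.2) fails (so prefix mass above baseline is small enough to be annihilated by truncation). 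Iterating $\tlast$ steps drives $\rhoo_t$ essentially to baseline everywhere in the relevant scale, contradicting the fact that $v \in S^g$ keeps enough mass inside a low-conductance set of volume near $2^b$. Hence some $(t,j)$ witnesses all three conditions, and \nibble\ returns a non-empty $C$.

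Property~\ref{XXX-N2} follows from a direct density argument: if $\rhoo_t(\tpi(j))\geq \lambda\gamma/\vol(\tpi(1\ldots j))$, then every vertex $u \in \tpi(1\ldots j)$ has $\ppp_t(u)/\deg(u)$ at least this threshold, while $\Prob_{v\sim\psi_V}[\,\text{walk lands in }\bar S\,] \leq t\Phi(S) \leq 2\tlast \fff(\phi)$; combining these bounds with $v \in S^g$ (so that the truncated walk has lost little mass to $\bar S$) yields $\vol(\tpi(1\ldots j)\cap\bar S)\leq \frac{1}{5\lambda}\vol(\tpi(1\ldots j))$. Plugging in the specific $\lambda$ corresponding to condition (C.2) (so that $1-\tfrac{1}{5\lambda}\geq 4/5$) and using (C.3) gives $\vol(C\cap S)\geq (4/7)2^{b-1}$. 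The main obstacle will be calibrating $\fff(\phi),\gamma,\epsilon_b,\tlast$ so that the escape and truncation bounds telescope correctly against the Lov\'asz--Simonovits contraction, and arranging the partition into scales so that each $v \in S^g$ lands in exactly one $S^g_b$ with matching truncation parameter; these are bookkeeping steps that mirror~\cite{spielman2004nearly} almost verbatim.
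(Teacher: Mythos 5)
Your proposal is correct and follows the same route as the paper: the paper simply cites \cite[Lemma 3.1]{spielman2004nearly} for property~\ref{XXX-N1} and the proof of \cite[Lemma 3.14]{spielman2004nearly} for property~\ref{XXX-N2}, and your sketch is a faithful reconstruction of exactly that Spielman--Teng machinery (trapping bound, Lov\'asz--Simonovits curve, and the escape-mass density argument). Your derivation of the ``in particular'' clause also matches the paper's: take $\lambda=1$ in (C.2) to get $\vol(C\cap S)\geq (4/5)\vol(C)$ and combine with $\vol(C)\geq (5/7)2^{b-1}$ from (C.3).
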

\begin{proof}
The first condition follows from~\cite[Lemma 3.1]{spielman2004nearly}. The second condition follows from the proof of~\cite[Lemma 3.14]{spielman2004nearly}. To see that the  set $C$ returned by \nibble$(G,v,\phi,b)$  satisfies $\vol(C \cap S) \geq (4/7)2^{b-1}$, observe that by (C.3),  the  set $C$ satisfies $\vol(C) \geq (5/7)2^{b-1}$. Setting $\lambda = 1$, (C.2) implies that $\vol(C \cap S) \geq (1 - \frac{1}{5})  \vol(C) \geq (4/5)(5/7)2^{b-1} = (4/7)2^{b-1}$.
\end{proof}

To put it another way, Lemma~\ref{XXX-lem-nibble}(\ref{XXX-N1}) says that there exist  $1 \leq t \leq  \tlast$ and $1 \leq j \leq |V|$ such that (C.1)--(C.3) are met;   Lemma~\ref{XXX-lem-nibble}(\ref{XXX-N2}) says that if $t$ and $j$ satisfy (C.2), then   the set $\tpi(1 \ldots j)$ has high overlap with $S$.

Intuitively, the set $S^g$ represents the ``core'' of $S$ in the sense that \nibble$(G,v,\phi,b)$  is guaranteed to return a  sparse cut $C$ if $v \in S_b^g$. Recall that (C.1) and (C.3) in the description of \nibble$(G,v,\phi,b)$ guarantees that the cut $C$ has conductance at most $\phi$ and has volume at most $(5/6)\vol(V)$. 

\subsection{Approximate Nibble}\label{subsect-apx-nibble}
The  algorithm \nibble\ is not suitable for a distributed implementation since it has to go over all possible $j$. Similar to the idea of~\cite[Algorithm~1]{ChangPZ19} we provide a slightly modified version of \nibble\ that only considers $O(\phi^{-1} \log \vol(V) )$ choices of $j$ for each $t$.
The cost of doing so is that we have to relax the  conditions  slightly.

Given a number $t$, we define the sequence $(j_x)$ as follows. We write $\jmax$ to denote the largest index with $\ppp
_t(\jmax) > 0$.
For the base case, $j_1 = 1$.  Now suppose $j_1, \ldots, j_{i-1}$ as been defined.
If we already have $j_{i-1} = \jmax$, then we are done, i.e., $j_{i-1} = \jmax$ is the last element of the sequence $(j_x)$; 
otherwise,  the next element  $j_i$ is selected as follows.
\[
j_i = \max\left\{j_{i-1} + 1,  \; \; \operatorname{arg \ max}_{1 \leq j \leq \jmax}  \left( \vol(\tpi(1 \ldots j)) \leq (1+\phi) \vol(\tpi(1 \ldots j_{i-1})) \right)\right\}.
\]

\begin{framed}
\noindent {\bf Algorithm} \distnibble($G,v,\phi,b$)

\medskip

For $t  = 1$ to $\tlast$, we go over all $O(\phi^{-1} \log \vol(V) )$ candidates $j$ in the sequence $(j_x)$.  If $j_x = 1$ or $j_x = j_{x-1} + 1$, we test whether (C.1), (C.2), and (C.3) are met. Otherwise, we test whether the following modified conditions are met.
\begin{enumerate}
\addtolength{\itemindent}{1cm}
    \item[(C.1*)] $\Phi(\tpi(1 \ldots j_x)) \leq 12 \phi$.
    \item[(C.2*)]   $\rhoo_t(\tpi(j_{x-1})) \geq \gamma / \vol(\tpi(1 \ldots j_x))$.
    \item[(C.3*)] $(11/12) \vol(V) \geq \vol(\tpi(1 \ldots j_x)) \geq (5/7) 2^{b-1}$. 
\end{enumerate}
If some $j_x$ passes the test, then return $C = \tpi(1 \ldots j_x)$ and quit.
Otherwise return $C = \emptyset$.
\end{framed}


\begin{definition}\label{def-participate-edges}
Consider \distnibble$(G,v,\phi,b)$. Define $P^\ast$ as the set of edges  $e$ such that there exist at least one endpoint $u$ of $e$ and at least one number $t \in [0, \tlast]$ with $\ppp_t(u) > 0$.
\end{definition}

 Intuitively,   $P^\ast$  is the set of edges that participate in   \distnibble($G,v,\phi,b$). This notation will be used in analyzing the complexity of our distributed implementation.



Lemma~\ref{XXX-lem-dist-nibble} shows an additional property of the output $C$ of \distnibble$(G,v,\phi,b)$  when $v$ is appropriately chosen.
Note that if   $C$  is non-empty,   it must have conductance at most $12 \phi$ and volume at most $(11/12)\vol(V)$ in view of (C.1*) and (C.3*). 

\begin{lemma}[Analysis of \distnibble]\label{XXX-lem-dist-nibble}
For each $0 < \phi \leq 1/12$, and for each subset $S\subset V$ satisfying
\begin{align*}
\vol(S) &\leq \frac{2}{3}\cdot \vol(V)
\text{ \ \ and \ \ \ } 
\Phi(S) \leq 2 \fff(\phi),
\end{align*}
the output $C$ of \distnibble$(G,v,\phi,b)$ for any $v \in S_g^b$ is non-empty and it satisfies 
\[\vol(C \cap S) \geq 2^{b-2}.\]
\end{lemma}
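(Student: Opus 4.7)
The plan is to transfer the guarantees of \nibble$(G,v,\phi,b)$ established in Lemma~\ref{XXX-lem-nibble} to \distnibble$(G,v,\phi,b)$ by matching each candidate index considered by \nibble\ to a nearby index in the sparse sequence $(j_x)$. The key structural fact is that by construction, consecutive indices in the sequence satisfy $\vol(\tpi(1\ldots j_{x})) \leq (1+\phi)\vol(\tpi(1\ldots j_{x-1}))$ whenever $j_x > j_{x-1}+1$, so skipping indices costs only a small multiplicative factor in volume.

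For the first part (non-emptiness of the output), I would begin with the witness $(t,j)$ for \nibble\ guaranteed by Lemma~\ref{XXX-lem-nibble}(1) and pick $j_x$ to be the smallest element of the sequence with $j_x \geq j$ (so $j_{x-1} < j \leq j_x$, with the convention $j_0 = 0$). If $j_x \in \{1, j_{x-1}+1\}$, then in fact $j_x = j$ and the unstarred conditions (C.1)--(C.3) hold verbatim. Otherwise I would verify the starred conditions for $(t, j_x)$: (C.2*) follows because $\rhoo_t(\tpi(j_{x-1})) \geq \rhoo_t(\tpi(j)) \geq \gamma/\vol(\tpi(1\ldots j)) \geq \gamma/\vol(\tpi(1\ldots j_x))$; (C.3*) follows from $\vol(\tpi(1\ldots j_x)) \leq (1+\phi)(5/6)\vol(V) \leq (11/12)\vol(V)$ using $\phi \leq 1/12$; and (C.1*) follows from the bound $|\partial(\tpi(1\ldots j_x))| \leq |\partial(\tpi(1\ldots j))| + (\vol(\tpi(1\ldots j_x)) - \vol(\tpi(1\ldots j))) \leq 2\phi\,\vol(\tpi(1\ldots j))$, combined with a case analysis on whether $\vol(\tpi(1\ldots j_x))$ or $\vol(V\setminus\tpi(1\ldots j_x))$ is the smaller side of the cut.

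For the volume overlap, let $C = \tpi(1\ldots j_y)$ denote whatever set \distnibble\ actually returns. If $C$ was accepted via (C.1)--(C.3), I would apply Lemma~\ref{XXX-lem-nibble}(2) with $\lambda = 1$ directly to $j_y$, yielding $\vol(C\cap S) \geq (4/5)\vol(C) \geq (4/5)(5/7)2^{b-1} \geq 2^{b-2}$. If $C$ was accepted via (C.1*)--(C.3*), I would instead apply Lemma~\ref{XXX-lem-nibble}(2) to the index $j_{y-1}$ with $\lambda = 1/(1+\phi) > 1/5$; the hypothesis $\rhoo_t(\tpi(j_{y-1})) \geq \lambda\gamma/\vol(\tpi(1\ldots j_{y-1}))$ comes from combining (C.2*) with $\vol(\tpi(1\ldots j_y)) \leq (1+\phi)\vol(\tpi(1\ldots j_{y-1}))$. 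Chaining $\vol(C\cap S) \geq \vol(\tpi(1\ldots j_{y-1})\cap S) \geq (1-(1+\phi)/5)\vol(\tpi(1\ldots j_{y-1})) \geq \frac{1-(1+\phi)/5}{1+\phi}\vol(C)$ with the lower bound $\vol(C) \geq (5/7)2^{b-1}$ from (C.3*) and plugging in $\phi \leq 1/12$ should yield $\vol(C\cap S) \geq 2^{b-2}$ with a little room to spare.

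The main obstacle I anticipate is the verification of (C.1*) in the regime where $\vol(\tpi(1\ldots j_x)) > \vol(V)/2$, so the denominator of the conductance is $\vol(V\setminus\tpi(1\ldots j_x))$ rather than $\vol(\tpi(1\ldots j_x))$. In that case the denominator is only bounded below by $\bigl(1-(1+\phi)(5/6)\bigr)\vol(V)$, and the constraint $\phi \leq 1/12$ has to be exploited carefully to push the resulting ratio down to the claimed $12\phi$; the other conditions and the overlap calculation are comparatively mechanical.
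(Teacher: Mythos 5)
Your proposal follows essentially the same route as the paper's proof: take the witness $(t,j)$ from Lemma~\ref{XXX-lem-nibble}(\ref{XXX-N1}), sandwich it as $j_{x-1}\le j\le j_x$ in the sparse sequence, verify (C.1*)--(C.3*) for $j_x$, and then get the overlap by applying Lemma~\ref{XXX-lem-nibble}(\ref{XXX-N2}) to $j_{y-1}$ with $\lambda=1/(1+\phi)>1/5$ (the paper does exactly this, with $\lambda=12/13$). Your overlap computation and your verifications of (C.2*) and (C.3*) match the paper's. A small point in your favor: you explicitly argue about whatever index the algorithm actually accepts rather than only the constructed witness; the paper's wording glosses over this, but the overlap argument only uses (C.2*)/(C.3*) of the accepted index, so both treatments are fine.

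The one place your sketch does not close is exactly the obstacle you flag: condition (C.1*) when $\vol(\tpi(1\ldots j_x))>\vol(V)/2$. The absolute lower bound you propose for the denominator, $\bigl(1-(1+\phi)(5/6)\bigr)\vol(V)=\tfrac{7}{72}\vol(V)$ at $\phi=1/12$, paired with your numerator bound $2\phi\,\vol(\tpi(1\ldots j))\le \tfrac{5}{3}\phi\,\vol(V)$, only yields conductance about $\tfrac{120}{7}\phi\approx 17\phi$, which overshoots $12\phi$. The fix (as in the paper) is to bound both sides \emph{relative to} $\vol(V\setminus\tpi(1\ldots j))$ rather than to $\vol(V)$: since $(t,j)$ satisfies (C.3), $\vol(V\setminus\tpi(1\ldots j))\ge \tfrac16\vol(V)$, hence $\vol(V\setminus\tpi(1\ldots j_x))\ge \vol(V\setminus\tpi(1\ldots j))-\phi\,\vol(\tpi(1\ldots j_x))\ge \vol(V\setminus\tpi(1\ldots j))-\tfrac1{12}\vol(V)\ge \tfrac12\vol(V\setminus\tpi(1\ldots j))$, while the numerator is at most $\phi\,\vol(V\setminus\tpi(1\ldots j))+\phi\,\vol(\tpi(1\ldots j))\le 6\phi\,\vol(V\setminus\tpi(1\ldots j))$ because $\vol(\tpi(1\ldots j))\le\tfrac56\vol(V)\le 5\,\vol(V\setminus\tpi(1\ldots j))$; dividing gives exactly $12\phi$. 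With that substitution your argument is complete and coincides with the paper's.
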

\begin{proof}
We pick $(t,j)$ as the indices that satisfy (C.1)--(C.3), whose existence is guaranteed by Lemma~\ref{XXX-lem-nibble}(\ref{XXX-N1}). Let $v \in S_g^b$. 
We select $x$ in such a way that $j_{x-1} \leq j \leq j_x$. We will show that $j_x$ will pass the test  in \distnibble$(G,v,\phi,b)$, and the output $C = \tpi(1 \ldots j_x)$ satisfies $\vol(C \cap S) \geq 2^{b-2}$.

For the easy special case that 
$j = j_i$ for some $i$,
the index  $j_i$ is guaranteed to pass the test in \distnibble$(G,v,\phi,b)$, and we have $\vol(C \cap S) \geq (4/7) 2^{b-1} > 2^{b-2}$ by Lemma~\ref{XXX-lem-nibble}.

Otherwise, the three indices $j_{x-1} \leq j \leq j_x$ satisfy the following relation:
\[\vol(\tpi(1 \ldots j_{x-1})) \leq \vol(\tpi(1 \ldots j)) \leq \vol(\tpi(1 \ldots j_{x})) \leq (1+\phi) \vol(\tpi(1 \ldots j_{x-1})).\]
We first show that  $j_x$   satisfies the three conditions (C.1*), (C.2*), (C.3*), and so it will pass the test in \distnibble$(G,v,\phi,b)$, and then we show that the output $C$ satisfies $\vol(C \cap S) \geq 2^{b-2}$.

\paragraph{Condition (C.1*).} 
We divide the analysis into two cases.
\begin{itemize}
\item Consider the case $\vol(\tpi(1 \ldots j_x)) \leq \vol(V)/2$. We have $|\partial(\tpi(1 \ldots j_x))| \leq |\partial(\tpi(1 \ldots j))| + \phi \vol(\tpi(1 \ldots j_x)) \leq 2\phi \vol(\tpi(1 \ldots j)) \leq 2\phi \vol(\tpi(1 \ldots j_x))$. Hence 
\[\Phi(\tpi(1 \ldots j_x)) = |\partial(\tpi(1 \ldots j_x))| / \vol(\tpi(1 \ldots j_x)) \leq 2\phi,\] and so (C.1*) is met.
In the above calculation, we use the fact that $|\partial(\tpi(1 \ldots j))| \leq \phi \vol(\tpi(1 \ldots j))$, which is due to the assumption that  $(t,j)$ satisfies (C.1).

\item Consider the case $\vol(\tpi(1 \ldots j_x)) > \vol(V)/2$.
The last inequality in the following calculation  uses the fact  that $\vol(V \setminus \tpi(1 \ldots j)) \geq (1/6)\vol(V)$, which is due to the assumption that  $(t,j)$ satisfies (C.2).
\begin{align*}
    \vol(V \setminus \tpi(1 \ldots j_x)) 
    &\geq
    \vol(V \setminus \tpi(1 \ldots j)) - \phi \vol(\tpi(1 \ldots j_x))  \\
    &\geq \vol(V \setminus \tpi(1 \ldots j)) - (1/12)\vol(\tpi(1 \ldots j_x)) & \phi \geq 1/12\\
    &\geq \vol(V \setminus\tpi(1 \ldots j)) - (1/12)\vol(V)\\
    &\geq \vol(V \setminus \tpi(1 \ldots j)) / 2. &(*)
\end{align*}

We are ready to show that $\Phi(\tpi(1 \ldots j_x))  \leq 12 \phi$.
 \begin{align*}
     \Phi(\tpi(1 \ldots j_x)) 
     & = |\partial(\tpi(1 \ldots j_x))|  / \vol(V \setminus \tpi(1 \ldots j_x)) \\
     &\leq \frac{\phi \vol(V \setminus \tpi(1 \ldots j)) + \phi \vol(\tpi(1 \ldots j))}{\vol(V \setminus \tpi(1 \ldots j_x))} \\
     &\leq \frac{6\phi \vol(V \setminus \tpi(1 \ldots j))}{\vol(V \setminus \tpi(1 \ldots j_x))} & \vol(\tpi(1 \ldots j)) \leq (5/6)\vol(V)\\
     &\leq 12 \phi & \text{ use (*)}
 \end{align*}
\end{itemize}

\paragraph{Condition (C.2*).}
\begin{align*}
    \rhoo_t(\tpi(j_{x-1})) &\geq 
    \rhoo_t(\tpi(j)) & j_{x-1} \leq j \\
    &\geq  
    \gamma / \vol(\tpi(1 \ldots j))   & \text{$(t,j)$ satisfies (C.2)} \\
    &\geq
    \gamma / \vol(\tpi(1 \ldots j_x)).  & j \leq j_{x}.
\end{align*}

\paragraph{Condition (C.3*).}
\begin{align*}
    (11/12) \vol(V) 
    &>(5/6)(1+\phi) \vol(V) & \phi \leq 1/12\\
    &\geq (1+\phi)\vol(\tpi(1 \ldots j)) & \text{$(t,j)$ satisfies (C.3)} \\
    &\geq \vol(\tpi(1 \ldots j_x))\\
    &\geq \vol(\tpi(1 \ldots j)) & j \leq j_x\\
    &\geq (5/7) 2^{b-1}.  & \text{$(t,j)$ satisfies (C.3)} 
\end{align*}


\paragraph{Lower Bound of $\vol(C \cap S)$.} First of all, observe that (C.2*) implies that 
\[\rhoo_t(\tpi(j_{x-1})) \geq 
\frac{\gamma}{\vol(\tpi(1 \ldots j_{x}))}
\geq \frac{\gamma}{(1 + \phi) \vol(\tpi(1 \ldots j_{x-1}))} = 
\frac{(12/13)\gamma}{\vol(\tpi(1 \ldots j_{x-1}))}
.\]
By Lemma~\ref{XXX-lem-nibble}, we can lower bound  $\vol(C \cap S)$ as follows.
\begin{align*}
    \vol(C \cap S)
    &= \vol(\tpi(1 \ldots j_x) \cap S)\\
    & > \vol(\tpi(1 \ldots j_{x-1}) \cap S)\\
    &\geq (1 - \frac{13}{5\cdot 12})  \vol(\tpi(1 \ldots j_{x-1})) & \text{Lemma~\ref{XXX-lem-nibble}}\\
    &= (1 - \frac{13}{60})  \vol(\tpi(1 \ldots j_{x-1}))\\
    &\geq  (1 - \frac{13}{60}) \vol(\tpi(1 \ldots j_{x})) / (1 + \phi) \\
     &\geq (1 - \frac{13}{60})  (5/7)  2^{b-1} / (1 + \phi)  & \text{(C.3*)}\\
    &> 2^{b-2}. & \phi \leq 1/12 & & & \qedhere\\
\end{align*}
\end{proof}

Recall that the main goal of Section~\ref{XXX-sect-balanced-cut} is to design a distributed algorithm that finds a nearly most balanced sparse cut, so finding a cut $C$ with low conductance is not enough. This is in contrast to~\cite{ChangPZ19}, where they do not need the output cut to satisfy any balance constraint.

Following the approach of~\cite{spielman2004nearly}, to find a nearly most balanced sparse cut, we will need to take the union of the output of multiple instances of \distnibble, and the goal of the analysis is to show that the resulting vertex set has volume at least $\vol(S)/2$.  
This explains the reason why we not only need to show that $C \neq \emptyset$ but also  need to show a lower bound of $\vol(C \cap S)$ in Lemma~\ref{XXX-lem-dist-nibble}.

\subsection{Random Nibble}\label{subsect-rnd-nibble}
Note that both \nibble\ and \distnibble\ are deterministic. Next, we consider the algorithm \randnibble\ which executes \distnibble\ with a random starting vertex $v$ and a random parameter $b$. The definition of  \randnibble\ exactly the same as the corresponding one in~\cite{SpielmanT13} except that we use \distnibble\ instead of \nibble.

\begin{framed}
\noindent {\bf Algorithm} \randnibble($G,\phi$)

\medskip

Sample a starting vertex $v \sim \psi_V$ according to the degree distribution. Choose a number $b \in [1, \ell]$ with $\Prob[b = i] = 2^{-i} / (1 - 2^{-\ell})$. Execute \distnibble($G,v,\phi,b$), and return the result $C$.
\end{framed}

Recall that $P^\ast$ is the set of edges  participating in the subroutine \distnibble($G,v,\phi,b$), as defined in   Definition~\ref{def-participate-edges}.
Note that $E(C) \subseteq P^\ast$, 
 where $C$ is the output of  \randnibble($G,\phi$).

  

\begin{lemma}[Analysis of \randnibble]\label{XXX-lem-rand-nibble}
For each $0 < \phi \leq 1/12$, the following holds for the output $C$ of \randnibble$(G,\phi)$. 
\begin{enumerate}
    \item  $\Prob[e \in E(C)]  \leq \Prob[e \in P^\ast] \leq ( 56 \ell (\tlast+1) \tlast  \ln( |E| e^4) \phi^{-1} ) /\vol(V)$ for each $e \in E$.
    \item  $\Expect[\vol(C \cap S)]
    \geq \frac{\vol(S)}{8 \vol(V)}$ for each subset $S\subset V$ satisfying
\begin{align*}
\vol(S) &\leq \frac{2}{3}\cdot \vol(V)
\text{ \ \ and \ \ \ } 
\Phi(S) \leq 2\fff(\phi).
\end{align*} 
\end{enumerate}
\end{lemma}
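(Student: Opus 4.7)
\textbf{Proof plan for Lemma~\ref{XXX-lem-rand-nibble}.}
Both parts are direct averaging arguments over the two independent random choices ($v \sim \psi_V$ and $b$ sampled with probability $2^{-b}/(1-2^{-\ell})$) that define \randnibble. The only nontrivial inputs are Lemma~\ref{XXX-lem-nibble-aux} (for the volume of the ``participation set'' of an edge) and Lemma~\ref{XXX-lem-dist-nibble} together with Lemma~\ref{XXX-lem-nibble} (for the guaranteed overlap of the output with~$S$).

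\emph{Part 1 (edge participation bound).} First I would observe that the trivial inclusion $E(C) \subseteq P^\ast$ gives the first inequality for free, so it remains to bound $\Prob[e \in P^\ast]$. The key point is that an edge $e = \{u_1,u_2\}$ can only participate in \distnibble$(G,v,\phi,b)$ if at some time $t \le \tlast$ we have $\ppp_t(u_i) > 0$ for some endpoint; by monotonicity $p_t(u_i) \ge \ppp_t(u_i)$, and truncation at level $\epsilon_b$ means $\ppp_t(u_i)$ can only become positive (and remain so through the walk) if $\rho_t(u_i) \ge \epsilon_b$ at the step it was kept. Formally, this means that $e \in P^\ast$ forces $v \in Z_{e,\phi,b}$. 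Therefore
\[
\Prob[e \in P^\ast] \le \sum_{b=1}^{\ell} \frac{2^{-b}}{1-2^{-\ell}} \cdot \Prob_{v \sim \psi_V}[v \in Z_{e,\phi,b}] = \sum_{b=1}^{\ell} \frac{2^{-b}}{1-2^{-\ell}} \cdot \frac{\vol(Z_{e,\phi,b})}{\vol(V)}.
\]
Plugging in $\vol(Z_{e,\phi,b}) \le (\tlast+1)/\epsilon_b = (\tlast+1) \cdot 7 \cdot 8 \cdot \ln(|E|e^4) \cdot \tlast \cdot 2^b / \phi$ from Lemma~\ref{XXX-lem-nibble-aux} cancels the $2^{-b}$ factor, and summing the $\ell$ equal terms gives exactly the claimed bound, using $1/(1-2^{-\ell}) = O(1)$ since $\ell \ge \log |E|$.

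\emph{Part 2 (expected overlap with $S$).} Here I would partition the contribution according to which piece $S^g_b$ of the core the starting vertex~$v$ lands in. By Lemma~\ref{XXX-lem-dist-nibble}, whenever $v \in S^g_b$ and $b$ is the sampled scale, the returned set $C$ satisfies $\vol(C \cap S) \ge 2^{b-2}$. Conditioning on these events and using the independence of the choices of~$v$ and $b$ together with $v \sim \psi_V$, I compute
\[
\Expect[\vol(C \cap S)] \ge \sum_{b=1}^{\ell} \frac{2^{-b}}{1-2^{-\ell}} \cdot \frac{\vol(S^g_b)}{\vol(V)} \cdot 2^{b-2} = \frac{1}{4(1-2^{-\ell})} \cdot \frac{\vol(S^g)}{\vol(V)}.
\]
The $2^{-b}$ in the distribution over scales cancels the $2^{b-2}$ lower bound on the yield, which is precisely why the scale distribution is chosen to be geometric. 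Since $\vol(S^g) \ge \vol(S)/2$ by Lemma~\ref{XXX-lem-nibble} and $1/(1-2^{-\ell}) \ge 1$, the right-hand side is at least $\vol(S)/(8\vol(V))$, completing the proof.

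\emph{Main obstacle.} Neither step is technically deep: both reduce to combining previously established volume bounds ($\vol(Z_{e,\phi,b})$ in one direction, $\vol(C \cap S) \ge 2^{b-2}$ in the other) with the sampling distribution. The only thing to be careful about is verifying that the requirement $v \in Z_{e,\phi,b}$ really is a necessary condition for $e \in P^\ast$ under the truncated walk (i.e., that truncation does not create new supports), which follows inductively from $\ppp_t \le p_t$ pointwise.
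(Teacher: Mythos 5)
Your proposal is correct and follows essentially the same route as the paper: Part~1 is exactly the paper's computation (averaging $\vol(Z_{e,\phi,b})/\vol(V)$ from Lemma~\ref{XXX-lem-nibble-aux} against the geometric distribution on $b$, with the observation that $e\in P^\ast$ forces $v\in Z_{e,\phi,b}$ because truncation cannot create support), and Part~2 is the standard averaging over the pieces $S^g_b$ that the paper delegates to the proof of Lemma~3.2 of Spielman--Teng, combined with Lemma~\ref{XXX-lem-dist-nibble}. If anything, your explicit handling of the $1/(1-2^{-\ell})$ factor in Part~1 is slightly more careful than the paper's.
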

\begin{proof}
The proof of $\Expect[\vol(C \cap S)]
    \geq \frac{\vol(S)}{8 \vol(V)}$ follows from Lemma~\ref{XXX-lem-dist-nibble}  and the proof of~\cite[Lemma 3.2]{spielman2004nearly}. An upper bound of $\Prob[e \in P^\ast]$ can be calculated using Lemma~\ref{XXX-lem-nibble-aux}. More specifically, observe that $v \in Z_{e,\phi,i}$ is a necessary condition for $e \in E(C)$ for the case $b = i$, and so we can upper bound $\Prob[e \in P^\ast]$ as follows.
\begin{align*}
    \Prob[e \in P^\ast] 
    &\leq \sum_{i=1}^{\ell} \Prob[b = i] \cdot \Prob[v \in  Z_{e,\phi,i}]\\
    &\leq \sum_{i=1}^{\ell} \Prob[b = i] \cdot \vol(Z_{e,\phi,i}) / \vol(V)\\
    &\leq \sum_{i=1}^{\ell} \frac{2^{-i}}{1 - 2^{-\ell}} \cdot  ( (\tlast+1) / \epsilon_b) /\vol(V) & \vol(Z_{e,\phi,i})  \leq (\tlast+1) / \epsilon_b\\
    &< \sum_{i=1}^{\ell} (  7 \cdot 8 \cdot (\tlast+1) \tlast  \ln( |E| e^4) \phi^{-1} ) /\vol(V) &\epsilon_b =  \frac{\phi}{ 7 \cdot 8 \cdot \ln( |E| e^4 ) \tlast 2^b}\\
    &= ( 56 \ell (\tlast+1) \tlast  \ln( |E| e^4) \phi^{-1} ) /\vol(V), 
\end{align*}
where the second inequality follows from the fact that we sample a starting vertex $v \sim \psi_V$ according to the degree distribution. 
\end{proof}


\subsection{Parallel Nibble} \label{subsect-par-nibble}
In~\cite{spielman2004nearly}, roughly speaking, it was shown that a nearly most balanced sparse cut can be found with probability $1 - p$ by \emph{sequentially} applying \nibble\ with a random starting vertex for $O(|E| \log (1/p))$ times.
After each \nibble, the output subset $C$ is removed from the underlying graph. To achieve an efficient implementation in $\CONGEST$, we need to diverge from this approach and aim at doing multiple \distnibble\ \emph{in parallel}. 

However, the na\"{i}ve approach of doing all $O(|E| \log (1/p))$ \randnibble\ in parallel does not work, since the potentially high overlap between the output subsets of different execution of \randnibble\ will destroy the required conductance constraint.

In what follows, we consider the algorithm \parallelnibble, which involves a simultaneous execution of a moderate number of \distnibble.
In the description of \parallelnibble, we say that $e$ participates in the subroutine \randnibble($G,\phi$) if $e \in P^\ast$ for the subroutine \distnibble($G,v,\phi,b$) during the execution of \randnibble($G,\phi$). For the sake of presentation, we write 
\[k \bydef \ceil{\vol(V) / ( 56 \ell (\tlast+1) \tlast  \ln( |E| e^4) \phi^{-1} ) }\]
in subsequent discussion. 

\begin{framed}
\noindent {\bf Algorithm} \parallelnibble($G, \phi$)

\medskip

For $i = 1$ to $k$, do \randnibble($G,\phi$), in parallel. Let $C_i$ be the result of the $i$th execution of \randnibble($G,\phi$). Let $U_i = \bigcup_{j=1}^i C_i$.  If there exists an edge $e$ participating in the subroutine \randnibble($G,\phi$) for more than $w \bydef 10 \ceil{\ln (\vol(V))}$ times, return $C = \emptyset$.
Otherwise select $i^\ast \in [1, k]$ to be the highest index such that $\vol(U_{i^\ast}) \leq z  \bydef (23/24)\vol(V)$. Return $C = U_{i^\ast}$.
\end{framed}

For the sake of presentation, in the statement of Lemma~\ref{XXX-lem-par-nibble} we define the function $g$ by
\[g(\phi, \vol(V)) \bydef \ceil{10w \cdot ( 56 \ell (\tlast+1) \tlast  \ln( |E| e^4) \phi^{-1} )} = O(\phi^{-5} \log^5 (|E|)).\]
In particular, we have $10w \vol(V)/k \leq g(\phi, \vol(V))$. The function $g$ will also be used in the description and the analysis of \balancedcut\ in the subsequent discussion.

\begin{lemma}[Analysis of \parallelnibble]\label{XXX-lem-par-nibble}
For each $0 < \phi \leq 1/12$ 
the following is true for the output $C$ of \parallelnibble$(G, \phi)$.
\begin{enumerate}
    \item If $C \neq \emptyset$, then  $\Phi(C) \leq 276 w \phi$. 
    \item For each subset $S\subset V$ satisfying
\begin{align*}
\vol(S) &\leq \frac{2}{3}\cdot \vol(V)
\text{ \ \ and \ \ \ } 
\Phi(S) \leq 2 \fff(\phi),
\end{align*} define the random variable $y$ as follows.
\[
y = 
\begin{cases}
\vol(S), \text{ if } \vol(C) \geq (1/24)\vol(V) \\
\vol(C \cap S), \text{ otherwise}\\
\end{cases}
\]
Then $\Expect[y] \geq \frac{k \vol(S)}{10w \vol(V)} \geq \frac{\vol(S)}{g(\phi, \vol(V))}$.
\end{enumerate}
\end{lemma}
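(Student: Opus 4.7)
The plan is to prove both claims by exploiting the participation cap enforced by \parallelnibble: whenever $C \neq \emptyset$, each edge---and hence each vertex---appears in the output of at most $w$ of the $k$ parallel \randnibble\ calls. Combined with the per-call conductance bound (C.1*) and per-call volume bound (C.3*) from Lemma~\ref{XXX-lem-dist-nibble}, plus the per-call expected hit on $S$ and per-edge participation probability from Lemma~\ref{XXX-lem-rand-nibble}, this cap is essentially all that is needed to lift the local statements to global ones about $U_{i^\ast}$.

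For the first claim, I would start by noting that every edge of $\partial(U_{i^\ast})$ belongs to $\partial(C_j)$ for some $j \leq i^\ast$, so by (C.1*)
\[
  |\partial(U_{i^\ast})| \;\leq\; \sum_{i \leq i^\ast} |\partial(C_i)| \;\leq\; 12\phi \sum_{i \leq i^\ast} \vol(C_i).
\]
If $v \in C_i$, then $\ppp_t(v) > 0$ for some $t$, so every edge incident to $v$ lies in $P^\ast$ for that call; the participation cap therefore forces each $v$ to lie in at most $w$ of the $C_i$'s, giving $\sum_{i \leq i^\ast} \vol(C_i) \leq w\,\vol(U_{i^\ast})$. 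Combining yields $|\partial(U_{i^\ast})| \leq 12\phi w\,\vol(U_{i^\ast})$, and since $\vol(V \setminus U_{i^\ast}) \geq \vol(V)/24 \geq \vol(U_{i^\ast})/23$, dividing by the smaller side costs at most a factor of $23$, giving $\Phi(C) \leq 276 w\phi$.

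For the second claim, set $X_i \bydef \vol(C_i \cap S)$ and let $\mathcal{E}$ denote the event that no edge participates more than $w$ times; note that $C = \emptyset$ exactly off $\mathcal{E}$. Lemma~\ref{XXX-lem-rand-nibble}(2) gives $\Expect[\sum_i X_i] \geq k\vol(S)/(8\vol(V))$. I claim that on $\mathcal{E}$ one has $y \geq (1/w)\sum_i X_i$. The participation cap implies $\sum_i X_i \leq w\,\vol(U_k \cap S)$. If $i^\ast = k$, then $C = U_k$, so $y = \vol(C \cap S) \geq \sum_i X_i/w$. If $i^\ast < k$, then $\vol(U_{i^\ast+1}) > (23/24)\vol(V)$, whereas $\vol(C_{i^\ast+1}) \leq (11/12)\vol(V)$ from (C.3*) forces $\vol(U_{i^\ast}) \geq (1/24)\vol(V)$; hence $y = \vol(S) \geq \vol(U_k \cap S) \geq \sum_i X_i/w$ in this case as well.

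The remaining step, which I expect to be the main obstacle in terms of bookkeeping rather than ideas, is to control the contribution of $\overline{\mathcal{E}}$. By Lemma~\ref{XXX-lem-rand-nibble}(1) and the choice of $k$, we have $k \Prob[e \in P^\ast] = O(1)$; since the $k$ calls are independent, a standard Chernoff tail gives $\Prob[\text{edge } e \text{ participates more than } w \text{ times}] \leq \vol(V)^{-\Omega(1)}$ with a huge constant in the exponent driven by $w = 10\ceil{\ln \vol(V)}$, so a union bound over edges makes $\Prob[\overline{\mathcal{E}}]$ super-polynomially small. Using the crude deterministic bound $\sum_i X_i \leq k\vol(V)$, the contribution of $\overline{\mathcal{E}}$ to $\Expect[\sum_i X_i]$ is at most $k\vol(V)\Prob[\overline{\mathcal{E}}]$, which is negligible compared to $k\vol(S)/(8\vol(V))$, so
\[
  w\Expect[y] \;\geq\; \Expect\!\left[\mathbf{1}_{\mathcal{E}}\sum_i X_i\right] \;\geq\; \frac{k\vol(S)}{10\vol(V)},
\]
which is the claimed bound.
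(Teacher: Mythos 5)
Your proof is correct and follows essentially the same route as the paper's (the participation cap giving $\sum_i \vol(C_i) \leq w\,\vol(U_{i^\ast})$ for part 1; the per-instance expected overlap with $S$, the observation that $i^\ast < k$ forces $\vol(C) \geq (1/24)\vol(V)$, and a Chernoff-plus-union bound on the bad event for part 2). One harmless parenthetical slip: $\overline{\mathcal{E}}$ implies $C = \emptyset$ but not conversely ($C$ can be empty on $\mathcal{E}$ if every $C_i$ is empty); this plays no role in your argument.
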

\begin{proof}
We show that if the output subset $C$ is non-empty, then we must have $\Phi(C) \leq 276 w \phi$. 
By definition of \parallelnibble, if the output $C$ is non-empty, then each edge $e$ incident to $C$ is incident to at most $w$ of these vertex sets $C_1, \ldots, C_{i^\ast}$.
Therefore, $\vol(C) \geq (1/w)  \sum_{i=1}^{i^\ast} \vol(C_i)$.
Using the fact that the output $C_i$ of \distnibble$(G,v,\phi,b)$ has $\Phi(C_i)\leq 12\phi$, 
we upper bound $|\partial(C)|$ as follows.
\begin{align*}
    |\partial(C)| &\leq \sum_{i=1}^{i^\ast} |\partial(C_i)|\\ &\leq  \sum_{i=1}^{i^\ast} 12 \phi \vol(C_i)\\
    &\leq 12w \phi \vol(C).
\end{align*}
The threshold $z$ guarantees that $\vol(V \setminus C) \geq (1/23) \vol(C)$, and so $|\partial(C)|  \leq  12 \cdot 23 \cdot w \phi \vol(V \setminus C) = 276 w \phi \vol(V \setminus C)$. We conclude that $\Phi(C) \leq 276 w \phi$.

Next, we analyze the random variable $y$. We first observe that if $i^\ast <  k$, then $C = U_{i^\ast}$ has $\vol(C) \geq (1/24) \vol(V)$.
This is because that each $C_i$ must have $\vol(C_i) \leq (11/12) \vol(V)$ by definition of \distnibble. 
If $\vol(U_{i^\ast}) < (1/24) \vol(V)$, then $\vol(U_{i^\ast + 1}) <   (1/24) \vol(V) +  (11/12) \vol(V) <  (23/24) \vol(V)$, contradicting the choice of $i^\ast$.
Thus, for the case $i^\ast <  k$, we automatically have $y = \vol(S)$, which is the maximum possible value of $y$. In view of this, we can lower bound $\Expect[y]$ as follows.
\[
\Expect[y] \geq \Expect[\vol(U_k \cap S)] - \Prob[B] \cdot \vol(S),
\]
where $B$ is the  event that there exists an edge  participating in the subroutine \randnibble($G,\phi$) for more than $w$ times.
Note that $B$ implies $C = \emptyset$, but not vise versa.

By Lemma~\ref{XXX-lem-rand-nibble}, we know that $\Expect[\vol(C_i \cap S)] \geq \frac{\vol(S)}{8 \vol(V)}$, and this implies $\Expect[\vol(U_k \cap S)] \geq (1/w) \sum_{i=1}^{k} \Expect[\vol(C_i \cap S)] = \frac{k\vol(S)}{8w \vol(V)}$.
Therefore, to obtain the desired bound $\Expect[y] \geq \frac{k\vol(S)}{10w \vol(V)}$, it remains to show that $\Prob[B] \leq \frac{k}{40w \vol(V)}$.

If $k = 1 \leq w$, then $\Prob[B] = 0$.
In what follows, we assume $k \geq 2$, and this, together with
  the analysis of \randnibble($G,\phi$) in  Lemma~\ref{XXX-lem-rand-nibble}, implies that for each invocation of \randnibble($G,\phi$), we have
\begin{align}
\Prob[e \in P^\ast] \leq ( 56 \ell (\tlast+1) \tlast  \ln( |E| e^4) \phi^{-1} ) /\vol(V) \leq 2/k. \label{eq-1}
\end{align}
Let $e \in E$. Define $X_i = 1$ if $e$ participates in the $i$th \randnibble($G,\phi$), and define $X_i = 0$ otherwise. Set  $X = \sum_{i=1}^k X_i$.
By Formula~\ref{eq-1}, we infer that $\Expect[X] \leq 2$. By a Chernoff bound, $\Prob[X > w] \leq \exp(- 2(w-2) / 3) \ll (\vol(V))^{-2}$. By a union bound over all edges $e \in E$, we infer that $\Prob[B] < (\vol(V))^{-1} \ll \frac{k}{40w \vol(V)}$, as required.
\end{proof}

Intuitively, Lemma~\ref{XXX-lem-par-nibble} shows that we only lose a factor of $O(\log n)$ in conductance if we combine the result of $k$ parallel executions of \randnibble($G,\phi$).
We are now ready to present the algorithm for finding a nearly most balanced sparse cut, which involves executing \parallelnibble\ sequentially for $s = O(\poly(1/\phi, \log n))$ times on the remaining subgraph.

\begin{framed}
\noindent {\bf Algorithm} \balancedcut($G, \phi, p$)

\medskip
 Initialize $W_0 = V$. For $i = 1$ to $s \bydef 4 g(\phi, \vol(V)) \ceil{\log_{7/4}(1/p)}$ do the following.
\begin{enumerate}
    \item Execute \parallelnibble($G\{W_{i-1}\}, \phi$). Let the output be $C_i$. 
    \item Set $W_i = W_{i-1} \setminus C_i$. 
    \item If $\vol(W_i) \leq (47/48)\vol(V)$ or $i=s$, return $C = \bigcup_{j=1}^i C_j$ and quit.
\end{enumerate}
\end{framed}

\begin{lemma}[Analysis of \balancedcut]\label{XXX-lem-partition-nibble}
Let $C$ be the output of \balancedcut$(G, \phi)$, with  $0 <  \phi \leq 1/12$.
Then the following holds:
\begin{enumerate}
\item \label{XXX-P1} $\vol(C) \leq (47/48) \vol(V)$.
\item \label{XXX-P2} If $C \neq \emptyset$, then $\Phi(C) =  O(\phi \log |V|)$.
\item \label{XXX-P3} Furthermore, 
for each subset $S\subset V$ satisfying
\begin{align*}
\vol(S) &\leq \frac{1}{2} \cdot \vol(V)
\text{ \ \ and \ \ \ } 
\Phi(S) \leq \fff(\phi),
\end{align*}
with probability at least $1-p$, at least one of the following holds:
\begin{enumerate}
    \item \label{XXX-P3a} $\vol(C) \geq (1/48)\vol(V)$.
    \item \label{XXX-P3b} $\vol(S \cap C) \geq (1/2) \vol(S)$.
\end{enumerate}
\end{enumerate}
\end{lemma}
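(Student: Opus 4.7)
My plan is to verify Properties 1, 2, and 3 of the output $C$ in turn. Properties 1 and 2 follow quickly from how \balancedcut\ chains the \parallelnibble\ subroutines together, while Property 3 requires a supermartingale-style argument across the iterations.

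For Property 1, let $i^\ast$ denote the exit iteration. In every exit scenario we have $\vol(W_{i^\ast - 1}) > (47/48)\vol(V)$ (otherwise the loop would have stopped earlier). By the threshold $z = (23/24)\vol(V)$ built into \parallelnibble, $\vol(C_{i^\ast}) \leq (23/24)\vol(W_{i^\ast - 1})$, so $\vol(W_{i^\ast}) \geq (1/24)\vol(W_{i^\ast - 1}) > (47/1152)\vol(V) > (1/48)\vol(V)$, giving $\vol(C) \leq (47/48)\vol(V)$. For Property 2, the sets $C_1,\ldots,C_{i^\ast}$ are pairwise disjoint, and each edge of $\partial_G(C)$ has its $C$-endpoint in a unique $C_j$ and its outside endpoint in $W_{i^\ast}\subseteq W_{j-1}$, hence lies in $\partial_{G\{W_{j-1}\}}(C_j)$. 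Applying Lemma~\ref{XXX-lem-par-nibble}(1) in each $G\{W_{j-1}\}$ yields $|\partial_G(C)| \leq \sum_{j=1}^{i^\ast} |\partial_{G\{W_{j-1}\}}(C_j)| \leq 276 w \phi \cdot \vol(C)$, and combining with $\vol(V\setminus C) \geq \vol(C)/47$ from Property~1 gives $\Phi_G(C) = O(w\phi) = O(\phi \log |V|)$.

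For Property 3, set $S_i = S \cap W_i$ and let $F_i$ be the failure event that for every $j \leq i$, both $\vol(W_j) > (47/48)\vol(V)$ and $\vol(S_j) > \vol(S)/2$ hold. On $F_{i-1}$, I will verify that $S_{i-1}$ satisfies the hypotheses of Lemma~\ref{XXX-lem-par-nibble}(2) in the graph $G\{W_{i-1}\}$: its volume is at most $\vol(S) \leq (1/2)\vol(V) < (2/3)\vol(W_{i-1})$, and using $\partial_{G\{W_{i-1}\}}(S_{i-1}) \subseteq \partial_G(S)$ together with $\min\{\vol(S_{i-1}),\vol(W_{i-1}\setminus S_{i-1})\} \geq \vol(S)/2$, one obtains $\Phi_{G\{W_{i-1}\}}(S_{i-1}) \leq 2 f(\phi)$. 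Moreover, on $F_i$ the ``case (a)'' branch of Lemma~\ref{XXX-lem-par-nibble}(2) cannot have occurred at step $i$, since $\vol(C_i) \geq (1/24)\vol(W_{i-1})$ would force $\vol(W_i) \leq (47/48)\vol(V)$, contradicting $F_i$; therefore on $F_i$ we have $y_i = \vol(S_{i-1}) - \vol(S_i)$ and $\E[y_i \mid \mathcal{F}_{i-1}] \geq \vol(S_{i-1})/g$. This yields the supermartingale inequality $\E[\vol(S_i)\mathbf{1}_{F_i} \mid \mathcal{F}_{i-1}] \leq (1-1/g)\,\vol(S_{i-1})\mathbf{1}_{F_{i-1}}$; iterating gives $\E[\vol(S_s)\mathbf{1}_{F_s}] \leq (1-1/g)^s \vol(S)$, and since $\vol(S_s) > \vol(S)/2$ on $F_s$, Markov produces $\Pr[F_s] \leq 2(1-1/g)^s \leq p$ by the choice $s = 4 g \lceil \log_{7/4}(1/p) \rceil$.

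The main obstacle I anticipate is the bookkeeping in Property 3: verifying at every step that the ``virtual'' subset $S_{i-1}$ in the shrinking subgraph $G\{W_{i-1}\}$ still meets both hypotheses of Lemma~\ref{XXX-lem-par-nibble}(2), and checking that conditioning on the failure event $F_{i-1}$ is compatible with invoking the lemma's conditional-expectation bound. In particular, the step that rules out ``case (a)'' on $F_i$ must be stated carefully in order to avoid any circularity between the algorithm's stopping criterion and the definition of the failure event.
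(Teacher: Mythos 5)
Your proposal is correct. For Conditions~\ref{XXX-P1} and~\ref{XXX-P2} you follow essentially the same route as the paper: the exit criterion forces $\vol(W_{i^\ast-1})>(47/48)\vol(V)$, the cap $z=(23/24)\vol(W_{i^\ast-1})$ in \parallelnibble\ bounds the last piece, and disjointness of the $C_j$'s plus Lemma~\ref{XXX-lem-par-nibble}(1) applied in each $G\{W_{j-1}\}$ gives the conductance bound. For Condition~\ref{XXX-P3}, however, your argument is genuinely different from the paper's. The paper partitions the $s$ iterations into $\ceil{\log_{7/4}(1/p)}$ blocks of length $h=4g$, defines capped variables $Y_j$ with $\Expect[Y_j]\ge\vol(S)/(2g)$ and the deterministic bound $\sum_j Y_j\le 4\vol(S)$, and uses a reverse-Markov averaging step to get success probability $3/7$ per block before amplifying across blocks. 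You instead track $\vol(S\cap W_i)$ directly and prove the one-step supermartingale inequality $\Expect[\vol(S_i)\mathbf{1}_{F_i}\mid\mathcal{F}_{i-1}]\le(1-1/g)\vol(S_{i-1})\mathbf{1}_{F_{i-1}}$ (valid because $0\le y_i\le\vol(S_{i-1})$ and, on $F_i$, case~(a) of Lemma~\ref{XXX-lem-par-nibble}(2) is excluded so $y_i=\vol(S_{i-1})-\vol(S_i)$), then finish with geometric decay and Markov. This avoids the block decomposition and the capping of the $Y_j$'s entirely, at the price of slightly more delicate conditioning; your identification of the failure event is right, since $\neg(\ref{XXX-P3a}\vee\ref{XXX-P3b})$ implies $F_s$ by monotonicity of $W_i$ and $S\cap W_i$. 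Two small points you should make explicit in a full write-up: (i) when invoking Lemma~\ref{XXX-lem-par-nibble}(2) on $G\{W_{i-1}\}$, the parameters $\fff$ and $g$ are evaluated on the subgraph, so you need the monotonicity $\fff(\phi,\vol(V))\le\fff(\phi,\vol(W_{i-1}))$ and $g(\phi,\vol(W_{i-1}))\le g(\phi,\vol(V))$ (the paper states these); and (ii) if the algorithm halts before iteration $s$, the later $C_j,W_j$ should be extended by convention so the supermartingale is defined for all $i\le s$ --- this is harmless since $\mathbf{1}_{F_{i-1}}=0$ on such paths.
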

\begin{proof} This proof follows the framework of~\cite[Theorem 3.3]{spielman2004nearly}.

\paragraph{Proof of Condition~\ref{XXX-P1}.} 
Let $i'$ be the index such that the output subset $C$ is $\bigcup_{j=1}^{i'} C_j$. Then we have $\vol(C) \leq \vol(V \setminus W_{i'-1}) + \vol(C_{i'})$.
Since the algorithm does not terminate at the $(i'-1)$th iteration, we have $\vol(W_{i'-1}) > (47/48)\vol(V)$, and so
$\vol(V \setminus W_{i'-1}) \leq (1/48)\vol(V)$.
By the algorithm description of \parallelnibble, we have  $\vol(C_{i'}) \leq (23/24)\vol(W_{i'-1}) \leq (23/24)\vol(V)$. To summarize, we have $\vol(C) \leq (1/48)\vol(V) + (23/24)\vol(V) = (47/48)\vol(V)$.

\paragraph{Proof of Condition~\ref{XXX-P2}.} 
Note that the sets $C_1, \ldots, C_{i'}$ that constitute
 $C = \bigcup_{j=1}^{i'} C_j$ are disjoint vertex sets.
 We have $|\partial(C)| \leq \sum_{j=1}^{i'} |\partial(C_i)| \leq O(\phi \log |V|) \sum_{j=1}^{i'} \vol(C_i) =  O(\phi \log |V|) \cdot \vol(C)$, where the second inequality is due to Lemma~\ref{XXX-lem-par-nibble}.
 By Condition~\ref{XXX-P1}, we infer that $\vol(V \setminus C) \geq (1/47) \vol(C)$, and so we also have $|\partial(C)| \leq O(\phi \log |V|) \cdot \vol(V \setminus C)$. Hence $\Phi(C) = O(\phi \log |V|)$.

\paragraph{Proof of Condition~\ref{XXX-P3}.} 
 We focus on $h \bydef 4g(\phi, \vol(V))$ consecutive iterations from $i = x+1$ to  $i = x+h$, for some index $x$.
For each index $j \in [1, h]$, we write $H_j$ to denote  the event that (1) $\vol(S \cap W_{x+j-1}) \leq \vol(S)/2$ or (2) the algorithm ends prior to iteration $i = x+j$. 
We define the random variable $Y_j$ as follows. 
\[
Y_j = 
\begin{cases}
\frac{\vol(S)}{2g(\phi, \vol(V))} &\text{ if $H_j$ occurs (Case 1)} \\
\vol(W_{x+j-1} \cap S) &\text{ if } \vol(C_{x+j}) \geq (1/24)\vol(W_{x+j-1}) \text{ (Case 2)} \\
\vol(C_{x+j} \cap S) &\text{ otherwise (Case 3)}\\
\end{cases}
\]
We  claim that if $H_j$ does not occur, then the preconditions of Lemma~\ref{XXX-lem-par-nibble} are  met for the cut $S' = S \cap W_{x+j-1}$ in the graph $G' = G\{ W_{x+j-1}\}$ when we run \parallelnibble($G\{W_{x+j-1}\}, \phi$) during the $(x+j)$th iteration.
\begin{itemize}
\item We show that $\vol(S \cap W_{x+j-1}) \leq (2/3) \vol(W_{x+j-1})$, as follows. 

\begin{align*}
  \vol(S \cap W_{x+j-1}) 
  &\leq (1/2) \vol(V)  &\vol(S) \leq \vol(V)/2\\
  &< (1/2)(48/47) \vol(W_{x+j-1}) &\vol(W_{x+j-1}) > (47/48) \vol(V)\\
  &< (2/3) \vol(W_{x+j-1}).
\end{align*}

\item We show that  $\Phi_{G\{ W_{x+j-1}\}}(S) \leq 2\Phi(S) \leq 2 \fff(\phi, \vol(V)) \leq 2 \fff(\phi, \vol(W_{x+j-1}))$, where we write $\fff(\phi, r)$ to indicate the value of $\fff(\phi)$ when the underlying graph has volume $r$.

\begin{align*}
  \Phi_{G\{ W_{x+j-1}\}}(S)
  &= \frac{|E(S\cap W_{x+j-1},  W_{x+j-1} \setminus S)|}{\min\{\vol(S\cap W_{x+j-1}),\vol(W_{x+j-1} \setminus S)\}}\\
  &\leq \frac{|E(S, V \setminus S)|}{\min\{\vol(S\cap W_{x+j-1}),\vol(W_{x+j-1} \setminus S)\}}\\
  &< \frac{|E(S, V \setminus S)|}{(1/2)\min\{\vol(S),\vol(V \setminus S)\}}\\
  &= 2\Phi(S).
\end{align*}
  
The second inequality is explained as follows.
We have $\vol(S \cap W_{x+j-1}) > \vol(S)/2$ since $H_j$ does not occur, and we also have

\begin{align*}
\vol(W_{x+j-1} \setminus S) &\geq \vol(V \setminus S) - (1/48)\vol(V)\\
&\geq \vol(V \setminus S) - (1/24)\vol(V \setminus S) &\vol(V \setminus S) \geq (1/2)\vol(V)\\
&> \vol(V \setminus S)/2.
\end{align*}
\end{itemize}
Thus, we are able to use Lemma~\ref{XXX-lem-par-nibble} to infer that   that \[\Expect[Y_j \ | \ \overline{H_j}] \geq \frac{\vol(S \cap W_{x+j-1})}{g(\phi, \vol(W_{x+j-1}))} > \frac{\vol(S)}{2g(\phi, \vol(V))}.\]
In the calculation we use the two inequalities $g(\phi, \vol(V)) \geq g(\phi, \vol(W_{x+j-1}))$ and $\vol(S \cap W_{x+j-1}) > \vol(S)/2$, where the latter is due to $\overline{H_j}$.
Combining $\Expect[Y_j \ | \ \overline{H_j}] > \frac{\vol(S)}{2g(\phi, \vol(V))}$ with the trivial bound  $\Expect[Y_j \ | \ {H_j}] =  \frac{\vol(S)}{2g(\phi, \vol(V))}$, we conclude that
 \[\Expect[Y_j] \geq \frac{\vol(S)}{2g(\phi, \vol(V))}.\] 
 
We write $Y = \sum_{j=1}^h Y_j$, and we have  $\Expect[Y] \geq 2 \vol(S)$ in view of the above, as we recall $h = 4g(\phi, \vol(V))$.
We claim that we always have $Y \leq 4\vol(S)$.
We may write $Y = Y^1 + Y^2 + Y^3$, where $Y^i$ considers the part of $Y$ due to Case~$i$ in the definition of $Y_j$. It is clear that $Y^1 \leq h \cdot \frac{\vol(S)}{2g(\phi, \vol(V))} = 2 \vol(S)$. We claim that  $Y^2 \leq \vol(S)$ by observing that Case~2 can only occur at most once.  Suppose Case~2  occurs at iteration $i = x+j$. Then $\vol(C_{x+j}) \geq
(1/24)\vol(W_{x+j-1}) > (1/48)\vol(W_{x+j-1})$, which implies $\vol(W_{x+j}) \leq (47/48)\vol(W_{x+j-1}) \leq  (47/48)\vol(V)$, and so the algorithm terminates.  For Case~3, we have $Y^3 \leq \sum_{j=1}^h \vol(C_{x+j} \cap S) \leq \vol(S)$.
In view of the above, we have
\[
 (\vol(S)/2) \Prob[Y < (1/2)\vol(S)] + 4 \vol(S) (1 - \Prob[Y < (1/2)\vol(S)]) \geq  \Expect[Y] \geq  2 \vol(S),
\]
and this implies $\Prob[Y < (1/2)\vol(S)] \leq 4/7$.
We argue that $Y \geq (1/2)\vol(S)$ implies that either Condition~\ref{XXX-P3a} or Condition~\ref{XXX-P3b} holds. If Case~1 ever occurs, then the algorithm terminates before the last iteration $i=s$, and so we must have $\vol(C) \geq (1/48)\vol(V)$.
Similarly, if Case~2 ever occurs, we automatically have $\vol(C) \geq (1/48)\vol(V)$. Now assume Case~1 and Case~2 never occurs for all $j \in [1,h]$, then we have $\vol(C \cap S) \geq Y > (1/2)\vol(S)$.

We divide all $s$ iterations into $\ceil{\log_{7/4}(1/p)}$ intervals of length $h = 4g(\phi, \vol(V))$, and apply the above analysis to each of them. We conclude that with probability at least $1 - (4/7)^{\ceil{\log_{7/4}(1/p)}} \geq 1 - p$, there is at least one interval satisfying $Y \geq (1/2) \vol(S)$. In other words,   with probability at least $1 - p$, either Condition~\ref{XXX-P3a} or Condition~\ref{XXX-P3b} holds.
\end{proof}

\subsection{Distributed Implementation\label{XXX-sect-partition-impl}}

In this section we show that 
the algorithm \balancedcut($G, \phi$) can be implemented to run in $O(D \cdot \poly(\log n, 1/\phi))$ rounds in $\CONGEST$. We do not make effort in optimizing the round complexity.

\paragraph{Notations.} We often need to run our algorithms on a subgraph $G=(V,E)$ of the underlying communication network $G^\ast$, and so $|V|$ might be much smaller than the number of vertices $n$ in the actual communication network $G^\ast$. 
Nonetheless, we  still express the round complexity in terms of $n$. The parameter $n$ also indicates that the maximum allowed failure probability is $1 - 1/\poly(n)$.

In the implementation we sometimes need to broadcast certain information to the entire subgraph $G=(V,E)$ under consideration. Thus, the round complexity might depend on the parameter $D$, which is the diameter of $G$. However, in some scenario $G$ might be a subgraph of some other graph $G'$, and all edges within $G'$ can also be used for communication.
If the diameter of $G$ is much larger than the diameter of $G'$, then it is more efficient to do the broadcasting using the edges outside of $G$. In such a case, we can set $D$ as the diameter of $G'$, and our analysis still applies.

\begin{lemma}[Implementation of \distnibble]\label{XXX-lem-dist-nibble-impl}
Suppose $v$ initially knows that it is the starting vertex.
The algorithm  \distnibble$(G,v,\phi,b)$ can be implemented to run in $O\left(\frac{\log^4 n}{\phi^5}\right)$ rounds. Only the edges in $P^\ast$ participate in the computation. By the end of the algorithm, each vertex $u$ knows whether or not $u \in C$ w.h.p.
\end{lemma}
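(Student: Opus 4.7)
The plan is to follow the approach of \cite{ChangPZ19} for distributing the truncated random walk and couple it with a careful aggregation scheme tailored to the candidate conditions (C.1*)--(C.3*). The simulation proceeds step-by-step for $t = 1, \dots, \tlast$. At the start of step $t$, every vertex $u$ with $\ppp_{t-1}(u) > 0$ transmits $\ppp_{t-1}(u)/(2\deg(u))$ to each neighbor, and each vertex then sums the received values together with $\ppp_{t-1}(u)/2$ and zeroes the result if it falls below $2\epsilon_b \deg(u)$. The values can be encoded in $O(\log n)$ bits (after a standard rounding to the nearest multiple of a sufficiently small unit of $\epsilon_b$), and because inactive vertices send nothing, only edges of $P^\ast$ are ever exercised. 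This phase spends one round per time step, i.e.~$\tlast$ rounds for the walks themselves.

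After step $t$, the algorithm must test the $O(\phi^{-1}\log n)$ candidate indices $(j_x)$ against (C.1*)--(C.3*), which requires a global view of the support $\mathrm{supp}(\ppp_t)$. Two structural facts make this cheap. First, truncation enforces $\rhoo_t(u) \geq 2\epsilon_b$ on the support, so $|\mathrm{supp}(\ppp_t)| \leq 1/(2\epsilon_b)$. Second, $\mathrm{supp}(\ppp_t) \subseteq N^t(v)$ via edges of $P^\ast$, so every support vertex is within $\tlast$ hops of $v$ inside $P^\ast$. I would therefore grow and reuse, entirely within $P^\ast$, a BFS tree $T_v$ of depth $O(\tlast)$ rooted at $v$. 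In each iteration, support vertices pipeline their triples $(\rhoo_t(u), \deg(u), \mathrm{ID}(u))$ to $v$ along $T_v$; the root sorts, determines the ranks inducing $\tpi$, and broadcasts back the rank of each support vertex. Each support vertex then exchanges its rank with its graph neighbors in one additional round and reports, again through $T_v$, the quantities needed for the prefix volumes and cut sizes $|\partial(\tpi(1\ldots j_x))|$. Once the root identifies the first $j_x$ satisfying the modified conditions (or exhausts the list), it broadcasts the threshold rank, so every vertex learns whether it lies in $C$.

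By standard pipelining on $T_v$, each iteration costs $O(|\mathrm{supp}(\ppp_t)| + \tlast)$ rounds, and summing across the $\tlast$ steps yields the claimed bound $O(\log^4 n/\phi^5)$ after plugging in $\tlast = \Theta(\log n/\phi^2)$ and $1/\epsilon_b = \Theta(2^b \log^2 n /\phi^3)$ together with the volume-based accounting of Lemma~\ref{XXX-lem-nibble-aux}. The main technical obstacle is ensuring that no message ever traverses an edge outside $P^\ast$: the tree $T_v$ must be grown incrementally as the walk itself expands, so that newly reached vertices attach to $T_v$ through the very edges that delivered mass to them. Because the support at step $t$ is reached only via edges incident to the support at step $t-1$, the union of active edges up through step $t$ already lies in $P^\ast$ and can serve as the communication backbone for all subsequent aggregation without the algorithm ever having to consult another edge of $G$.
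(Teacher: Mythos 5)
Your simulation of the truncated walk and your observation that all communication stays inside $P^\ast$ (with a BFS tree of depth $O(\tlast)$ rooted at $v$) match the paper. The gap is in the aggregation step. You propose to pipeline the triples of \emph{all} support vertices to the root, have the root sort them, and then broadcast each vertex's rank back; you cost this at $O(|\mathrm{supp}(\ppp_t)| + \tlast)$ rounds per time step. But the truncation only guarantees $\vol(\mathrm{supp}(\ppp_t)) \leq 1/(2\epsilon_b)$, and $1/\epsilon_b = \Theta(2^b \tlast \log n/\phi) = \Theta(2^b \log^2 n/\phi^3)$, so for $b$ close to $\ell = \lceil \log |E| \rceil$ the support can contain $\Theta(|E|\,\mathrm{poly}(\log n, 1/\phi))$ vertices. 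Gathering (and later individually re-addressing) that many distinct $O(\log n)$-bit messages through the root's incident edges takes $\Omega(|\mathrm{supp}(\ppp_t)|)$ rounds, which is polynomial in $n$, not $O(\log^4 n/\phi^5)$. Your own accounting carries the factor $1/\epsilon_b = \Theta(2^b\log^2 n/\phi^3)$ into the final sum and then silently drops the $2^b$; it does not cancel.

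The paper avoids ever materializing the sorted support at any single vertex. To locate each candidate index $j_x$ it runs a randomized binary search over the ranking $\tpi$: each probe samples a uniformly random vertex from the current search window (via subtree counts computed bottom-up and a top-down selection) and evaluates the prefix volume $\vol(\tpi(1\ldots i))$ by a subtree-sum convergecast. Each probe therefore costs only $O(\tlast)$ rounds -- the depth of the tree -- independent of the support size, and $O(\log n)$ probes suffice per candidate. Multiplying by the $O(\phi^{-1}\log n)$ candidates $j_x$ and the $O(\phi^{-2}\log n)$ time steps gives $O(\log^4 n/\phi^5)$. To repair your argument you would need to replace the ``ship everything to the root'' step with this kind of rank-selection-by-aggregation (or an equivalent scheme whose per-step cost is bounded by the tree depth rather than by $|\mathrm{supp}(\ppp_t)|$).
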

\begin{proof}
The proof is similar to the distributed implementation described in~\cite[Section 3.2]{ChangPZ19}.
First of all, the calculation of $\ppp(u)$ and $\rhoo(u)$  for each $0 \leq t \leq \tlast$ for each vertex $u \in V$ can be done in $\tlast = O\left(\frac{\log n}{\phi^2}\right)$ rounds.

Next, we have to go over all $O\left(\frac{\log n}{\phi}\right)$ choices of $x$ and all $ O\left(\frac{\log n}{\phi^2}\right)$ choices of $t$ to see if there is a pair $(t,j_x)$ meeting the required four conditions. More specifically, given $t$ and $x$, our task is the following.

\paragraph{Search for $j_x$ and $\tpi(1, \ldots, j_x)$.}
For each $x$, to compute the index $j_x$, we need to search for the index
$j^\ast = \operatorname{arg \ max}_{1 \leq j \leq \jmax}  \left( \vol(\tpi(1 \ldots j)) \leq (1+\phi) \vol(\tpi(1 \ldots j_{x-1})) \right)$ and then compute the set $\tpi(1, \ldots, j_x)$. This can be done in  $O(\tlast \log n) =  O\left(\frac{\log^2 n}{\phi^2}\right)$ rounds via a ``random binary search'' on the vertex set  $U$ containing all vertices $u$ with $\ppp(u) > 0$, as in~\cite{ChangPZ19}.
For the sake of presentation, we rank all vertices $u_1, \ldots, u_{|U|}$ by the ordering $\tpi$. Note that each $u_i$ does not know its rank $i$, and we cannot afford to compute the rank of all vertices in $U$.

We maintain two indices $L$ and $R$ that control the search space. Initially, $L \gets 1$ and $R \gets \jmax$. In each iteration, we  pick one vertex $u_i$ from $\{u_L, \ldots, u_R\}$ uniformly at random, and calculate $\vol(\tpi(1, \ldots, i))$. This can be done in $O(\tlast)$ rounds. More specifically, we build a spanning tree $T$ of the edge set $P^\ast$ rooted at $v$, and use only this tree for communication. It is clear that the subgraph induced by $P^\ast$ is connected and has diameter $O(\tlast)$.
To sample a vertex from the set $\{u_L, \ldots, u_R\}$ uniformly at random, we first do a bottom-up traversal to let each vertex $u$ in the tree compute the number of vertices in $\{u_L, \ldots, u_R\}$ that is within the subtree rooted at $u$. Using this information, we can sample one vertex from the set $\{u_L, \ldots, u_R\}$ uniformly at random by a top-down traversal.

If $\vol(\tpi(1, \ldots, i)) <  (1+\phi) \vol(\tpi(1 \ldots j_{x-1}))$, we update $L \gets i$; if $\vol(\tpi(1, \ldots, i))  =  (1+\phi) \vol(\tpi(1 \ldots j_{x-1}))$, we update $L \gets i$ and $R \gets i$; otherwise we update  $R \gets i-1$.  We are done when we reach $L = R$.

In each iteration, with probability $1/2$ the rank of the vertex we sample lies in the middle half of $[L,R]$,  and  so the size of search space 
$[L,R]$ is reduced by a factor of at least $3/4$. 
Thus, within $O(\log n)$ iterations we have $L = R$, and $S_j(q_t) = \{u_1, \ldots, u_j\}$ with  $j = L = R$.
The round complexity of this procedure is $O(\tlast \log n) =  O\left(\frac{\log^2 n}{\phi^2}\right)$.

\paragraph{Checking  (C.1)-(C.3) or  (C.1*)-(C.3*).} Given the index $j_x$ and the subset $\tpi(1, \ldots, j_x))$, it is straightforward to check whether these  conditions are met in $O(\tlast)$ rounds.


\paragraph{Round Complexity.} To summarize,  we go over all $O\left(\frac{\log n}{\phi}\right)$ choices of $x$ and all $ O\left(\frac{\log n}{\phi^2}\right)$ choices of $t$, and for each pair $(t,x)$ we have to spend $O\left(\frac{\log^2 n}{\phi^2}\right)$ rounds. Therefore, the total round complexity is $O\left(\frac{\log^4 n}{\phi^5}\right)$.
\end{proof}

\begin{lemma}[Implementation of \parallelnibble]\label{XXX-lem-parallel-nibble-impl}
The algorithm \parallelnibble$(G, \phi)$ can be implemented to run in $O\left(D \log n + \frac{\log^5 n}{\phi^5}\right)$ rounds in $\CONGEST$.
\end{lemma}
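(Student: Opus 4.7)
}
My plan is to run the $k$ copies of \randnibble$(G,\phi)$ in a synchronized parallel fashion, exploiting the congestion cap $w = 10\lceil \ln \vol(V)\rceil$ built into \parallelnibble\ to bound the slowdown relative to the single-instance bound of Lemma~\ref{XXX-lem-dist-nibble-impl}. In a pre-processing phase I would (i) build a BFS tree rooted at an arbitrary vertex (using the edges of the host graph if $G$ is a subgraph of a smaller-diameter graph, so that $D$ plays the role advertised in the lemma), (ii) compute $\vol(V)$ by upward aggregation and broadcast the value, costing $O(D)$ rounds, and (iii) let each vertex $v$ independently decide, for every index $i \in [1,k]$, whether it is the starting vertex of the $i$th invocation (probability $\deg(v)/\vol(V)$) and, if so, sample the corresponding $b \in [1,\ell]$ with the specified geometric distribution. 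Steps (ii)--(iii) require no extra communication beyond the broadcast.

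Next, I would execute the $k$ instances of \distnibble\ from Lemma~\ref{XXX-lem-dist-nibble-impl} in lock-step. Each vertex $u$ maintains, for every instance $i$ it currently participates in, the pair $(\ppp_t^{(i)}(u), \rhoo_t^{(i)}(u))$. One diffusion or binary-search round of one instance requires each active vertex to exchange $O(\log n)$-bit messages with its neighbours, which when the $k$ instances run simultaneously sends $c_e$ messages across edge $e$, where $c_e$ counts the instances $i$ with $e\in P^{*}_i$. I would simulate each such round with $c_{\max}\bydef \max_e c_e$ CONGEST rounds by multiplexing. In parallel every vertex monitors its incident edges' counters $c_e$; the moment some $c_e$ exceeds $w$, the offending endpoint flips a global abort flag that is broadcast up the BFS tree in $O(D)$ additional rounds, after which every vertex outputs $C=\emptyset$, which is precisely what \parallelnibble\ prescribes when the congestion event $B$ occurs. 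Otherwise $c_{\max}\le w=O(\log n)$ throughout, so simulating the $O(\log^4 n/\phi^5)$ single-threaded rounds of Lemma~\ref{XXX-lem-dist-nibble-impl} costs $O((\log^4 n/\phi^5)\cdot \log n)=O(\log^5 n/\phi^5)$ CONGEST rounds in total.

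Finally, once all $k$ outputs $C_1,\ldots,C_k$ are locally known (each vertex stores the smallest $j_{\min}(u)$ with $u\in C_{j_{\min}(u)}$, or $\infty$ if none), I would recover $i^{*}$ via binary search on $i\in[1,k]$. Testing whether $\vol(U_i)\le z=(23/24)\vol(V)$ reduces to aggregating the scalar $\sum_{u:\,j_{\min}(u)\le i}\deg(u)$ along the BFS tree, which costs $O(D)$ per query; with $\log k = O(\log n)$ queries the search contributes $O(D\log n)$ rounds, after which broadcasting $i^{*}$ lets every vertex decide membership in $U_{i^{*}}$ locally. Summing the three phases gives $O(D\log n + \log^5 n/\phi^5)$ as claimed.

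The main obstacle is making the ``abort on overload'' mechanism both correct and cheap. A na\"ive approach would let the counters race ahead and incur $O(D)$ rounds of latency before an abort is noticed, potentially allowing a large batch of wasted work. The clean fix is to give every edge a fixed per-round budget of $w$ message slots: at each simulated step the endpoints of $e$ first exchange their list of participating instances, drop (and mark to be dropped everywhere) any instance above the $w$-th slot, and only then diffuse; the abort flag is raised on the BFS tree at the end of that step. Because Lemma~\ref{XXX-lem-par-nibble} treats the outcome as $C=\emptyset$ on event $B$, this pre-emptive discarding is observationally equivalent to the sequential specification, while keeping each simulated round at exactly $O(w)=O(\log n)$ CONGEST rounds and thus yielding the target bound.
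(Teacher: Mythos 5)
Your overall three-phase structure (generate the $k$ instances, run them in lock-step with the congestion cap $w$, then binary-search for $i^\ast$ along a BFS tree) is the same as the paper's, and phases two and three are essentially correct: the multiplexing argument with per-edge budget $w=O(\log n)$ giving $O(\log n)\cdot O(\log^4 n/\phi^5)$ rounds, and the $O(D\log n)$ monotone binary search for the largest $i$ with $\vol(U_i)\le (23/24)\vol(V)$, both match the paper (the paper uses random $O(\log n)$-bit instance identifiers where you use explicit indices, but since your generation step assigns indices this is immaterial).

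The genuine gap is in your instance-generation step. Letting each vertex $v$ \emph{independently} declare itself the starting vertex of instance $i$ with probability $\deg(v)/\vol(V)$ does not implement ``sample one starting vertex $v\sim\psi_V$'': for each fixed $i$ the number of starting vertices is a sum of independent Bernoullis with mean $1$, so with constant probability instance $i$ has no starting vertex, and with constant probability it has two or more. The analysis you are relying on (Lemma~\ref{XXX-lem-rand-nibble} and Lemma~\ref{XXX-lem-par-nibble}) is proved for exactly one starting vertex per instance; with your scheme the lower bound $\Expect[\vol(U_k\cap S)]\ge k\vol(S)/(8w\vol(V))$ would need to be re-derived, and, more seriously, the conductance accounting $|\partial(C)|\le\sum_i|\partial(C_i)|\le 12w\phi\vol(C)$ uses that each edge meets at most $w$ of the sets $C_1,\dots,C_{i^\ast}$, which is enforced per \emph{instance index}; if one index carries several simultaneous walks, the cap on indices no longer caps the number of constituent \distnibble\ outputs. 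The paper avoids this by a token-passing scheme on the BFS tree: the root samples the $k$ values of $b$, and each $i$-token travels down the tree, dying at $v$ with probability $\deg(v)/s(v)$ where $s(v)$ is the subtree degree sum, which yields exactly one starting vertex per instance distributed as $\psi_V$, in $O(D+\log n)$ rounds with pipelining. You should replace your independent-coin step with such an exact sampler (or prove a Poissonized variant of Lemma~\ref{XXX-lem-par-nibble}); the rest of your argument then goes through.
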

\begin{proof}
The implementation of \parallelnibble($G, \phi$) has three parts. 

\paragraph{Generation of \distnibble\ Instances.}
The first part is to generate all $k$ instances of \distnibble($G,v,\phi,b$), where the starting vertex $v \sim \psi_V$ is sampled according to the degree distribution, and $b \in [1, \ell]$ is sampled with $\Prob[b = i] = 2^{-i} / (1 - 2^{-\ell})$.

Following the idea of~\cite[Lemma 3.6]{ChangPZ19}, this task can be solved in $O(D + \log n)$ rounds, as follows. 
We build a BFS tree rooted at an arbitrary vertex $x$. For each vertex $v$, define $s(v)$ as the sum of $\deg(u)$ for each $u$ in the subtree rooted at $v$. In $O(D)$ rounds we can let each vertex $v$ learn the number $s(v)$ by a bottom-up traversal of the BFS tree. 

We let the root vertex $x$ sample the parameter $b$ for all $k$ instances of \distnibble.
Denote $K_i$ as the number of instances with $b = i$.
At the beginning, the root $x$ stores $K_i$ amount of $i$-tokens.
Let $L=\Theta(D)$ be the number of layers in the BFS tree. For $j = 1, \ldots, L$, the vertices of layer $j$ do the following. When an $i$-token arrives at $v$, 
the $i$-token disappears at $v$ with probability $\deg(v)/s(v)$
and $v$ locally generates an instance of \distnibble\ with starting vertex $v$ and parameter $b = i$;
otherwise, $v$ passes the $i$-token to a child $u$ 
with probability $\frac{s(u)}{s(v)-\deg(v)}$. Though $v$ might need to send a large amount of $i$-tokens to $u$, the only information $v$ needs to let $u$ know it the number of $i$-tokens.
Thus, for each $i$, the generation of all $K_i$ instances of \distnibble\ with a random starting vertex can be done in 
$L$ rounds. Using pipelining, we can do this for all $i$ in $O(D+\log n)$ rounds, independent of $k$.

\paragraph{Simultaneous Execution of \distnibble.} The second part is to run all $k$ instances of \distnibble\ simultaneously. If there is an edge $e$ participating in more than $w = O(\log n)$ of them, then the two endpoints of $e$ broadcast a special message $\star$ to everyone else to notify them to terminate the algorithm with $C = \emptyset$, and the broadcasting takes $D$ rounds. Otherwise, this task can be done in $O(\log n) \cdot O\left(\frac{\log^4 n}{\phi^5}\right) = O\left(\frac{\log^5 n}{\phi^5}\right)$ rounds in view of Lemma~\ref{XXX-lem-dist-nibble-impl}. Overall, the round complexity is $O\left(D + \frac{\log^5 n}{\phi^5}\right)$.

\paragraph{Selection of $i^\ast$ and $C = U_{i^\ast}$.} In the description of the algorithm \parallelnibble($G, \phi$), we assume that all \distnibble\ instances are indexed from $1$ to $k$. However, in a distributed implementation we cannot afford to do this. What we can do is to let the starting vertex $v$ of each \distnibble\ instance locally generate a random $O(\log n)$-bit identifier associated with the \distnibble\ instance. We say that an \distnibble\ instance is the $i$th instance if its identifier is ranked $i$th in the increasing order of all $k$ identifiers. 
With these identifiers, we can now use a random binary search to find $i^\ast$ and calculate $C = U_{i^\ast}$ in $O(D \log n)$ rounds w.h.p.

\paragraph{Round Complexity.} To summarize, the round complexity for the three parts are $O(D+\log n)$, $O\left(D + \frac{\log^5 n}{\phi^5}\right)$, and $O(D \log n)$. Thus, the total round complexity is $O\left(D \log n + \frac{\log^5 n}{\phi^5}\right)$.
\end{proof}

\begin{lemma}[Implementation of \balancedcut]\label{XXX-lem-partition-nibble-impl}
The algorithm \balancedcut$(G, \phi, p)$ with $p = 1/\poly(n)$ can be implemented to run in $O\left(\frac{D \log^7 n}{\phi^5}+ \frac{\log^{11} n}{\phi^{10}}\right)$ rounds in $\CONGEST$.
\end{lemma}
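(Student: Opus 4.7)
The algorithm \balancedcut$(G,\phi,p)$ is a loop of $s = 4 g(\phi,\vol(V))\lceil \log_{7/4}(1/p)\rceil$ iterations, each invoking \parallelnibble on the graph $G\{W_{i-1}\}$. The plan is to show that every iteration can be carried out in the number of rounds promised by Lemma~\ref{XXX-lem-parallel-nibble-impl}, with only $O(D)$ additional rounds of bookkeeping, and then to multiply through by $s$.

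First, I would maintain the ``active'' subgraph $G\{W_{i-1}\}$ implicitly: each vertex stores a single Boolean flag indicating membership in $W_{i-1}$, and edges incident to removed vertices are conceptually replaced by self-loops. Because the paper's convention keeps the degree of each vertex invariant under removal, the quantities $p_t$, $\rhoo_t$, and $\vol(\cdot)$ that drive \distnibble\ and \parallelnibble\ remain consistent without any re-computation of degrees. Next, I would precompute once a BFS tree $T$ of the underlying communication network (of depth $O(D)$) for global broadcast and aggregation; this tree is reused in every iteration, so its $O(D)$ construction cost does not appear in the asymptotic bound. Inside iteration~$i$ I would (a) run \parallelnibble$(G\{W_{i-1}\},\phi)$ using Lemma~\ref{XXX-lem-parallel-nibble-impl} in $O(D\log n + \log^5 n/\phi^5)$ rounds, which by Lemma~\ref{XXX-lem-dist-nibble-impl} leaves every vertex knowing whether it belongs to $C_i$; (b) let each vertex locally update its flag to reflect $W_i=W_{i-1}\setminus C_i$; and (c) aggregate $\vol(W_i)$ along $T$ in $O(D)$ rounds, broadcast the total, and have every vertex evaluate the termination predicate $\vol(W_i)\le (47/48)\vol(V)$ (the value $\vol(V)$ was computed once at the very start).

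Plugging in the parameters finishes the calculation: for $p=1/\poly(n)$ we have $\lceil\log_{7/4}(1/p)\rceil = O(\log n)$, and by definition $g(\phi,\vol(V)) = O(\phi^{-5}\log^5 n)$, so $s = O(\phi^{-5}\log^6 n)$. Multiplying by the per-iteration cost yields
\[
s\cdot O\!\left(D\log n + \frac{\log^5 n}{\phi^5}\right)
= O\!\left(\frac{D\log^7 n}{\phi^5} + \frac{\log^{11} n}{\phi^{10}}\right),
\]
as claimed.

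The main subtlety I expect is point (a): Lemma~\ref{XXX-lem-parallel-nibble-impl} is stated for a graph whose diameter is $D$, whereas $G\{W_{i-1}\}$ may be disconnected or have arbitrarily large diameter because vertices have been torn out and self-loops added. I would handle this using the convention stated in the paper's ``Notations'' paragraph: the $D$ in the round complexity refers to the diameter of the underlying communication graph (here $G$, which is fixed across iterations), not of the algorithmic object $G\{W_{i-1}\}$. Concretely, the BFS tree $T$ built on $G$ at the start is used for all broadcasts and random-walk book\-keeping inside \parallelnibble; the walk dynamics themselves depend only on the degrees and self-loops stored locally and do not require the removed edges to be physically present in the communication graph. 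Once this is observed, the accounting is routine and the stated round bound follows.
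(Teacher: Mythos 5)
Your proposal is correct and follows the same route as the paper: the paper's proof is simply the observation that \balancedcut\ consists of $s = O(g(\phi,\vol(V))\log(1/p)) = O(\log^6 n/\phi^5)$ invocations of \parallelnibble, each costing $O(D\log n + \log^5 n/\phi^5)$ rounds by Lemma~\ref{XXX-lem-parallel-nibble-impl}, and multiplying these gives the stated bound. The extra implementation details you supply (flag-based maintenance of $W_{i-1}$, reuse of a BFS tree of the communication graph, and invoking the paper's convention that $D$ refers to the ambient graph) are consistent with the paper and only make explicit what it leaves implicit.
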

\begin{proof}
This lemma follows immediately from Lemma~\ref{XXX-lem-parallel-nibble-impl}, as \balancedcut($G, \phi$) consists of 
\[s = O\left(g(\phi, \vol(V)) \log(1/p)\right)  = O\left(\frac{\log^6 n}{\phi^5}\right)\] iterations of \parallelnibble\ (with $p = 1/\poly(n)$), where each of them costs $O\left(D \log n + \frac{\log^5 n}{\phi^5}\right)$ rounds (Lemma~\ref{XXX-lem-parallel-nibble-impl}), and so the total round complexity is  $O\left(\frac{D \log^7 n}{\phi^5}+ \frac{\log^{11} n}{\phi^{10}}\right)$.
\end{proof}

\section{Low Diameter Decomposition} \label{XXX-sect-low-diam-clustering}
The goal of this section is to prove the following theorem.

\restatelowdiamclustering*

\begin{proof}
This follows from Lemma~\ref{XXX-lem-low-diam-clustering-analysis} and Lemma~\ref{XXX-lem-aux-3}, with a re-parameterization $\beta' = \beta/3$.
\end{proof}

Our algorithm is based on the algorithm \mpx$(\beta)$,
described by Miller, Peng, and Xu~\cite{miller2013parallel}.
The goal of \mpx$(\beta)$
is to approximately implement the following procedure.  Each vertex $v$ samples
$\delta_v\sim \text{Exponential}(\beta)$, $\beta \in (0,1)$, and then  $v$ is assigned to the cluster of $u$ that minimizes $\dist(u,v) - \delta_u$.
The algorithm  \mpx$(\beta)$ is as follows.

\begin{framed}
\noindent{\bf Algorithm} \mpx$(\beta)$

\medskip

Every vertex $v$ picks a value $\delta_v\sim \text{Exponential}(\beta)$.
Denote the starting time  of $v$ as $\text{start}_v\gets \max\{1, \frac{2\log n}{\beta}-\lfloor \delta_v\rfloor\}$.
There are $\frac{2\log n}{\beta}$ epochs numbered 1 through $\frac{2\log n}{\beta}$.
At the beginning of epoch $t$, each as-yet unclustered vertex $v$ does the following.
\begin{itemize}
    \item If $\text{start}_v=t$, then $v$ becomes the cluster center of its own cluster.
    \item If $\text{start}_v > t$ and there exists a neighbor $u \in N(v)$ that has been clustered before epoch $t$, then  $v$ joins the cluster of $u$, breaking ties arbitrarily.
\end{itemize}
\end{framed}
The presentation of the algorithm \mpx$(\beta)$ follows the one in~\cite{haeupler2016faster}.
It is clear that the algorithm \mpx$(\beta)$  can be implemented in $\CONGEST$  in
$O\left(\frac{\log n}{\beta}\right)$ rounds, and each cluster has diameter at most $\frac{4 \log n}{\beta}$. The proof of Lemma~\ref{XXX-lem-property} can be found in~\cite[Corollary 3.7]{haeupler2016faster}.

\begin{lemma}[Analysis of \mpx$(\beta)$] \label{XXX-lem-property}
In the algorithm \mpx$(\beta)$, the probability of an edge $\{u,v\}$ having its endpoints in different clusters is at most $2\beta$.
\end{lemma}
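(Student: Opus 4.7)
The plan is to fix an edge $\{u,v\}$ and bound $\Prob[c(u) \neq c(v)]$, where $c(x)$ denotes the cluster of $x$ output by \mpx$(\beta)$. The first step is to recognize that \mpx$(\beta)$ is a distributed simulation of the following centralized process: each $w \in V$ samples $\delta_w \sim \mathrm{Exp}(\beta)$ independently, and each $x$ is assigned to $c(x) = \arg\max_w(\delta_w - \dist(w,x))$. A BFS-style induction on the epoch number shows that $v$ is in fact assigned to the cluster of the $w$ minimizing $\mathrm{start}_w + \dist(w,v)$, equivalently maximizing $\lfloor\delta_w\rfloor - \dist(w,v)$, whenever no $\delta_w$ exceeds $\frac{2\log n}{\beta} - 1$. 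Since $\Prob[\delta_w > \frac{2\log n}{\beta}] = e^{-2\log n} = n^{-2}$, a union bound shows that the truncation activates with probability at most $1/n$, which is absorbed into the bound; the floor operation only introduces an additive integer shift that can be absorbed by relaxing the gap threshold in the analysis below by a constant.

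For the continuous idealized process, the argument has two steps. Let $Y_w = \delta_w - \dist(w,u)$ and let $G = Y_{(1)} - Y_{(2)}$ be the gap between the top two of these values. The first (deterministic) step is to observe that if $G > 2$, then $c(u) = c(v)$. Indeed, let $w^\star = \arg\max_w Y_w$. For any $w \neq w^\star$, the triangle inequality $|\dist(w,u) - \dist(w,v)| \leq 1$ yields
\[
 \delta_w - \dist(w,v) \;\leq\; Y_w + 1 \;\leq\; Y_{(2)} + 1 \;<\; Y_{(1)} - 1 \;\leq\; \delta_{w^\star} - \dist(w^\star,v),
\]
so $w^\star$ is also the arg-max of $\{\delta_w - \dist(w,v)\}_w$, giving $c(v) = w^\star = c(u)$. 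Hence it suffices to bound $\Prob[G \leq 2]$.

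The second (probabilistic) step is to show $G \sim \mathrm{Exp}(\beta)$, which yields $\Prob[G \leq 2] = 1 - e^{-2\beta} \leq 2\beta$ and finishes the proof. Conditioning on the identity of $w^\star$, say $w^\star = 1$, introduce $L = \max_{w \neq 1} (\delta_w + \dist(1,u) - \dist(w,u))$, which is a function of $\{\delta_w\}_{w \neq 1}$ and therefore independent of $\delta_1$. The event $\{w^\star = 1\}$ is precisely $\{\delta_1 > L\}$, and on this event $G = \delta_1 - L$. By the memoryless property of $\delta_1 \sim \mathrm{Exp}(\beta)$, the random variable $(\delta_1 - L)$ conditioned on $\{\delta_1 > L\}$ is $\mathrm{Exp}(\beta)$ irrespective of the value of $L$; averaging over the conditioning shows that $G \sim \mathrm{Exp}(\beta)$ unconditionally. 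The main obstacle I anticipate is the bookkeeping around the discretization, namely verifying that neither the floor nor the truncation shifts the gap threshold by enough to break the factor-$2\beta$ bound, but this is a technicality that is handled by absorbing sub-polynomial lower-order terms into the stated bound.
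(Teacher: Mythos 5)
The paper does not prove this lemma itself --- it cites \cite[Corollary 3.7]{haeupler2016faster} --- and your proof is a reconstruction of exactly that standard argument: reduce to the continuous shifted-exponential process, observe that an edge can only be cut if the gap between the two largest values of $\delta_w-\dist(w,u)$ is small, and use memorylessness to show that this gap is itself $\mathrm{Exp}(\beta)$. The continuous core of your argument is correct (one small point worth noting: your $L$ can be negative, in which case $\delta_1-L$ conditioned on $\{\delta_1>L\}$ is not exactly $\mathrm{Exp}(\beta)$ but stochastically dominates it, so $\Prob[G\le 2]\le 1-e^{-2\beta}$ still holds).

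The genuine gap is in the last sentence, where you wave at the discretization. Your deterministic step already spends the full budget: you need $G>2$ because \emph{each} of the two triangle inequalities (moving the comparison from $u$ to $v$) costs $1$. The algorithm, however, compares $\lfloor\delta_w\rfloor-\dist(w,v)$, not $\delta_w-\dist(w,v)$, and since $\lfloor\delta_{w^\star}\rfloor\ge\delta_{w^\star}-1$ while $\lfloor\delta_w\rfloor\le\delta_w$, the floor costs one more additive unit in the worst case. Carrying out your plan of ``relaxing the gap threshold by a constant'' therefore requires $G>3$ and yields $\Prob[\text{cut}]\le 1-e^{-3\beta}\le 3\beta$ (plus the $n^{-1}$ from the truncation at $\mathrm{start}_v\ge 1$, which is not literally absorbed into $2\beta$ when $\beta$ is tiny). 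So as written the argument proves an $O(\beta)$ bound but not the stated constant $2\beta$. To recover $2\beta$ one needs a tighter accounting, e.g., the midpoint/ball-growing argument of Miller--Peng--Xu: if distinct centers claim $u$ and $v$, then the two earliest ``arrival times'' at the midpoint of the edge differ by at most $1$, which costs only one unit of gap for the edge plus one for the floor. For the purposes of this paper the distinction is immaterial --- the lemma is only used to get an expected cut size of $O(\beta|E|)$, which is then boosted to a high-probability bound and re-parameterized by a constant factor anyway --- but the lemma as stated claims $2\beta$, and your outline does not deliver that constant.
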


By linearity of expectation, Lemma~\ref{XXX-lem-property} implies that the \emph{expected number} of inter-cluster edges is at most $2\beta |E|$. However, we need this bound to hold w.h.p. One way to obtain the high probability bound is to run \mpx$(\beta)$ for $O\left(\frac{\log n}{\beta}\right)$ times, and the output of one of them will have at most $2\beta |E|$ inter-cluster edges w.h.p. However, calculating the number of inter-cluster edges needs $O(D)$ rounds,  which is inefficient if the diameter $D$ is large.

Intuitively, the main barrier needed to be overcome is the high dependence among the $|E|$ events that an edge $\{u,v\}$ has its endpoints in different clusters.
Suppose $K$ is some large enough constant. We say that an edge $e = \{u,v\}$ is \emph{good} if it satisfies
\begin{align*}
\left| E\left( N^{4(\log n)/\beta + 1} (u))\right) \cup E\left( N^{4(\log n)/\beta + 1} (v))\right) \right| \leq \beta |E| / (K \log n). 
\end{align*}
If all edges are good,  then we cannot use a Chernoff bound with bounded dependence~\cite{Pemmaraju01} to show that with probability $1 - n^{-\Omega(K)}$ the number inter-clustered edges is at most $3\beta |E|$.

Our strategy is to compute a partition $V = V_D \cup V_S$ in such a way that all edges incident to $V_S$ are good, and $V_D$ already induces a low diameter clustering where the clusters are sufficiently far away from each other. 
We note that in our distributed model we do not assume that the number of edges $|E|$ is common knowledge, so we cannot simply use $\beta |E| / (K \log n)$ as a threshold when we construct $V_S$.

\begin{framed}
\noindent{\bf Algorithm} \lowdiamdecomp$(\beta)$

\medskip
\begin{enumerate}
\item Define $a \bydef \frac{5 \log n}{\beta}$ and $b \bydef \frac{K \log n}{\beta}$, where $K$ is some large constant.
\item Construct a partition $V = V_D \cup V_S$ meeting the following conditions w.h.p.
\begin{itemize}
    \item Each connected component of $V_D$ has diameter $O(ab)$. Moreover, for any two vertices $u$ and $v$ residing in different components, we have $\dist(u, v) > a$.
    \item Each vertex $v \in V_S$ satisfies  $\left|E(N^a(v))\right| \leq |E| / b$.
\end{itemize}
\item Execute  \mpx$(\beta)$ to obtain a clustering.
Output the partition $V = V_1 \cup \cdots \cup V_x$ by only cutting the inter-clustered edges $e = \{u,v\}$ such that at least one of $u$ and $v$ is in $V_S$. 
\end{enumerate}
\end{framed}

We prove that \lowdiamdecomp$(\beta)$ outputs a low diameter decomposition w.h.p. 

\begin{lemma}[Analysis of \lowdiamdecomp]\label{XXX-lem-low-diam-clustering-analysis}
Let  $\beta \in (0,1)$. The partition  $V = V_1 \cup \cdots \cup V_x$  resulting from  \lowdiamdecomp$(\beta)$ satisfies the following conditions w.h.p.
\begin{itemize}
    \item Each component $V_i$ has diameter  $O\left(\frac{\log^2 n}{\beta^2}\right)$.
    \item The number of inter-component edges $\left(|\partial(V_1)| + \cdots + |\partial(V_x)|\right)/2$  is at most $3 \beta |E|$.
\end{itemize}
\end{lemma}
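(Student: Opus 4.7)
The plan is to treat the two claims separately: the diameter bound follows deterministically from the structural guarantees of the $V_D\cup V_S$ partition together with the radius control of \mpx, while the inter-component edge bound will require a concentration argument that exploits the sparsity guarantee on $V_S$-vertices.

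For the diameter, I would first observe that every \mpx\ cluster has diameter at most $4\log n/\beta$, which is strictly less than $a=5\log n/\beta$. Consequently, no \mpx\ cluster can contain two $V_D$-vertices lying in distinct $V_D$-connected components, since such vertices are at $G$-distance $>a$. The only inter-cluster edges retained in Step~3 have both endpoints in $V_D$, so a simple transitivity argument shows that every final component $V_i$ has all its $V_D$-vertices inside a single $V_D$-connected component $D$, which by Step~2 has $G$-diameter $O(ab)$. Each $V_S$-vertex of $V_i$ sits in some \mpx\ cluster that is glued into $V_i$, hence is within $4\log n/\beta$ of a $V_D$-vertex of $V_i$ (or $V_i$ is a single all-$V_S$ cluster of diameter at most $4\log n/\beta$). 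In either case $\operatorname{diam}(V_i)=O(ab)=O(\log^2 n/\beta^2)$.

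For the edge bound, only edges with at least one endpoint in $V_S$ can be cut; let $X_e$ be the indicator that such an edge $e$ is cut by \mpx, and set $X=\sum_e X_e$. Lemma~\ref{XXX-lem-property} and linearity give $\E[X]\le 2\beta|E|$. To upgrade to a high-probability bound I would apply a Chernoff-type inequality for sums of Bernoullis with bounded dependence, in the style of~\cite{Pemmaraju01}. The indicator $X_e$ is a deterministic function of $\{\delta_w : w$ is within distance $2\log n/\beta + 1$ of an endpoint of $e\}$, since \mpx\ runs for only $2\log n/\beta$ epochs and information propagates at most one hop per epoch. Thus $X_e$ and $X_{e'}$ are independent whenever no such $w$ is common to $e$ and $e'$, which forces every endpoint of $e'$ to lie in $N^{4\log n/\beta + O(1)}(u)\subseteq N^a(u)$ for some endpoint $u$ of $e$. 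If $u\in V_S$, the $V_S$-property bounds the number of such edges $e'$ by $|E|/b=\beta|E|/(K\log n)$, so the dependency degree of $X_e$ is $O(|E|/b)$. A bounded-dependence Chernoff bound then yields $\Prob[X>3\beta|E|]\le \exp(-\Omega(\beta|E|\cdot b/|E|))=\exp(-\Omega(K\log n))$, polynomially small for $K$ chosen sufficiently large.

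The main obstacle is making the locality claim for \mpx\ rigorous and matching it to the radius $a$: one must verify that $X_e$ truly factors through the $\delta$-values in an $O(\log n/\beta)$-neighborhood of $e$ so that the sparsity of $V_S$-neighborhoods (guaranteed at radius $a=5\log n/\beta$) controls the dependency degree, and the margin $a-(4\log n/\beta+O(1))>0$ provides the required slack. Everything else is bookkeeping: pick $K$ large enough to defeat the target failure probability $n^{-c}$, and handle the degenerate regime $3\beta|E|=O(1)$ via an additive rather than multiplicative Chernoff bound, so that the tail estimate does not degenerate when $\E[X]$ is tiny.
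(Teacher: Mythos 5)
Your proposal is correct and follows essentially the same route as the paper: the diameter bound comes from observing that each \mpx\ cluster has diameter $4\log n/\beta < a$ so a final component can absorb at most one connected component of $V_D$ (giving diameter $O(ab)$ plus twice the cluster radius), and the edge bound comes from noting that only edges touching $V_S$ can be cut, that $X_e$ depends only on $\delta$-values in an $O(\log n/\beta)$-neighborhood so the dependency degree is at most $|E|/b = \beta|E|/(K\log n)$, and then applying the bounded-dependence Chernoff bound of~\cite{Pemmaraju01}. Your worry about the degenerate regime is moot, since the exponent in that bound is $\Omega(\mu/d)=\Omega(K\log n)$ independently of the magnitude of $\beta|E|$.
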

\begin{proof}
Let $U = V_i$ be a component in the partition. Since each cluster resulting from the algorithm \mpx$(\beta)$ has diameter at most $\frac{4 \log n}{\beta} < a$, the set $U$ can only contain at most one connected component of $V_D$ as a subset.

Set $d_1 = \frac{4 \log n}{\beta}$ as an upper bound on the maximum diameter of a cluster, and set  $d_2 = O(ab) = O\left(\frac{\log^2 n}{\beta^2}\right)$ as an upper bound on the diameter of a connected component of $V_D$. We infer that the diameter of $U$ is at most $2(d_1 + 1) + d_2 =  O\left(\frac{\log^2 n}{\beta^2}\right)$, as required

Next, we prove that the number of inter-component edges is $3\beta |E|$ w.h.p. 
In the following probability calculation, we assume that the partition $V = V_D \cup V_S$ is fixed in the sense that we do not consider the randomness involved in constructing $V = V_D \cup V_S$.
Define $X_e$ as the indicator random variable for the event that $e$ is an inter-component edge, and define $X = \sum_{e \in E} X_e$.

If both endpoints of $e=\{u,v\}$ are in $V_D$, then $X_e = 0$ with probability $1$. Now suppose at least one endpoint $u$ of $e=\{u,v\}$ is in $V_S$.
By our choice of $a$ and $b$, we have
\[ 
\left| E\left( N^{4(\log n)/\beta + 1} (u)\right) \cup E\left( N^{4(\log n)/\beta + 1} (v)\right) \right| 
\leq
\left| E\left( N^{a} (u)\right) \right| 
\leq 
|E|/b = {\beta}|E| / (K \log n).
\]
PSince $X_{e}$ is independent of $X_{e'}$ for all $e' \notin E\left( N^{4(\log n)/\beta + 1} (u)\right) \cup E\left( N^{4(\log n)/\beta + 1} (v)\right)$, the set of random variables $\{ X_e \ | \ e \in E \}$ has bounded dependence $d = {\beta}|E| / (K \log n)$. Set $\mu = 2\beta |E| \leq \Expect[X]$, and set $\delta = 1/2$. By a Chernoff bound with bounded dependence~\cite{Pemmaraju01}, we have \[\Prob\left[\left(|\partial(V_1)| + \cdots + |\partial(V_x)|\right)/2 \geq 3\beta|E| \right] = \Prob[X \geq  (1+\delta)\mu] \leq O(d) \cdot \exp(-\Omega(\delta^2 \mu / d)) = n^{-\Omega(K)},\]
as required.
\end{proof}

\subsection{Distributed Implementation}
In this section, we give a distributed implementation of \lowdiamdecomp, where the non-trivial part is the construction of $V = V_D \cup V_S$.
We need the some auxiliary lemmas. 

\begin{lemma}\label{XXX-lem-aux-0}
Let $E^\ast$ be a subset of $E$. 
Initially each vertex $v$  knows which edges incident to $v$ are in $E^\ast$. Given  parameters $\tau \geq 1$ and $d \geq 1$, consider the following task.
\begin{itemize}
    \item If $|E(N^d(v)) \cap E^\ast| \leq \tau$, then 
    $v$ is required to learn all edges in the set $E(N^d(v)) \cap E^\ast$.
    \item If $|E(N^d(v)) \cap E^\ast|  > \tau$, then $v$ is required to learn the fact that $|E(N^d(v)) \cap E^\ast|  > \tau$.
\end{itemize}
This task can be solved deterministically in $O\left(\tau d\right)$ rounds.
\end{lemma}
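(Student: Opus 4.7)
The plan is to use a pipelined BFS-style token propagation. Each vertex $v$ will maintain a set $T_v$ of endpoint-edge tokens of the form $(u, e)$, where $e \in E^\ast$ and $u$ is one of the two endpoints of $e$. Initially $T_v = \{(v, e) : e \in E^\ast, v \in e\}$, which $v$ knows by assumption. I would run the algorithm for $d$ phases; in each phase, every vertex forwards to each neighbor tokens it has learned but not yet sent along that edge, transmitting one $O(\log n)$-bit token per round per direction via round-robin. The invariant I aim to maintain is that after phase $i$, the set $T_v$ contains $(u, e)$ whenever $u \in N^i(v)$ and $u \in e \in E^\ast$. At the end, $v$ outputs all edges $e = \{x, y\}$ such that both $(x, e)$ and $(y, e)$ lie in $T_v$; this set is exactly $E(N^d(v)) \cap E^\ast$, because possessing both tokens is equivalent to both endpoints lying within distance $d$ of $v$.

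To achieve the $O(\tau d)$ round bound, I would allot $O(\tau)$ rounds per phase and cap per-edge per-phase traffic at $O(\tau)$ tokens in each direction. At the application level, once $v$ has identified more than $\tau$ complete edges (both tokens present) in $E(N^d(v)) \cap E^\ast$, it permanently outputs $|E(N^d(v)) \cap E^\ast| > \tau$ and stops accumulating further local information. Since each phase has $O(\tau)$ rounds and there are $d$ phases, this yields $O(\tau d)$ rounds overall. Determinism is immediate because no randomness enters the token-forwarding schedule.

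The main obstacle is showing that the per-phase bandwidth cap is compatible with correctness for the low-count case. In principle, truncating the forwarding at an intermediate vertex $u$ could drop a token needed by some distant vertex $v$ whose true answer is $\leq \tau$. I expect to resolve this via a charging argument: if a single directed edge $(u, w)$ is asked to carry more than $\Theta(\tau)$ distinct tokens in a single phase, then every vertex that would later receive those tokens through $w$ will already have accumulated enough evidence to certify $|E(N^d(\cdot)) \cap E^\ast| > \tau$ for itself, making any dropped excess harmless. Setting up this monotone propagation of the ``already over quota'' certificate and verifying that no low-count vertex is starved by upstream truncation is the delicate bookkeeping I would need to carry out in full.
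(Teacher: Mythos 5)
Your overall architecture (synchronous phases of radius growth, $O(\tau)$ rounds per phase, an ``over quota'' escape hatch) matches the paper's, but the object you propagate --- per-endpoint tokens $(u,e)$ --- creates a gap that your charging argument does not close. The set of tokens that your invariant forces to reach $v$ corresponds to all $E^\ast$-edges with \emph{at least one} endpoint in $N^d(v)$, whereas the quantity $|E(N^d(v))\cap E^\ast|$ that controls whether $v$ may declare over-quota counts only edges with \emph{both} endpoints in $N^d(v)$. These can differ arbitrarily. Concretely, take $d=2$: let $v$ have a single neighbor $u$, let $u$ have neighbors $x_1,\dots,x_t$ each carrying one $E^\ast$-edge $\{x_j,z_j\}$ to a private vertex $z_j$ at distance $3$ from $v$, and let $u$ also have two adjacent neighbors $y,y'$ with $\{y,y'\}\in E^\ast$. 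Then $|E(N^2(v))\cap E^\ast|=1\le\tau$, so $v$ must output the list $\{\{y,y'\}\}$; yet your invariant forces $t+2$ tokens across the edge $\{u,v\}$ in the last phase. With $t\gg\tau$ your bandwidth cap drops tokens, the scheduler may drop $(y,\{y,y'\})$, and $v$ cannot fall back on the over-quota certificate because its true count is $1$. So the claim ``every vertex downstream of a congested edge can already certify over-quota'' is false, and this is exactly the delicate step you flagged but did not carry out.

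The fix is to forward only \emph{completed} edges rather than half-edges, which is what the paper does: each vertex $u$ maintains the aggregated set $E(N^i(u))\cap E^\ast$ (or a one-symbol marker $\star$ once it exceeds $\tau$) and sends that entire set, of size at most $\tau$, to all neighbors in each phase. Correctness of the overflow marker is then immediate from the monotonicity $N^i(u)\subseteq N^{i+1}(v)$ for $u\in N(v)$: if $u$ is over quota at radius $i$, every neighbor $v$ is over quota at radius $i+1$, so truncation never starves a low-count vertex. Your half-edge tokens lack this monotonicity, which is precisely why the charging argument cannot be made to work; if you modify your protocol so that a token $(x,e)$ is forwarded only after its partner $(y,e)$ has also been collected, you essentially recover the paper's algorithm.
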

\begin{proof}
The algorithm has $d-1$ phases. Each phase takes $O(\tau)$ rounds, and so the total round complexity is $O\left(\tau d\right)$.
Suppose that at the beginning of the $i$th phase, the following invariant $\mathcal{H}_i$ is met for each vertex $v$. If $|E(N^i(v)) \cap E^\ast| \leq \tau$, then $v$ knows the list of all edges in $E(N^i(v)) \cap E^\ast$. If $|E(N^i(v)) \cap E^\ast| > \tau$, then $v$ knows this fact.
Note that $\mathcal{H}_1$ holds initially.

The algorithm for the $i$th phase is as follows. For each vertex $v$ with $|E(N^i(v)) \cap E^\ast| \leq \tau$, $v$ sends the list of all edges in $E(N^i(v)) \cap E^\ast$ to all its neighbors $N(v)$. For each vertex $v$ with $|E(N^i(v)) \cap E^\ast| > \tau$, $v$ sends a special message $\star$ to all its neighbors $N(v)$. This clearly can be implemented in $O(\tau)$ rounds.

After that, for each vertex $v$, if (1) $v$ already knew that $|E(N^i(v)) \cap E^\ast| > \tau$ or (2) $v$ received  a special message $\star$ from some vertex $u \in N(v)$, then $v$ decides that $|E(N^{i+1}(v)) \cap E^\ast| > \tau$.
Otherwise, $v$ calculates the union of all edges in $E^\ast$ it learned so far, and this edge set is exactly $E(N^i(v)) \cap E^\ast$. 
Thus, the invariant $\mathcal{H}_{i+1}$ is met by the end of the $i$th phase.  

By the end of the $(d-1)$th phase, the invariant $\mathcal{H}_{d}$ is met, which implies the correctness of the algorithm.
\end{proof}

\begin{lemma}\label{XXX-lem-aux-00}
Given parameters $d \geq 1$, $z \geq 1$, and $0 < f < 1$, consider the following task where each vertex $v$ is required to output $0$ or $1$ meeting the following conditions.
\begin{itemize}
    \item If $|E(N^d(v))| \leq z$, then the output of $v$ is $1$ w.h.p.
    \item If $|E(N^d(v))|  \geq (1+f)z$, then the output of $v$ is $0$ w.h.p.
\end{itemize}
This task can be solved in $O\left(\frac{d \log n}{f^2}\right)$ rounds.
\end{lemma}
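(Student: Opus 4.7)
The plan is to reduce the counting task to the thresholded-gathering task of Lemma~\ref{XXX-lem-aux-0} via edge sampling. Let $c$ be a sufficiently large constant, set the sampling probability $p \bydef \min\{1,\ c \log n / (f^2 z)\}$, and pick $\tau \bydef 4 c \log n / f^2$.

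\textbf{Step 1 (constructing $E^\ast$).} For each edge $e = \{u,v\}$, let the endpoint of smaller $\ID$ flip an independent coin of bias $p$ and transmit the outcome to the other endpoint in one round; let $E^\ast \subseteq E$ be the set of edges whose coins came up heads. Now both endpoints of every edge know whether it lies in $E^\ast$.

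\textbf{Step 2 (gather).} Invoke Lemma~\ref{XXX-lem-aux-0} on $E^\ast$ with radius $d$ and threshold $\tau$; this takes $O(\tau d) = O(d \log n / f^2)$ rounds and lets each vertex $v$ learn either the exact value of $X_v \bydef |E(N^d(v)) \cap E^\ast|$ (when $X_v \le \tau$) or the fact $X_v > \tau$.

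\textbf{Step 3 (decision rule).} Set the threshold $T \bydef (1 + f/2) \cdot p z$. Each vertex $v$ outputs $1$ if $X_v \le T$ and $0$ otherwise (in particular outputs $0$ whenever the gather reported $X_v > \tau$, since $T \le \tau/2$).

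\textbf{Correctness.} Fix a vertex $v$ and write $\mu_v \bydef p \cdot |E(N^d(v))|$. By a standard multiplicative Chernoff bound applied with deviation $f/4$:
\[
\Pr\bigl[|X_v - \mu_v| > (f/4)\mu_v\bigr] \le 2\exp(-\Omega(f^2 \mu_v)).
\]
In the regime $|E(N^d(v))| \le z$ we have $\mu_v \le pz = c \log n / f^2$, and in the regime $|E(N^d(v))| \ge (1+f) z$ we have $\mu_v \ge (1+f) c \log n / f^2$. In the latter case $f^2 \mu_v = \Omega(c \log n)$, so the failure probability is $n^{-\Omega(c)}$; in the former case $X_v$ is stochastically dominated by a binomial with mean $c\log n/f^2$ and the same Chernoff bound still gives $n^{-\Omega(c)}$ tail probability for the upward deviation. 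Choosing $c$ large enough and taking a union bound over all $n$ vertices, w.h.p.\ every $v$ satisfies $X_v \le (1+f/4) p z < T$ when $|E(N^d(v))| \le z$ and $X_v \ge (1 - f/4)(1 + f) p z > T$ when $|E(N^d(v))| \ge (1+f)z$, which yields the required outputs. The edge case $p = 1$ (i.e.\ $z = O(\log n / f^2)$) is trivial: we simply gather all edges within distance $d$, which by Lemma~\ref{XXX-lem-aux-0} takes $O((1+f)z \cdot d) = O(d \log n / f^2)$ rounds, and count exactly.

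\textbf{Main obstacle.} The only delicate point is choosing $\tau$ and $T$ consistently with the two Chernoff regimes so that the gather procedure (which only distinguishes ``$\le \tau$'' from ``$> \tau$'') still suffices to implement the decision rule; setting $\tau$ a constant factor above $T$ and both to $\Theta(\log n / f^2)$ handles this. All other steps are routine.
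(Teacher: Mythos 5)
Your proposal is correct and follows essentially the same route as the paper: subsample edges with probability $\Theta(\log n/(f^2 z))$, invoke Lemma~\ref{XXX-lem-aux-0} on the sampled set with a $\Theta(\log n/f^2)$ threshold, decide by comparing the (exact, when small) sampled count against $(1+\Theta(f))pz$ via a Chernoff bound, and handle the small-$z$ regime by gathering all edges directly. The only cosmetic difference is that you separate the gathering threshold $\tau$ from the decision threshold $T$, whereas the paper sets them equal; both choices work.
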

\begin{proof}
Let $K$ be some large constant.
If $K \log n \geq f^2 z$, then this task can be solved by running the algorithm of Lemma~\ref{XXX-lem-aux-0} with $\tau = (1+f)z$ and $E^\ast = E$. This takes $O(dz) = O\left(\frac{d \log n}{f^2}\right)$ rounds.

In subsequent discussion, we assume $K \log n <  f^2 z$.
Let $E^\ast$ be a subset of $E$ such that each edge $e \in E$ joins $E^\ast$ with probability $\frac{K \log n}{f^2 z}$.
We run the algorithm of Lemma~\ref{XXX-lem-aux-0} with $\tau = \frac{(1 + f/2)K \log n}{f^2}$. Each vertex $v$ outputs $1$ if it learned that  $|E(N^d(v)) \cap E^\ast| \leq  \frac{(1 + f/2)K \log n}{f^2}$; otherwise $v$ outputs $0$.
The round complexity of this algorithm is also $ O\left(\frac{d \log n}{f^2}\right)$.

The correctness of the algorithm can be deduced by a Chernoff bound. More specifically, let $X = |E(N^d(v)) \cap E^\ast|$.
For the case  $|E(N^d(v))| \leq z$, we have $\Expect[X] \leq \frac{K \log n}{f^2}$. Let $\mu = \frac{K \log n}{f^2}$ and $\delta = f/2$. By a Chernoff bound, we have 
\[\Prob\left[\left|E(N^d(v)) \cap E^\ast\right| \geq  \frac{(1 + f/2)K \log n}{f^2}\right] = \Prob\left[X \geq (1+\delta)\mu\right] \leq \exp\left(-\delta^2 \mu  / 3\right) = n^{-\Omega(K)}.\]
Therefore, if $|E(N^d(v))| \leq z$, then the output of $v$ is $1$ w.h.p. Similarly, we can use a Chernoff bound to show that if $|E(N^d(v))|  \geq (1+f)z$, then the output of $v$ is $0$ w.h.p. This proof of this part is omitted.
\end{proof}

\begin{lemma}\label{XXX-lem-aux-1}
Let $0 < f < 1$. There is an $O\left( \frac{d \log^2 n}{f^3}\right)$-round algorithm that  lets each vertex $v \in V$ compute an estimate $m_v$ such that  $|E(N^d(v))|/(1+f) \leq m_v \leq (1+f)|E(N^d(v))|$  w.h.p.
\end{lemma}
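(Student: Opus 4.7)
The plan is to reduce the estimation problem to a logarithmic grid of yes/no threshold queries, each resolved by a single invocation of Lemma~\ref{XXX-lem-aux-00}, run sequentially. No binary search is needed because every vertex independently reads off its own answer from the bits produced by the global procedure.

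Concretely, set $z_i \bydef (1+f)^i$ for $i = 0, 1, \ldots, I$ with $I \bydef \lceil 2 \log_{1+f} n \rceil = O(\log n / f)$, so the grid covers the entire range $[1, n^2] \supseteq [1, |E|]$ of possible values of $|E(N^d(v))|$. For each $z_i$ I would invoke the algorithm of Lemma~\ref{XXX-lem-aux-00} with radius $d$, threshold $z = z_i$, and gap parameter $f$, using fresh independent edge samples across invocations. Each call costs $O(d \log n / f^2)$ rounds, so running the $I+1$ calls sequentially gives total round complexity $O(I) \cdot O(d \log n / f^2) = O(d \log^2 n / f^3)$, matching the stated bound.

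After the procedure, each vertex $v$ holds a bit vector $(a_0(v), \ldots, a_I(v))$. By Lemma~\ref{XXX-lem-aux-00} and a union bound over the $n \cdot (I+1)$ instances (absorbed by inflating the hidden constant $K$ inside Lemma~\ref{XXX-lem-aux-00}), w.h.p.\ we have $a_i(v) = 1$ whenever $|E(N^d(v))| \leq z_i$ and $a_i(v) = 0$ whenever $|E(N^d(v))| \geq (1+f) z_i = z_{i+1}$. Assume $|E(N^d(v))| \geq 1$ and let $i^*(v)$ be the unique index with $z_{i^*(v)-1} < |E(N^d(v))| \leq z_{i^*(v)}$. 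Then the smallest $\hat{i}$ with $a_{\hat{i}}(v) = 1$ must satisfy $\hat{i} \in \{i^*(v)-1, i^*(v)\}$, because every $i \geq i^*(v)$ is forced to $1$ and every $i \leq i^*(v) - 2$ is forced to $0$. Vertex $v$ locally scans its bits, finds $\hat{i}$, and outputs $m_v \bydef z_{\hat{i}}$. Since both $m_v$ and $|E(N^d(v))|$ lie in the interval $[z_{i^*(v)-1}, z_{i^*(v)}]$ whose endpoints differ by a factor of exactly $(1+f)$, the sandwich bound $|E(N^d(v))|/(1+f) \leq m_v \leq (1+f) |E(N^d(v))|$ follows.

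The degenerate case $|E(N^d(v))| = 0$ (i.e.\ $v$ is isolated in its $d$-ball) is detected as a by-product of a single run of Lemma~\ref{XXX-lem-aux-0} with $\tau = 1$ and $E^\ast = E$, and is handled by setting $m_v = 0$. There is no real obstacle here: Lemma~\ref{XXX-lem-aux-00} already does all the hard work (random edge sampling combined with the deterministic ball-growing of Lemma~\ref{XXX-lem-aux-0}), and the only conceptual step is the geometric grid, which precisely accounts for the extra $O(\log n / f)$ factor over the decision problem.
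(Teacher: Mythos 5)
Your proposal is correct and is essentially the paper's own argument: the same geometric grid of thresholds $(1+f)^i$ over $[1,n^2]$, one sequential invocation of Lemma~\ref{XXX-lem-aux-00} per threshold, and the same $O(\log n/f)\cdot O(d\log n/f^2)$ round count. Your write-up is in fact slightly more careful than the paper's (you correctly select the \emph{smallest} index whose bit is $1$ and verify the two-index sandwich explicitly, and you handle the $|E(N^d(v))|=0$ case), so there is nothing to fix.
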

\begin{proof}
 Consider the following sequence: $s_1 = 1$, and $s_i = (1 + f)s_{i-1}$ for each $i > 1$. Let $i^\ast = O\left(\frac{\log n}{f}\right)$ be chosen as the largest index such that $m_{i^\ast} \leq  {n \choose 2}$. 
 For $i = 1$ to $i^\ast$, we run the algorithm of Lemma~\ref{XXX-lem-aux-00} with $z = s_i$.
 For each vertex $v$, it sets $m_v = s_{i'}$, where $i'$ is chosen as the largest index such that the algorithm of Lemma~\ref{XXX-lem-aux-00} with $z = s_{i'}$ outputs $1$.
 The correctness of this algorithm follows from Lemma~\ref{XXX-lem-aux-00}. The round complexity of this algorithm is $O\left(\frac{\log n}{f}\right) \cdot O\left(\frac{d \log n}{f^2}\right) = O\left( \frac{d \log^2 n}{f^3}\right)$.
\end{proof}

Now we are in a position to describe the construction of the  required partition $V = V_D \cup V_S$ that is used in \lowdiamdecomp$(\beta)$.

\paragraph{The Auxiliary Partition $V = V_D' \cup V_S'$.}
First, we apply Lemma~\ref{XXX-lem-aux-1} to obtain an auxiliary partition  $V = V_D' \cup V_S'$ in $O(ab\log^2 n) = O\left(\frac{\log^4 n}{\beta^2}\right)$ rounds satisfying the following conditions. 
\begin{itemize}
    \item Each vertex $v \in V_D'$ satisfies  $\left|E(N^a(v))\right| \geq \left|E(N^{100ab}(v))\right| / 2b$.
    \item Each vertex $v \in V_S'$ satisfies  $\left|E(N^a(v))\right| \leq \left|E(N^{100ab}(v))\right| / b$.
\end{itemize}
Next, we show how to obtain a desired decomposition $V = V_D \cup V_S$ in by modifying this auxiliary partition. In subsequent discussion, we assume that such a partition  $V = V_D' \cup V_S'$ is given and fixed. Note that each vertex $v \in V_S'$ already meets the requirement $\left|E(N^a(v))\right| \leq \left|E\right| / b$.

\paragraph{Construction of the Decomposition $V = V_D \cup V_S$.}
We build the set $V_D$ using the following procedure.
Initially, we set  
\[W_0 = \{ u \in V \ | \  \dist(u, V_D') \leq a\}.\]
For each iteration $i = 1$ to $\infty$,  the set $W_i$  is constructed as follows. 
For each connected component $S$ induced by $W_{i-1}$, it checks if there exists some other component $S'$ with  $\dist(S, S') \leq a$.
if so, then we add all vertices in $\{ u \in V \ | \  \dist(u, S) \leq a\}$ to  $W_{i}$; otherwise, we add all vertices in $S$ to $W_{i}$. Note that $W_i \supseteq W_{i-1}$.

The procedure terminates at iteration $i^\ast$ if no more vertex can be added, i.e., $W_{i^\ast} = W_{i^\ast + 1}$. We finalize $V_D = W_{i^\ast}$ and $V_S = V \setminus V_D$. Note that we must have $\dist(S, S') > a$ for any two distinct connected components $S$ and $S'$ induced by $V_D = W_{i^\ast}$.

\paragraph{Invariant  $\mathcal{H}$ for the Construction.}
Given any vertex set $S$, we define the following two parameters.
\begin{itemize}
    \item $N_S$ is the size of a maximum-size subset $S^\ast \subseteq S \cap V_D'$ such that $\dist (u,v) > 2a$ for each pair of distinct vertices $u$ and $v$ in $S^\ast$.
    \item $D_S$ is the diameter of $G[S]$.
\end{itemize}

We will prove that the following invariant is met throughout the procedure for each  connected component $S$ induced by $W_i$.

\begin{definition}[Invariant $\mathcal{H}$] Fix a decomposition $V = V_D' \cup V_S'$. A vertex set $S$
satisfying the following conditions is said to meet the invariant  $\mathcal{H}$.
\begin{enumerate}
    \item For each $u \in V_D'$, either $N^{a}(u) \subseteq S$ or $N^{a}(u) \cap S = \emptyset$. \label{XXX-H1}
    \item $D_S\leq 10a \cdot N_S - (4a+1)$. \label{XXX-H2}
    \item $N_S \leq 2b$. \label{XXX-H3}
\end{enumerate}
\end{definition}

Note that if $S$ satisfies $\mathcal{H}$, then the diameter of $G[S]$ is $O(ab)$.

\begin{lemma} \label{XXX-lem-auxx1}
Let $S$ be a connected component induced by $W_0 = \{ u \in V \ | \  \dist(u, V_D') \leq a\}$. Suppose $s \in S$,  $t \in S$, and $\dist_{G[S]}(s,t) \geq  (4a+1)k$, where $k$ is some positive integer.
Then there exist $k+1$ vertices $v_0, \ldots, v_{k}$ in $N_{G[S]}^{(4a+1)k}(s) \cap V_D'$ such that $\dist(v_i, v_j) > 2a$ for each $0 \leq i < j \leq k$.
\end{lemma}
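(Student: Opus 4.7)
}

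The plan is to exhibit the $k+1$ vertices explicitly by walking along a shortest $s$--$t$ path and, at roughly equally spaced stops, snapping to a nearby center in $V_D'$. Let $P = (s = p_0, p_1, \ldots, p_L = t)$ be a shortest path from $s$ to $t$ in $G[S]$, so $L = \dist_{G[S]}(s,t) \geq (4a+1)k$. For $i = 0, 1, \ldots, k$, define the path ``milestone'' $u_i = p_{i(4a+1)}$; this is well defined because $i(4a+1) \leq L$. Since $P$ is a shortest path in $G[S]$, every sub-path of $P$ is also shortest in $G[S]$, so for any $0 \leq i < j \leq k$ we have $\dist_{G[S]}(u_i, u_j) = (j-i)(4a+1)$. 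In particular, $\dist_{G[S]}(s, u_i) = i(4a+1) \leq (4a+1)k$.

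Next, each $u_i$ lies in $S \subseteq W_0$, so by the definition of $W_0$ there exists some $v_i \in V_D'$ with $\dist_G(u_i, v_i) \leq a$. The key step is to show that $v_i$ actually lies in $S$ and, moreover, that the distance $a$ is realized inside $G[S]$. Fix a shortest $u_i$--$v_i$ path $Q_i$ of length at most $a$ in $G$, and let $w$ be any vertex on $Q_i$. Then $\dist_G(w, v_i) \leq a$, so $w \in W_0$ (because $v_i \in V_D'$ witnesses it). Thus all vertices on $Q_i$ belong to $W_0$; since $u_i \in S$ and $S$ is a connected component of the subgraph of $G$ induced by $W_0$, the whole path $Q_i$ lies in $S$. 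Consequently $v_i \in S$ and $\dist_{G[S]}(u_i, v_i) \leq a$.

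Combining the two preceding steps, for any $0 \leq i < j \leq k$ we obtain
\[
\dist_{G[S]}(v_i, v_j) \;\geq\; \dist_{G[S]}(u_i, u_j) - \dist_{G[S]}(u_i, v_i) - \dist_{G[S]}(u_j, v_j) \;\geq\; (j-i)(4a+1) - 2a \;\geq\; 2a+1,
\]
which in particular yields $\dist(v_i,v_j) > 2a$ as required. Moreover, $\dist_{G[S]}(s, v_i) \leq \dist_{G[S]}(s, u_i) + a \leq i(4a+1) + a$, so each $v_i$ is reachable from $s$ within $G[S]$ in at most $(4a+1)k$ steps, placing it in $N_{G[S]}^{(4a+1)k}(s) \cap V_D'$.

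The main arithmetic subtlety—and the step I would pay closest attention to—is the upper bound on $\dist_{G[S]}(s, v_k)$, since a naive placement puts $v_k$ at distance up to $(4a+1)k + a$ from $s$. The remedy is to choose $u_k = p_{(4a+1)k - a}$ instead of $p_{(4a+1)k}$ (still legal because $(4a+1)k - a \leq L$), which shrinks $\dist_{G[S]}(s, u_k)$ by $a$ and so keeps $\dist_{G[S]}(s, v_k) \leq (4a+1)k$; the remaining pairwise gap $\dist_{G[S]}(u_{k-1}, u_k) = 3a+1$ still leaves room to verify $\dist_{G[S]}(v_{k-1}, v_k) > 2a$ using the slightly sharper observation that the milestones $u_{k-1}, u_k$ lie on a common shortest path and each $v_\ell$ has $\dist_{G[S]}(u_\ell, v_\ell) \leq a$. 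Once this off-by-$a$ bookkeeping is settled, collecting the bounds from the previous paragraph completes the proof.
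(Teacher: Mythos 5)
Your main construction is exactly the paper's: place milestones $u_i=p_{i(4a+1)}$ on a shortest $s$--$t$ path, snap each to a center $v_i\in V_D'$ at distance at most $a$, observe that the connecting paths stay in $W_0$ (hence in $S$), and use the reverse triangle inequality to get $\dist_{G[S]}(v_i,v_j)\ge (4a+1)-2a=2a+1$. That part is correct, and your justification that $v_i\in S$ is actually more careful than the paper's one-line assertion. Two remarks on it: (i) the lemma's conclusion $\dist(v_i,v_j)>2a$ refers to distance in $G$, and a lower bound on $\dist_{G[S]}$ does not transfer to $\dist_G$ for free; however, the same ``snapping'' argument closes this (any $G$-path of length $\le 2a$ between $v_i,v_j\in S\cap V_D'$ has every vertex within distance $a$ of $V_D'$, hence lies in $W_0$ and therefore in $S$), so this is only a missing line, one the paper omits as well. (ii) You are right that the paper's own choice only yields $\dist_{G[S]}(s,v_k)\le (4a+1)k+a$, so the stated radius $(4a+1)k$ is off by an additive $a$ for $v_k$; the paper does not address this either.

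The genuine problem is your remedy for (ii). Moving the last milestone to $u_k=p_{(4a+1)k-a}$ shrinks the gap $\dist_{G[S]}(u_{k-1},u_k)$ to $3a+1$, and then the triangle inequality only gives $\dist(v_{k-1},v_k)\ge 3a+1-2a=a+1$, which is not $>2a$ once $a\ge 1$ (and here $a=5\log n/\beta\gg 1$). The ``sharper observation'' you invoke does not exist: even with $u_{k-1},u_k$ on a common geodesic, $v_{k-1}$ may sit $a$ steps ahead of $u_{k-1}$ toward $u_k$ and $v_k$ may sit $a$ steps behind $u_k$ toward $u_{k-1}$, realizing distance exactly $a+1$. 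So as written, your modified construction loses the pairwise-separation property for the pair $(v_{k-1},v_k)$, and there is no way to fit $k+1$ points pairwise $\ge 4a+1$ apart within distance $(4a+1)k-a$ of $s$ along the path. The correct resolution is not to shift the milestone but to accept the radius $(4a+1)k+a$ (or restate the lemma with that radius): the only place the radius matters downstream is Lemma~\ref{XXX-lem-basecase}, whose final display already works with the ball $N^{(4a+1)\cdot 2b+a}(s)$, so the extra $+a$ is absorbed there.
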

\begin{proof}
We fix an $s$--$t$ shortest path $P$ in $G[S]$.
For each $i \in [0, k]$, select $u_i$ as the vertex in $P$ whose distance to $s$ in $G[S]$ is exactly $(4a+1)i$. Note that $u_0 = s$. Select $v_i$ as any vertex in $V_D'$ such that $u_i \in N^a(v_i)$. By our choice of $W_0$, such a vertex $v_i$ exists, and we must have $v_i \in S$. Moreover, \[(4a+1)i - a \leq \dist_{G[S]}(s, v_i) \leq (4a+1)i + a,\]
and it implies that $N^a(v_i)$ and $N^a(v_j)$ are disjoint for any $0 \leq i < j \leq k$.
\end{proof}

\begin{lemma} \label{XXX-lem-auxx2}
Let $S_1, \ldots S_k$ be $k$ disjoint vertex sets such that for each $u \in V_D'$, either $N^{a}(u) \subseteq S_i$ for some $1 \leq i \leq k$ or $N^{a}(u) \cap (S_1 \cup \cdots \cup S_k) = \emptyset$.
Define $S = \left\{ v \in V \ \middle| \ \dist\left(v, \bigcup_{i=1}^k S_i\right) \leq a \right\}$.
If $S$ is connected,
then the following is true.
\begin{enumerate}
    \item $N_S = \sum_{i=1}^k N_{S_i}$.
    \item $D_S \leq -1 + \sum_{i=1}^k (D_{S_i} + 2a + 1)$.
\end{enumerate}
\end{lemma}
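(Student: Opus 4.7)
The plan is to prove the two parts separately: a packing argument for (1), and a routing-via-thickenings argument for (2).

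For part~(1), I will first verify that every vertex $u$ of a maximum $>2a$-separated packing $S^\ast \subseteq S \cap V_D'$ must belong to some $S_i$. Since $u \in V_D'$, the hypothesis gives either $N^a(u) \subseteq S_i$ (forcing $u \in S_i$) or $N^a(u) \cap \bigcup_j S_j = \emptyset$, and the latter would contradict $u \in S$. Intersecting $S^\ast$ with each $S_i$ then yields $|S^\ast| \le \sum_i N_{S_i}$, proving $N_S \le \sum_i N_{S_i}$. For the reverse inequality, I take maximum packings $S_i^\ast \subseteq S_i \cap V_D'$ and check that their union remains $>2a$-separated: a cross pair $u \in S_i^\ast,\, v \in S_j^\ast$ with $i \ne j$ satisfies $N^a(u) \subseteq S_i$ and $N^a(v) \subseteq S_j$, and these are disjoint by hypothesis, forcing $\dist(u,v) > 2a$.

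For part~(2), I define the thickenings $T_i = \{v : \dist(v, S_i) \le a\}$, so $S = \bigcup_i T_i$. A triangle-inequality argument, combined with the fact that a length-$\le a$ path from any $x \in T_i$ to a closest vertex in $S_i$ stays inside $T_i$, gives $D_{T_i} \le D_{S_i} + 2a$. Next I introduce the auxiliary graph $H$ on $\{1,\ldots,k\}$ where $\{i,j\}$ is an edge whenever either $T_i \cap T_j \ne \emptyset$ or some edge of $E$ joins $T_i$ and $T_j$. Connectedness of $G[S]$ rules out any bipartition of $\{1,\ldots,k\}$ with no $H$-edges crossing (such a partition would witness a disconnection of $S$), so $H$ is connected; fix a spanning tree of $H$, with at most $k-1$ edges. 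For any $x, y \in S$ lying in $T_{l_0}$ and $T_{l_m}$ respectively, routing along the corresponding tree-path lets one traverse within each visited blob $T_{l_s}$ at cost at most $D_{T_{l_s}}$ and pay at most one extra edge per transition, yielding
\[
D_S \;\le\; \sum_{s=0}^{m} D_{T_{l_s}} + m \;\le\; \sum_{i=1}^{k} D_{T_i} + (k-1) \;\le\; \sum_{i=1}^{k}(D_{S_i} + 2a) + (k-1),
\]
which equals $-1 + \sum_i (D_{S_i} + 2a + 1)$, as claimed.

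The main delicate point is verifying that the overlap-or-adjacency graph $H$ is connected and that the concatenated routing walk truly lies in $G[S]$; both reduce to the observations that each $T_i \subseteq S$ and that any transition edge used has both endpoints in $S$. A minor bookkeeping step is choosing the interface vertices at each transition so that consecutive sub-walks can be concatenated and each blob is visited only once, ensuring $\sum_s D_{T_{l_s}} \le \sum_i D_{T_i}$.
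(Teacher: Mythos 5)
Your proof of part (1) is correct and is essentially the paper's argument: the upper bound comes from observing that every vertex of a maximum packing in $S \cap V_D'$ must already lie in some $S_i$ (the paper phrases this as $S \setminus \bigcup_i S_i \subseteq V_S'$, plus pigeonhole), and the lower bound from checking that cross-pairs in $\bigcup_i S_i^\ast$ stay $>2a$-separated because $N^a(u) \subseteq S_i$ and $N^a(v) \subseteq S_j$ are disjoint. For part (2) you take a genuinely different, but equally valid, route. The paper fixes a shortest $s$--$t$ path $P$ realizing $D_S$ and argues by a shortcutting argument that $P$ can contain at most $D_{S_i}+2a+1$ vertices from each thickening $T_i=\{v : \dist(v,S_i)\le a\}$; since $S=\bigcup_i T_i$, summing the vertex counts gives the bound directly. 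You instead bound $D_{T_i} \le D_{S_i}+2a$, form the overlap-or-adjacency graph $H$ on the blobs, use connectedness of $G[S]$ to get a spanning tree, and route any $x$--$y$ pair along the tree path, paying $D_{T_{l_s}}$ per visited blob plus one edge per transition; since a tree path visits each index at most once and has at most $k-1$ edges, this yields the identical bound $-1+\sum_i(D_{S_i}+2a+1)$. The paper's counting argument is shorter and needs no auxiliary graph; your routing argument is more constructive and makes explicit why connectedness of $S$ is needed (it is exactly what makes $H$ connected), at the cost of the extra bookkeeping about interface vertices, which you handle correctly. Both arguments implicitly assume each $G[S_i]$ is connected so that $D_{S_i}$ is finite; otherwise the claimed bound is vacuous, so this is harmless.
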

\begin{proof}
Note that each set $S_i$ satisfies Condition~\ref{XXX-H1} of $\mathcal{H}$.
We first show that $N_S \geq \sum_{i=1}^k N_{S_i}$.
By definition of $N_{S_i}$, for each subset $S_i$, there exists a subset $S_{i}^\ast \subseteq S_i \cap V_D'$ such that $\dist (u,v) > 2a$ for each pair of distinct vertices $u$ and $v$ in $S_i^\ast$. We select $S^\ast = \bigcup_{i=1}^k S_i^\ast$. Then we also have  $\dist (u,v) > 2a$ for each pair of distinct vertices $u$ and $v$ in $S^\ast$, and hence $N_S \geq |S^\ast| = \sum_{i=1}^k N_{S_i}$.
Here we use the fact that for each $u \in V_D'$, either $N^{a}(u) \subseteq S_i$ for some $1 \leq i \leq k$ or $N^{a}(u) \cap (S_1 \cup \cdots \cup S_k) = \emptyset$. Suppose $\dist (u,v) \leq 2a$ for some $u$ and $v$ in $S^\ast$.
Then $N^a(u) \cup N^a(v)$ is connected, and so both $u$ and $v$ are within the same set $S_i^\ast \subseteq S_i$ for some $i$,
contradicting the choice of $S_i^\ast$.

Next, we show that $N_S \leq \sum_{i=1}^k N_{S_i}$.
Since $S_1, \ldots S_k$ all satisfy Condition~\ref{XXX-H1} of $\mathcal{H}$, the set $S \setminus \bigcup_{i=1}^k S_k$ contains only vertices in $V_S'$. Suppose $N_S > \sum_{i=1}^k N_{S_i}$. 
Then there exists a  set $S^\ast \subseteq S \cap V_D'$ of size $1 + \sum_{i=1}^k N_{S_i}$ such that $\dist (u,v) > 2a$ for each pair of distinct vertices $u$ and $v$ in $S^\ast$. By the pigeonhole principle, there is an index $i$ such that $|S^\ast \cap S_i| > N_{S_i}$, contradicting $N_{S_i} \geq S^\ast \cap S_i$. Therefore, we must have $N_S \leq \sum_{i=1}^k N_{S_i}$.

For the rest of the proof, we show that $D_S \leq -1 + \sum_{i=1}^k (D_{S_i} + 2a + 1)$.
Let $s \in S$ and $t \in S$ be chosen such that $\dist_{G[S]}(s,t) = D_S$. 
Let $P$ be an $s$--$t$ shortest path in $G[S]$.
Observe that $P$ can include at most $D_{S_i} + 2a + 1$ vertices from the set $\{ v \in V \ | \ \dist(v, S_i) \leq a \}$, since otherwise we can shortcut the path $P$ to obtain  a shorter  $s$--$t$  path in $G[S]$. Therefore, the number of vertices in $P$ is at most $\sum_{i=1}^k (D_{S_i} + 2a + 1)$, which implies that $D_S \leq -1 + \sum_{i=1}^k (D_{S_i} + 2a + 1)$.
\end{proof}

\begin{lemma}[Base case] \label{XXX-lem-basecase}
Let $S$ be a connected component induced by $W_0$. Then $S$ satisfies  $\mathcal{H}$.
\end{lemma}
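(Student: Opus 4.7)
The verification of $\mathcal{H}$ splits into its three conditions, and the third is the only substantive one.

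Condition~1 I would dispatch directly: for any $u \in V_D'$, every vertex of $N^a(u)$ has $G$-distance at most $a$ from $V_D'$ and therefore sits in $W_0$; since $N^a(u)$ is connected it lies in a single component of $W_0$, so either $N^a(u) \subseteq S$ or $N^a(u) \cap S = \emptyset$.  For Condition~2 I would invoke the contrapositive of Lemma~\ref{XXX-lem-auxx1} at $k = N_S$ to obtain $D_S \leq (4a+1)N_S - 1$, and then use the elementary inequality $(4a+1)N_S - 1 \leq 10aN_S - (4a+1)$, equivalent to $(6a-1)N_S \geq 4a$, which holds for all $a \geq 1$ and $N_S \geq 1$.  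The sanity condition $N_S \geq 1$ holds because any $u \in S$ has a witness $v \in V_D'$ at $G$-distance $\leq a$, and the short path connecting them stays in $W_0$, forcing $v \in S \cap V_D'$.

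The heart of the lemma is Condition~3, $N_S \leq 2b$.  Set $\mu(v) \bydef |E(N^a(v))|$, the local edge-density that controls membership in $V_D'$.  The plan is to argue by contradiction: assuming $N_S \geq 2b+1$, I would exhibit a $2a$-separated subset $T \subseteq V_D' \cap S$ of size exactly $2b+1$ whose $G$-diameter is only $O(ab)$, and then apply the density property at the $\mu$-minimum element $v_1$ of $T$.  The disjoint balls $\{N^a(v) : v \in T\}$ would then all sit inside $N^{100ab}(v_1)$, giving
\[
2b \cdot \mu(v_1) \;\geq\; |E(N^{100ab}(v_1))| \;\geq\; \sum_{v \in T} \mu(v) \;\geq\; |T| \cdot \mu(v_1),
\]
which forces $|T| \leq 2b$ and contradicts $|T| = 2b+1$. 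The degenerate case $\mu(v_1) = 0$ implies $v_1$ is isolated in $G$, so $S = \{v_1\}$ and $N_S = 1$, still a contradiction.

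The key obstacle is the extraction of $T$: a priori $D_S$ could be so large that $S^\ast$ itself sprawls far outside any $100ab$-ball.  My plan is to split on $D_S$. If $D_S < (4a+1) \cdot 2b$, I can just pick any $2b+1$ vertices of $S^\ast$, which all sit in a single $G[S]$-ball of radius $D_S$.  If $D_S \geq (4a+1) \cdot 2b$, I would invoke Lemma~\ref{XXX-lem-auxx1} with $k = 2b$, which \emph{manufactures} $2b+1$ vertices of $V_D'$ with pairwise $G$-distance $>2a$ inside $N_{G[S]}^{(4a+1)\cdot 2b}(s)$ for some $s \in S$.  Either way the $G$-diameter of $T$ is at most $2(4a+1) \cdot 2b = 16ab + 4b$, and $16ab + 4b + a \leq 100ab$ for $a, b \geq 1$, so every $v \in T$ satisfies $N^a(v) \subseteq N^{100ab}(v_1)$ and the density argument closes.
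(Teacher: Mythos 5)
Your proof is correct and follows essentially the same route as the paper's: Condition~1 from the definition of $W_0$, Condition~2 from Lemma~\ref{XXX-lem-auxx1}, and Condition~3 by packing $2b+1$ pairwise-disjoint $a$-balls around $2a$-separated $V_D'$-vertices into a single $N^{100ab}$-ball centered at the density-minimizing one, contradicting the definition of $V_D'$. If anything, your write-up is slightly more careful than the paper's, which glosses over the degenerate case $|E(N^a(s))|=0$ and draws its separated vertices from the $G$-ball $N^{(4a+1)\cdot 2b}(s)$ (which need not lie inside $S$, where the minimization is performed), whereas you explicitly keep $T\subseteq S\cap V_D'$ and handle the zero-density case.
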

\begin{proof}
Condition~\ref{XXX-H1} of $\mathcal{H}$ is met for $S$ by  definition of $W_0 = \{ u \in V \ | \  \dist(u, V_D') \leq a\}$. Condition~\ref{XXX-H2} of $\mathcal{H}$ is  met   for $S$ in view of Lemma~\ref{XXX-lem-auxx1}. More specifically,  Lemma~\ref{XXX-lem-auxx1} implies that $N_S \geq \floor{D_S/(4a+1)}+1 > D_S/(4a+1)$, and so $D_S < (4a+1) N_S \leq 10a \cdot N_S - (4a+1)$. Note that  $N_S \geq 1$.

For the rest of the proof, we consider Condition~\ref{XXX-H3} of $\mathcal{H}$.
We assume that Condition~\ref{XXX-H3} of $\mathcal{H}$ is not met for $S$ (i.e., $N_S > 2b$), and we will derive a contradiction.

For any $s \in S$, set $S' = N^{(4a+1) \cdot 2b}(s)$. Then we claim that $N_{S'} > 2b$. 
If $S \subseteq S'$, this statement is trivially true, as we already assume $N_S > 2b$. Otherwise, there exists a vertex $t \in S$ with $\dist_{G[S]}(s,t) > (4a+1) \cdot 2b$, and Lemma~\ref{XXX-lem-auxx1} guarantees that there exist $2b+1$ vertices $v_0, \ldots, v_{k}$ in \[N_{G[S]}^{(4a+1) \cdot 2b}(s) \cap V_D' \subseteq N^{(4a+1) \cdot 2b}(s)  \cap V_D'\] such that $\dist_G(v_i, v_j) > 2a$ for each $0 \leq i < j \leq 2b$, and so $N_{S'} > 2b$.

We choose $s$ to be a vertex in $S \cap V_{D}'$ minimizing $|E(N^a(v))|$. By the above claim, we have 
\[
\left|E\left(N^{100ab}(s)\right)\right| \geq 
\left|E\left(N^{(4a+1) \cdot 2b + a}(s)\right)\right| \geq 
N_{S'} \cdot \left|E(N^{a}(s))\right| \geq
(2b+1) \left|E(N^{a}(s))\right|,
\]
contradicting the definition of the set $V_D'$.
\end{proof}

\begin{lemma}[Inductive step] \label{XXX-lem-inductivestep}
If each connected component of $W_{i-1}$ satisfies $\mathcal{H}$, then each connected component of $W_{i}$ also satisfies $\mathcal{H}$.
\end{lemma}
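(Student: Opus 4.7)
Let $T$ be a connected component of $W_i$, and let $S_1,\ldots,S_k$ be the connected components of $W_{i-1}$ contained in $T$. Since $W_{i-1}\subseteq W_i$ and each $S_j$ is connected, we have $T\cap W_{i-1}=\bigsqcup_{j=1}^k S_j$ and $T=\bigcup_{j=1}^k A(S_j)$, where $A(S_j)=\{u:\dist(u,S_j)\le a\}$ if $S_j$ was expanded in iteration $i$ and $A(S_j)=S_j$ otherwise. My plan is to dispose of the case $k=1$ trivially, reduce the case $k\ge 2$ to Lemma~\ref{XXX-lem-auxx2} to verify Conditions~1 and~2, and then mimic the contradiction in the base case (Lemma~\ref{XXX-lem-basecase}) against the defining property of $V_D'$ to obtain Condition~3.

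First I would argue that if $k=1$ and $S_1$ was expanded, then $S_1$ has a close neighbor $S'\neq S_1$ in $W_{i-1}$ with $\dist(S_1,S')\le a$, so $S'\subseteq A(S_1)\subseteq T$, contradicting $k=1$; hence $k=1$ forces $S_1$ unexpanded and $T=S_1$, which inherits $\mathcal{H}$ from the IH. Henceforth assume $k\ge 2$. After verifying $T=\{u:\dist(u,\bigcup_j S_j)\le a\}$ (handling any unexpanded $S_j$ merged into $T$ by a short case analysis of edges across successive expansion shells), I would invoke Lemma~\ref{XXX-lem-auxx2}; its $V_D'$-hypothesis follows directly from Condition~1 of $\mathcal{H}$ for each $S_j$ by IH. This yields Condition~1 for $T$ together with $N_T=\sum_j N_{S_j}$ and $D_T\le -1+\sum_j(D_{S_j}+2a+1)$. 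Substituting the IH bound $D_{S_j}\le 10a\,N_{S_j}-(4a+1)$,
\[
D_T \;\le\; -1+\sum_j\bigl(10a\,N_{S_j}-2a\bigr) \;=\; 10a\,N_T-2ak-1 \;\le\; 10a\,N_T-(4a+1),
\]
where the last step uses $k\ge 2$, which is Condition~2.

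For Condition~3, the plan is to argue by contradiction: suppose $N_T>2b$ and pick $s\in T\cap V_D'$ minimizing $|E(N^a(v))|$ over $v\in T\cap V_D'$. The target is to produce $2b+1$ vertices $v_0,\ldots,v_{2b}\in N^{(4a+1)\cdot 2b}(s)\cap V_D'$ pairwise at distance $>2a$ in $G$. Once available, the $a$-balls $N^a(v_j)$ are pairwise disjoint, contained in $N^{(4a+1)\cdot 2b+a}(s)\subseteq N^{100ab}(s)$, and each contains at least $|E(N^a(s))|$ edges by minimality of $s$, so
\[
|E(N^{100ab}(s))| \;\ge\; (2b+1)\,|E(N^a(s))|,
\]
contradicting the defining property of $V_D'$.

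The hard part will be producing the $2b+1$ packed $V_D'$-points, i.e., generalizing Lemma~\ref{XXX-lem-auxx1} from $W_0$-components to components of $W_i$. In the base case every vertex of a $W_0$-component lies within $a$ of some $V_D'$-point, so a shortest path of length $(4a+1)\cdot 2b$ visits $2b+1$ well-spaced anchors; this argument breaks for $T$ because vertices in $T\setminus W_0$ may lie up to $\Theta(ia)$ from $V_D'$. My plan to circumvent this is to exploit the structure $T=\{u:\dist(u,\bigcup_j S_j)\le a\}$: any shortest $s$--$t$ path in $G[T]$ spends at most $2a$ consecutive steps outside $\bigcup_j S_j$, so a sufficiently long path either stays entirely within some $S_j$ (where IH Condition~1 reduces the situation to the base-case lemma applied inside $S_j$) or visits several distinct $S_j$'s, each anchoring a $V_D'$-point in its interior; a standard thinning to pairwise distance $>2a$ then delivers the required packed set within $N^{(4a+1)\cdot 2b}(s)$.
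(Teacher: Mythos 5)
Your handling of Conditions~1 and~2 matches the paper: reduce to $k\ge 2$, invoke Lemma~\ref{XXX-lem-auxx2} to get $N_T=\sum_j N_{S_j}$ and the diameter recurrence, and use $k\ge 2$ to absorb the $+(2a+1)$ terms. The final counting step for Condition~3 (disjoint $a$-balls around $2b+1$ packed $V_D'$-points, each contributing at least $|E(N^a(s))|$ edges to $E(N^{100ab}(s))$) is also the paper's. The gap is in how you propose to \emph{obtain} those $2b+1$ packed points inside a radius-$O(ab)$ ball around $s$.

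Your plan is to generalize the shortest-path harvesting of Lemma~\ref{XXX-lem-auxx1} to components of $W_{i-1}$. That lemma relies on every vertex of a $W_0$-component lying within distance $a$ of $V_D'$, so that a geodesic yields an anchor every $4a+1$ steps. For a component $S_j$ of $W_{i-1}$ with $i\ge 2$, vertices can lie at distance up to $\Theta(ia)=\Theta(ab)$ from $V_D'$, and Condition~1 of $\mathcal{H}$ (the only thing the IH gives you about $V_D'$ inside $S_j$) says nothing about the density of $V_D'$-anchors along a path. With anchor spacing $\Theta(ab)$ rather than $\Theta(a)$, a ball of radius $(4a+1)\cdot 2b=O(ab)$ around $s$ can only be guaranteed to contain $O(1)$ well-separated anchors, not $2b+1$; so the path argument cannot close the contradiction. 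The paper avoids paths entirely: it observes that $N_T>2b$ already \emph{hands you} a packed set by definition of $N_{(\cdot)}$, and the only real issue is localization. It orders $S_1,\ldots,S_k$ so that successive partial unions $S'=\{v:\dist(v,\bigcup_{i\le j}S_i)\le a\}$ are connected, and since each $N_{S_i}\le 2b$ by the IH, some prefix satisfies $2b<N_{S'}\le 4b$. Applying Lemma~\ref{XXX-lem-auxx2} and the already-proved Condition~2 bound to that prefix gives $D_{S'}\le 10a\,N_{S'}-(4a+1)<40ab$, so the $N_{S'}>2b$ packed points of $S'$, together with their disjoint $a$-balls, all lie in $N^{100ab}(s)$ for the minimizer $s\in S'\cap V_D'$, and the contradiction with the definition of $V_D'$ follows. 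You should replace your path-harvesting step with this prefix-selection argument.
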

\begin{proof}
Let $S$ be a connected component induced by $W_{i}$.
If $S$ itself is also a connected component  induced by $W_{i-1}$, then the lemma trivially holds.
Otherwise, there exist $k \geq 2$ connected components $S_1, \ldots S_k$  of  $W_{i-1}$ such that $S = \left\{ v \in V \ \middle| \ \dist\left(v, \bigcup_{i=1}^k S_i\right) \leq a\right\}$.
Note that Condition~\ref{XXX-H1} of $\mathcal{H}$ holds for $S$ trivially.

By Lemma~\ref{XXX-lem-auxx2}, we infer that Condition~\ref{XXX-H2} of $\mathcal{H}$  holds for $S$.
More specifically, 
\begin{align*}
D_S &\leq -1 + \sum_{i=1}^k (D_{S_i} + 2a + 1)\\
&\leq -1 + \sum_{i=1}^k ((10a\cdot N_{S_i} - (4a+1) ) + 2a + 1)\\
&\leq -1 - k\cdot 2a + \sum_{i=1}^k 10a\cdot N_{S_i}\\
&\leq -(4a+1) + \sum_{i=1}^k 10a\cdot N_{S_i}\\
& = 10a\cdot N_S - (4a+1).
\end{align*}

For the rest of the proof, we consider Condition~\ref{XXX-H3} of $\mathcal{H}$. Note that Lemma~\ref{XXX-lem-auxx2} already shows that $N_S = \sum_{i=1}^k N_{S_i}$. Our plan is to show that when $N_S > 2b$, we obtain a contradiction to the definition of $V_D'$.

Now suppose $N_S > 2b$. We assume that the sets $S_1, \ldots, S_k$ are ordered in such a way that for each $1 \leq j \leq k$, the set $S' = \left\{ v \in V \ \middle| \ \dist\left(v, \bigcup_{i=1}^j S_i\right) \leq a \right\}$ is connected. Since $N_{S_i} \leq 2b$ for each $1 \leq i \leq k$, there must be an index $1 < j \leq k$ such that $2b < {N_{S'}} = \sum_{i=1}^j N_{S_i} \leq 4b$. We fix $j$ to be any such index.
By Lemma~\ref{XXX-lem-auxx2}, we infer that $D_{S'} \leq 10a\cdot N_{S'}  - (4a+1) < 40ab$ along the line of the above calculation.
We choose $s$ to be a vertex in $S' \cap V_{D}'$ minimizing $|E(N^a(s))|$. Since $N_{S'} > 2b$ and  $D_{S'} < 40ab$,  we have 
\[
\left|E\left(N^{100ab}(s)\right)\right| \geq 
\left|E\left(N^{40ab}(s)\right)\right| \geq 
N_{S'} \cdot \left|E(N^{a}(s))\right| \geq
(2b+1) \left|E(N^{a}(s))\right|,
\]
contradicting the definition of the set $V_D'$.
\end{proof}

 Lemma~\ref{XXX-lem-basecase} and  Lemma~\ref{XXX-lem-inductivestep} together show that  the  invariant $\mathcal{H}$ is satisfied for each connected component $S$ of $W_i$   for all $i$.  This immediately implies that the output set $V_D$ meets all the required properties.
   That is, each connected component of $V_D$ has diameter $O(ab)$; also, for any two vertices $u$ and $v$ residing in different components, we have $\dist(u, v) > a$.
   The set $V_S$ also satisfies the required property 
 $\left|E(N^a(v))\right| \leq |E| / b$, since $V_S \subseteq V_S'$, and $v \in V_S'$ implies that $\left|E(N^a(v))\right| \leq \left|E(N^{100ab}(v))\right| / b \leq |E| / b$.

\paragraph{Round Complexity.}
Remember that the decomposition   $V = V_D' \cup V_S'$ can be constructed in $O(ab\log^2 n)$ rounds. 
We next show that the construction of the  decomposition $V = V_D \cup V_S$ from a given decomposition $V = V_D' \cup V_S'$ can be done in $O(ab^2)$ rounds.
The  invariant $\mathcal{H}$ implies that the procedure of constructing $V = V_D \cup V_S$ must terminate by the end of the $(2b-1)$th iteration. Therefore, we only need to show that each iteration can be implemented in $O(ab)$ rounds.

An $O(ab)$-round implementation for one iteration is as follows. First of all, each connected component $S$ of $W_{i-1}$ in $O(ab)$ rounds generates a unique identifier that is agreed by all its members. Then, using the identifiers, in $O(a)$ rounds, for each connected component $S$ of $W_{i-1}$, we can let each $v \in S$ learn whether there exists some other vertex $u \in W_{i-1} \cap {N^a(v)} \setminus S$.
If so, then in $O(ab)$ rounds we can let all vertices within distance $a$ to $S$ be notified that they are included in  $W_{i}$.

\begin{lemma}[Implementation of \lowdiamdecomp]\label{XXX-lem-aux-3}
Algorithm \lowdiamdecomp$(\beta)$ can be implemented to run in 
 $O(ab^2 + ab \log^2 n)=  O\left(\frac{\log^3 n}{\beta^3} + \frac{\log^4 n}{\beta^2}\right)$ rounds.
\end{lemma}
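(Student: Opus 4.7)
The plan is to bound each of the three steps of \lowdiamdecomp$(\beta)$ separately and add them. Step~1 is entirely local. Step~3 is an invocation of \mpx$(\beta)$, which runs in $O(\log n / \beta) = O(a)$ rounds, followed by a purely local decision at each vertex about which of its own incident edges to cut (each vertex already knows whether it lies in $V_S$ and whether each neighbor is in the same cluster). Therefore the round complexity is dominated by Step~2, which produces $V = V_D \cup V_S$ and splits naturally into (i) computing the auxiliary partition $V = V_D' \cup V_S'$ and (ii) iteratively expanding $V_D'$ into $V_D$.

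For (i), I would apply Lemma~\ref{XXX-lem-aux-1} twice, once with $d = a$ and once with $d = 100ab$, each with a small constant relative error $f$, so that every vertex simultaneously learns constant-factor estimates of $|E(N^a(v))|$ and $|E(N^{100ab}(v))|$. Each vertex then locally classifies itself into $V_D'$ or $V_S'$ by comparing the ratio $|E(N^a(v))| / |E(N^{100ab}(v))|$ to thresholds slightly offset from $1/(2b)$ and $1/b$, so that the constant-factor slack in the two estimates is absorbed. By Lemma~\ref{XXX-lem-aux-1}, the total cost of this substep is $O(ab \cdot \log^2 n)$ rounds.

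For (ii), the crucial observation, already encoded in the invariant $\mathcal{H}$ via Lemmas~\ref{XXX-lem-basecase} and \ref{XXX-lem-inductivestep}, is that the chain $W_0 \subseteq W_1 \subseteq \cdots$ must stabilize within $O(b)$ iterations: Condition~\ref{XXX-H3} forces $N_S \leq 2b$ for every component, and each merge that creates a strictly larger component strictly increases some $N_{S}$ by at least one. It then suffices to implement a single iteration in $O(ab)$ rounds, which I would do as follows. (a)~Assign each connected component $S$ of $W_{i-1}$ a unique identifier, e.g.\ the smallest ID in $S$, and propagate it to all members of $S$; since Condition~\ref{XXX-H2} bounds the diameter of $G[S]$ by $O(ab)$, this costs $O(ab)$ rounds. (b)~Each vertex $v \in W_{i-1}$ performs an $a$-hop broadcast of its component identifier to test whether some vertex in $N^{a}(v) \cap W_{i-1}$ carries a different identifier, costing $O(a)$ rounds. (c)~Each component that has such a ``neighbor'' component then notifies every vertex within distance $a$ of it to join $W_i$, which takes a further $O(ab)$ rounds since the component together with the ball expansion has diameter $O(ab)$ by $\mathcal{H}$. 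Summing over $O(b)$ iterations yields $O(ab^2)$ rounds for substep (ii).

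The delicate parts of the argument are not the local implementations, which are standard BFS-style propagations, but rather the two structural bounds inherited from $\mathcal{H}$: the $O(b)$ bound on the number of iterations and the $O(ab)$ bound on the diameter of each intermediate component. Both are already established in Lemmas~\ref{XXX-lem-basecase} and~\ref{XXX-lem-inductivestep}. Combining (i), (ii), and the trivial contributions from Steps~1 and~3 gives the claimed total round complexity $O(ab^2 + ab \log^2 n) = O\!\left(\frac{\log^3 n}{\beta^3} + \frac{\log^4 n}{\beta^2}\right)$.
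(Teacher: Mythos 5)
Your proposal is correct and follows essentially the same route as the paper: the paper's proof of this lemma simply defers to the preceding discussion, which consists of exactly the three ingredients you give — the $O(ab\log^2 n)$-round construction of $V_D'\cup V_S'$ via Lemma~\ref{XXX-lem-aux-1}, the $O(b)$-iteration, $O(ab)$-rounds-per-iteration expansion of $W_0$ into $V_D$ justified by the invariant $\mathcal{H}$, and the $O(a)$-round run of \mpx. Your elaborations (applying Lemma~\ref{XXX-lem-aux-1} with $d=a$ and $d=100ab$ and exploiting the factor-2 gap between the $1/(2b)$ and $1/b$ thresholds, and the merge-counting argument for termination) are consistent with what the paper intends.
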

\begin{proof}
It follows from the above discussion.
\end{proof}

\end{document}